\newlength{\nodesize}
\colorlet{chance_color}{black}
\colorlet{pl0_color}{chance_color}
\colorlet{chance_text}{white}
\colorlet{pl1_color}{magenta!50}
\colorlet{pl2_color}{green!50!lime!60}
\tikzset{
	basenode/.style = {draw,
		inner sep = 0.1em,
		minimum size = \nodesize
	},
	playernode/.style={basenode,
		shape = regular polygon,
		regular polygon sides = 3
	},
	pl1/.style={playernode, fill=pl1_color},
	pl2/.style={playernode, fill=pl2_color, shape border rotate=180},
	chance/.style = {basenode,
		fill=pl0_color, text=chance_text,
		circle,
		minimum size=0.7*\nodesize,
	},
	terminal/.style = {basenode,
		draw=none,
		outer sep=0,
		minimum size = 0.6\nodesize
	}
}
\newtheorem{remark}[theorem]{\protect\remarkname}
\newtheorem*{lemma*}{\protect\lemmaname}
\newtheorem*{definition*}{\protect\definitionname}
\providecommand{\claimname}{Claim}
\providecommand{\definitionname}{Definition}
\providecommand{\lemmaname}{Lemma}
\providecommand{\notationname}{Notation}
\providecommand{\remarkname}{Remark}
\providecommand{\problemname}{Problem}
\newcommand{\R}{\mathbb{R}}
\newcommand{\mc}{\mathcal}
\newcommand{\expl}{\textnormal{expl}} 
\newcommand{\opponent}{{\textnormal{opp}_i}} 
\newcommand{\aug}{\mc I^\textnormal{aug}} 
\newcommand{\dbbar}[1]{\bar{\bar{#1}}}
\newcounter{vkNoteCounter}
\newcommand{\vknote}[1]{\textnormal{{\scriptsize \color{blue} $\blacktriangle$ \refstepcounter{vkNoteCounter}\textsf{[VK]$_{\arabic{vkNoteCounter}}$: {#1}}}}}
\renewcommand{\vknote}[1]{}
\newcounter{msNoteCounter}
\newcommand{\msnote}[1]{\textnormal{{\scriptsize \color{green} $\blacksquare$ \refstepcounter{msNoteCounter}\textsf{[MS]$_{\arabic{msNoteCounter}}$: {#1}}}}}
\renewcommand{\msnote}[1]{}
\newcounter{vlNoteCounter}
\renewcommand{\msnote}[1]{}
\definecolor{insignwin}{rgb}{0.5,0.5,0.5}
\definecolor{insignlose}{rgb}{0.5,0.5,0.5}
\definecolor{win}{rgb}{0,0,0}
\definecolor{lose}{rgb}{0,0,0}
\DeclareAcronym{CFV}{
short = CFV ,
long  = CounterFactual Value $v$,
class = abbrev
}
\DeclareAcronym{IIG}{
 short = IIG,
 long = Imperfect-Information Game,
 class=abbrev
}
\DeclareAcronym{MCCR}{
 short = MCCR,
 long = {M}onte {C}arlo Continual Resolving,
 class=abbrev
}
\DeclareAcronym{NE}{
 short = NE,
 long = {N}ash Equilibrium,
 class=abbrev
}
\DeclareAcronym{EFG}{
 short = EFG,
 long = Extensive Form Game,
 class=abbrev
}
\DeclareAcronym{OS}{
 short = OS,
 long = Outcome Sampling,
 class=abbrev
}
\DeclareAcronym{OOS}{
 short = OOS,
 long = Online Outcome Sampling,
 class=abbrev
}
\DeclareAcronym{OOS-IST}{
 short = OOS-IST,
 long = OOS with Information Set Targeting,
 class=abbrev
}
\DeclareAcronym{OOS-PST}{
 short = OOS-PST,
 long = OOS with Public State Targeting,
 class=abbrev
}
\DeclareAcronym{IS}{
 short = IS,
 long = Information Set,
 class=abbrev
}
\DeclareAcronym{CFR}{
 short = CFR,
 long = CounterFactual Regret minimization,
 class=abbrev
}
\DeclareAcronym{MCCFR}{
 short = MCCFR,
 long = Monte Carlo CounterFactual Regret minimization,
 class=abbrev
}
\DeclareAcronym{IS-MCTS}{
 short = IS-MCTS,
 long = Information Set Monte Carlo Tree Search,
 class=abbrev
}
\DeclareAcronym{MCTS}{
 short = MCTS,
 long = Monte Carlo Tree Search,
 class=abbrev
}
\DeclareAcronym{UCT}{
 short = UCT,
 long = Upper Confidence bound applied to Trees,
 class=abbrev
}
\DeclareAcronym{RM}{
 short = RM,
 long = Regret Matching,
 class=abbrev
}
\begin{document}

\title[Monte Carlo Continual Resolving in Imperfect Information Games]{Monte Carlo Continual Resolving for Online Strategy Computation in Imperfect Information Games}  





\author{Michal \v Sustr}
\orcid{0000-0002-3154-4727}
\affiliation{}
\email{michal.sustr@aic.fel.cvut.cz}

\author{Vojt\v ech Kova\v r\'ik}
\orcid{0000-0002-7954-9420}
\email{vojta.kovarik@gmail.com}
\affiliation{%
    \vspace{3mm}
    \institution{Artificial Intelligence Center, FEE Czech Technical University}
    \city{Prague}
    \state{Czech Republic}
    \vspace{3mm}
}

\author{Viliam Lis\'y}
\orcid{0000-0002-1647-1507}
\email{viliam.lisy@fel.cvut.cz}
\affiliation{}


\begin{abstract}  

Online game playing algorithms produce high-quality strategies with a fraction of memory and computation required by their offline alternatives.
Continual Resolving (CR) is a recent theoretically sound approach to online game playing that has been used to outperform human professionals in poker.
However, parts of the algorithm were specific to poker, which enjoys many properties not shared by other imperfect information games.
We present a domain-independent formulation of CR applicable to any two-player zero-sum extensive-form games (EFGs).
It works with an abstract resolving algorithm, which can be instantiated by various EFG solvers.
We further describe and implement its Monte Carlo variant (\acs{MCCR}) which uses Monte Carlo Counterfactual Regret Minimization (\acs{MCCFR}) as a resolver.
We prove the correctness of CR and show an $O(T^{-1/2})$-dependence of \acs{MCCR}'s exploitability on the computation time.
Furthermore, we present an empirical comparison of \acs{MCCR} with incremental tree building to Online Outcome Sampling and Information-set \acs{MCTS} on several domains.


\end{abstract}

%

\keywords{counterfactual regret minimization; resolving; imperfect information; Monte Carlo; online play; extensive-form games; Nash equilibrium}  

\maketitle


\section{Introduction}

Strategies for playing games can be pre-computed \emph{offline} for all possible situations, or computed \emph{online} only for the situations that occur in a particular match. The advantage of the offline computation are stronger bounds on the quality of the computed strategy. Therefore, it is preferable if we want to solve a game optimally. On the other hand, online algorithms can produce strong strategies with a~fraction of memory and time requirements of the offline approaches. Online game playing algorithms have outperformed humans in Chess~\cite{DeepBlue}, Go~\cite{Silver16:AlphaGo}, and no-limit Poker~\cite{brown2017safe,DeepStack}.

While online approaches have always been the method of choice for strong play in perfect information games, it is less clear how to apply them in imperfect information games (\acs{IIG}s). To find the optimal strategy for a specific situation in an \ac{IIG}, a player has to reason about the unknown parts of the game state. They depend on the (possibly unobservable) actions of the opponent prior to the situation, which in turn depends on what the optimal decisions are for both players in many other parts of the game. This makes the optimal strategies in distinct parts of the game closely interdependent and makes correct reasoning about the current situation difficult without solving the game as a whole.

Existing online game playing algorithms for imperfect information games either do not provide any guarantees on the quality of the strategy they produce~\cite{long2010understanding,Ciancarini2010,cowling2012information}, or require the existence of a compact heuristic evaluation function and a significant amount of computation to construct it~\cite{DeepStack,brown2018depth}. Moreover, the algorithms that are theoretically sound were developed primarily for Texas hold'em poker, which has a very particular information structure. After the initial cards are dealt, all of the actions and chance outcomes that follow are perfectly observable. Furthermore, since the players' moves alternate, the number of actions taken by each player is always known.
None of this holds in general for games that can be represented as two-player zero-sum extensive-form games (\acs{EFG}s). In a blind chess~\cite{Ciancarini2010}, we may learn we have lost a piece, but not necessarily which of the opponent's pieces took it. In visibility-based pursuit-evasion~\cite{raboin2010strategy}, we may know the opponent remained hidden, but not in which direction she moved. In phantom games~\cite{teytaud2011lemmas}, we may learn it is our turn to play, but not how many illegal moves has the opponent attempted. Because of these complications, the previous theoretically sound algorithms for imperfect-information games are no longer directly applicable.

The sole exception is Online Outcome Sampling (\acs{OOS})~\cite{OOS}. It is theoretically sound, completely domain independent, and it does not use any pre-computed evaluation function. However, it starts all its samples from the beginning of the game, and it has to keep sampling actions that cannot occur in the match anymore. As a result, its memory requirements grow as more and more actions are taken in the match, and the high variance in its importance sampling corrections slows down the convergence.

We revisit the Continual Resolving algorithm (CR) introduced in~\cite{DeepStack} for poker and show how it can be generalized in a way that can handle the complications of general two-player zero-sum \ac{EFG}s.
Based on this generic algorithm, we introduce Monte Carlo Continual Resolving (\ac{MCCR}), which combines \ac{MCCFR}~\cite{MCCFR} with incremental construction of the game tree, similarly to \ac{OOS}, but replaces its targeted sampling scheme by Continual Resolving. This leads to faster sampling since \ac{MCCR} starts its samples not from the root, but from the current point in the game. It also decreases the memory requirements by not having to maintain statistics about parts of the game no longer relevant to the current match.
Furthermore, it allows evaluating continual resolving in various domains, without the need to construct expensive evaluation functions.

We prove that \ac{MCCR}'s exploitability approaches zero with increasing computational resources and verify this property empirically in multiple domains.
We present an extensive experimental comparison of \ac{MCCR} with \ac{OOS}, Information-set Monte Carlo Tree Search (\acs{IS-MCTS})~\cite{cowling2012information} and \ac{MCCFR}.
We show that \ac{MCCR}'s performance heavily depends on its ability to quickly estimate key statistics close to the root, which is good in some domains, but insufficient in others.

%
%
%
%
%
%

\section{Background}

We now describe the standard notation for \ac{IIG}s and \ac{MCCFR}.

\subsection{Imperfect Information Games}

We focus on two-player zero-sum extensive-form games with imperfect information. 
Based on~\cite{osborne1994course}, game $G$ can be described by
\begin{itemize}
	\item $\mc{H}$ -- the set of \emph{histories}, representing sequences of actions.
	\item $\mc{Z}$ -- the set of terminal histories (those $z\in \mc H$ which are not a prefix of any other history). We use $g \sqsubset h$ to denote the fact that $g$ is equal to or a prefix of $h$.
	\item $\mc A(h) := \{ a \, | \ ha \in \mc H \}$ denotes the set of actions available at a \emph{non-terminal history} $h\in \mc H\setminus \mc Z$. The term $ha$ refers to a history, i.e. child of history $h$ by playing $a$.
	\item $\mc P : \mc H \setminus \mc Z \rightarrow \{1,2,c\}$ is the \emph{player function} partitioning non-terminal histories into $\mc H_1$, $\mc H_2$ and $\mc H_c$ depending on which player chooses an action at $h$. Player $c$ is a special player, called ``chance'' or ``nature''.
    \item \emph{The strategy of chance} is a fixed probability distribution $\sigma_c$ over actions in chance player's histories.
    \item The \emph{utility function} $u=(u_1,u_2)$ assigns to each terminal history $z$ the rewards $u_1(z), u_2(z)\in \R$ received by players 1 and 2 upon reaching $z$. We assume that $u_2 = - u_1$.
	\item The \emph{information-partition} $\mc I = (\mc I_1, \mc I_2)$ captures the imperfect information of $G$. For each player $i\in \{1,2\}$, $\mc I_i$ is a partition of $\mc H_i$. If $g,h\in \mc H_i$ belong to the same $I\in \mc I_i$ then $i$ cannot distinguish between them. Actions available at infoset $I$ are the same as in each history $h$ of $I$, therefore we overload $\mc A(I) := \mc A(h)$. We only consider games with \emph{perfect recall}, where the players always remember their past actions and the information sets visited so far.
\end{itemize}

A \emph{behavioral strategy} $\sigma_i \in \Sigma_i$ of player $i$ assigns to each $I\in \mc I_i$ a probability distribution $\sigma(I)$ over available actions $a\in \mc A(I)$.
A \emph{strategy profile} (or simply \emph{strategy}) $\sigma = (\sigma_1,\sigma_2) \in \Sigma_1 \times \Sigma_2$ consists of strategies of players~1~and~2. For a~player $i \in \{1,2\}$, $-i$ will be used to denote the other two actors $\{1,2,c\}\setminus \{i\}$ in $G$ (for example $\mc H_{-1} := \mc H_{2} \cup \mc H_c$) and $\opponent$ denotes $i$'s opponent ($\textrm{opp}_1 := 2$).

\subsection{Nash Equilibria and Counterfactual Values}

The~\emph{reach probability} of a history $h\in \mc H$ under $\sigma$ is defined as $\pi^{\sigma}(h)=\pi^{\sigma}_{1}(h)\pi^{\sigma}_{2}(h)\pi^\sigma_c(h)$, where each $\pi^{\sigma}_{i}(h)$ is a~product of probabilities of the~actions taken by player $i$ between the root and $h$.
The reach probabilities $\pi_i^\sigma(h|g)$ and $\pi^\sigma(h|g)$ conditional on being in some $g\sqsubset h$ are defined analogously, except that the products are only taken over the~actions on the path between $g$ and $h$.
Finally, $\pi^{\sigma}_{-i}(\cdot)$ is defined like $\pi^\sigma(\cdot)$, except that in the product $\pi^\sigma_1(\cdot)\pi^\sigma_2(\cdot)\pi^\sigma_c(\cdot)$ the term $\pi^\sigma_i(\cdot)$ is replaced by 1.

The~\emph{expected utility} for player $i$ of a strategy profile $\sigma$ is $u_i(\sigma) = \sum_{z\in \mc Z} \pi^\sigma(z)u_i(z)$.
The~profile $\sigma$ is an \emph{$\epsilon$-Nash equilibrium} ($\epsilon$-\acs{NE}) if
\begin{align*}
\left( \forall i\in\{1,2\} \right) \ : \ u_i(\sigma) \geq \max_{\sigma_i' \in \Sigma_i} u_i(\sigma_i',\sigma_\opponent) -\epsilon .
\end{align*}
A Nash equilibrium (\ac{NE}) is an $\epsilon$-\ac{NE} with $\epsilon=0$. It is a standard result that in two-player zero-sum games, all $\sigma^*\in \textrm{\ac{NE}}$  have the same $u_i(\sigma^*)$~\cite{osborne1994course}. The \emph{exploitability} $\expl(\sigma)$ of $\sigma \in \Sigma$ is the average of exploitabilities $\expl_i(\sigma)$, $i \in \{1,2\}$, where
\begin{equation*}
    \expl_i(\sigma) := u_i(\sigma^*) - \min_{\sigma'_\opponent \in \Sigma_\opponent} u_i(\sigma_i,\sigma'_\opponent).
\end{equation*}
The expected utility conditional on reaching $h\in\mc H$ is
\begin{equation*}
    u^\sigma_i(h) = \sum_{h\sqsubset z\in \mc Z}\pi^{\sigma}(z|h)u_i(z).
\end{equation*}
An ingenious variant of this concept is the~\emph{counterfactual value} (\acs{CFV}) of a history, defined as $v_i^\sigma(h) := \pi^\sigma_{-i}(h) u_i^\sigma(h)$, and the coun\-terfactual value of taking an~action $a$ at $h$, defined as  $v_i^\sigma(h,a)  := \pi^\sigma_{-i}(h) u_i^\sigma(ha) $.
We set $v_i^\sigma(I) := \sum_{h\in I} v_i^\sigma(h)$ for $I\in \mc I_i$ and define $v_i^\sigma(I,a)$ analogously.
A strategy $\sigma^\star_2\in \Sigma_2$ is a \emph{counterfactual best response} $\textrm{CBR}(\sigma_1)$ to $\sigma_1 \in \Sigma_1$ if $v^{(\sigma_1,\sigma^\star_2)}_2(I) = \max_{a\in\mc A(I)} v^{(\sigma_1,\sigma^\star_2)}_2(I,a)$ holds for each $I \in \mc I_2$~\cite{CFR-D}. 

%

\subsection{Monte Carlo \acs{CFR}}

For a strategy $\sigma\in \Sigma$, $I\in \mc I_i$ and $a\in \mc A(I)$, we set the counterfactual regret for not playing $a$ in $I$ under strategy $\sigma$ to
\begin{equation}
 r_i^\sigma(I,a) := v_i^{\sigma} (I,a) - v_i^\sigma (I).
\end{equation}
The Counterfactual Regret minimization (CFR) algorithm~\cite{CFR} generates a consecutive sequence of strategies $\sigma^0, \sigma^1,\,\dots,\,\sigma^T$ in such a way that the \emph{immediate counterfactual regret}
\[ \bar R^t_{i,\textnormal{imm}} (I) := \!
\max_{a\in \mc A(I)} \bar R^t_{i,\textnormal{imm}} (I,a) := \!
\max_{a\in \mc A(I)} \frac{1}{t} \sum_{t'=1}^t r_i^{\sigma^{t'}}(I,a) \]
is minimized for each $I\in \mc I_i$, $i \in \{1,2\}$.
It does this by using the Regret Matching update rule~\cite{hart2000simple,blackwell}:
\begin{equation}\label{eq:rm_update}
\sigma^{t+1}(I, a) := \frac{\max \{ \bar R^{t}_{i,\textnormal{imm}} (I,a) ,0\} }{ \sum_{a' \in \mc A(I)} \max \{ \bar R^{t}_{i,\textnormal{imm}} (I,a'), 0 \} }.
\end{equation}
Since the overall regret is bounded by the sum of immediate counterfactual regrets~\cite[Theorem 3]{CFR}, this causes the average strategy $\bar{\sigma}^T$ (defined by \eqref{eq:avg_strat1}) to converge to a \ac{NE}~\cite[Theorem 1]{MCCFR}:
\begin{align}\label{eq:avg_strat1}
        \bar{\sigma}^T(I, a) := \frac{
                \sum_{t=1}^T \pi^{\sigma^t}_i(I) \sigma^t(I,a)
        }{
                \sum_{t=1}^T \pi^{\sigma^t}_i(I)
        } && \textnormal{(where $I\in\mc I_i$)} .
\end{align}
In other words, by accumulating immediate cf. regrets at each information set from the strategies $\sigma^0, \dots, \sigma^t$, we can produce new strategy $\sigma^{t+1}$. However only the average strategy is guaranteed to converge to NE with $\mc O(1/\sqrt{T})$ -- the individual regret matching strategies can oscillate.
The initial strategy $\sigma^0$ is uniform, but in fact any strategy will work. If the sum in the denominator of update rule~(\ref{eq:rm_update}) is zero, $\sigma^{t+1}(I,a)$ is set to be also uniform.

The disadvantage of CFR is the costly need to traverse the whole game tree during each iteration.
Monte Carlo CFR~\cite{MCCFR} works similarly, but only samples a small portion of the game tree each iteration. It calculates sampled variants of CFR's variables, each of which is an unbiased estimate of the original~\cite[Lemma~1]{MCCFR}.
We use a particular variant of \ac{MCCFR} called Outcome Sampling (\acs{OS})~\cite{MCCFR}.
\ac{OS} only samples a single terminal history $z$ at each iteration, using the sampling strategy $\sigma^{t,\epsilon} := (1-\epsilon)\sigma^t + \epsilon\cdot \textrm{rnd}$, where $\epsilon \in (0,1]$ controls the exploration and $\textrm{rnd}(I, a) := \frac{1}{|\mc A(I)|}$.

This $z$ is then traversed forward (to compute each player's probability $\pi_i^{\sigma^t}(h)$ of playing to reach each prefix of $z$) and backward (to compute each player's probability $\pi_i^{\sigma^t}(z|h)$ of playing the remaining actions of the history).
During the backward traversal, the sampled counterfactual regrets at each visited $I\in \mc I$ are computed according to  \eqref{eq:sampled_regret} and added to $\tilde R^T_{i,\textnormal{imm}}(I)$:
\begin{align}\label{eq:sampled_regret}
\tilde{r}^{\sigma^t}_i(I,a) :=
\begin{cases}
w_I\cdot  ( \pi^{\sigma^t}(z|ha) - \pi^{\sigma^t}(z|h) ) & \textnormal{if } h a \sqsubset z \\
w_I \cdot (0 - \pi^{\sigma^t}(z|h) ) & \textnormal{otherwise}
\end{cases},
\end{align}
where $h$ denotes the prefix of $z$ which is in $I$ and $w_I$ stands for
$\frac{1}{\pi^{\sigma^{t,\epsilon}}(z)} \pi_{-i}^{\sigma^t} (z|h) u_i(z)$~\cite{lanctot_thesis}.

\section{Domain-Independent Formulation of Continual Resolving}
The only domain for which continual resolving has been previously defined and implemented is poker. Poker has several special properties:
a) all information sets have a fixed number of histories of the same length,
b) public states have the same size and
c) only a single player is active in any public state.

There are several complications that occur in more general EFGs:
(1) We might be asked to take several turns within a single public state, for example in phantom games.
(2) When we are not the acting player, we might be unsure whether it is the opponent's or chance's turn.
(3) Finally, both players might be acting within the same public state, for example a secret chance roll determines whether we get to act or not.

In this section, we present an~abstract formulation of continual resolving robust enough to handle the complexities of general EFGs.
However, we first need to define the~concepts like public tree and resolving gadget more carefully.

\subsection{Subgames and the Public Tree}

To speak about the information available to player $i$ in histories where he doesn't act, we use \emph{augmented information sets}.
For player $i\in \{1,2\}$ and history $h\in \mc H\setminus \mc Z$, the $i$'s \emph{observation history} $\vec O_i(h)$ in $h$ is the sequence $(I_1,a_1,I_2,a_2,~\dots)$ of the information sets visited and actions taken by $i$ on the path to $h$ (incl. $I\ni h$ if $h\in \mc H_i$). Two histories $g,h\in \mc H\setminus \mc Z$ belong to the same \emph{augmented information set} $I\in \mc I_i^{\textnormal{aug}}$ if $\vec O_i(g) = \vec O_i(h)$.
This is equivalent to the definition from~\citep{CFR-D}, except that our definition makes it clear that $\mc I_i^{\textnormal{aug}}$ is also defined on $\mc H_i$ (and coincides there with $\mc I_i$ because of perfect recall).

\begin{remark}[Alternatives to $\mc I^{\textnormal{aug}}$]
$\mc I^{\textnormal{aug}}$ isn't the only viable way of generalizing information sets. One could alternatively consider some further-unrefineable perfect-recall partition $\mc I^*_i$ of $\mc H$ which coincides with $\mc I_i$ on $\mc H_i$, and many other variants between the two extremes. We focus only on $\mc I^{\textnormal{aug}}$, since an in-depth discussion of the general topic would be outside of the scope of this paper.
\end{remark}


We use $\sim$ to denote histories indistinguishable by some player:
\[ g\! \sim \! h \iff \vec O_1(g)=\vec O_1(h) \lor \vec O_2(g)=\vec O_2(h) .\]
By $\approx$ we denote the~transitive closure of $\sim$. Formally, $g \approx h$ iff
\[
\left( \exists n \right) \left( \exists h_1, \dots, h_n \right) :
g\!\sim\! h_1, \ h_1 \!\sim\! h_2, \ \dots,\ h_{n-1} \!\sim\! h_n, \ h_n \!\sim\! h .
\]

\noindent If two states do \emph{not} satisfy $g \approx h$, then it is common knowledge that both players can tell them apart.

\begin{definition}[Public state] \label{def:publ_state}
\emph{Public partition} is any partition $\mc S$ of $\mc H\setminus \mc Z$ whose elements are closed under $\sim$ and form a tree. 
An~element $S$ of such $\mc S$ is called a \emph{public state}.
The \emph{common knowledge partition} $\mc S_\textrm{ck}$ is the one consisting of the equivalence classes of $\approx$.
\end{definition}

\noindent Our definition of $\mc S$ is a~reformulation of the definition of~\cite{accelerated_BR} in terms of augmented information sets (which aren't used in~\cite{accelerated_BR}). The~definition of $\mc S_\textnormal{ck}$ is novel.
We endow any $\mc S$ with the tree structure inherited from $\mc H$. Clearly, $\mc S_\textrm{ck}$ is the finest public partition.
The~concept of a~public state is helpful for introducing imperfect-information subgames (which aren't defined in~\cite{accelerated_BR}).
\begin{definition}[Subgame]\label{def:subgame}
    A \emph{subgame} rooted at a~public state $S$ is the set $G(S) := \{ h\in \mc H | \ \exists g \in S: g\sqsubset h \}$.
\end{definition}

For comparison,~\cite{CFR-D} defines a subgame as ``a forest of trees, closed under both the descendant relation and membership within $\aug_i$ for any player''.
For any $h\in S \in \mc S_\textrm{ck}$, the subgame rooted at $S$ is the smallest~\citep{CFR-D}-subgame containing $h$.
As a result,~\cite{CFR-D}-subgames are ``forests of subgames rooted at common-knowledge public states''.

We can see that finer public partitions lead to smaller subgames, which are easier to solve. In this sense, the common-knowledge partition is the ``best one''. However, finding $\mc S_\textrm{ck}$ is sometimes non-trivial, which makes the~definition of general public states from~\cite{accelerated_BR} important.
The drawback of this definition is its ambiguity --- indeed, it allows for extremes such as grouping the whole $\mc H$ into a single public state, without giving a practical recipe for arriving at the ``intuitively correct'' public partition.

\subsection{Aggregation and the Upper Frontier}

Often, it is useful to aggregate reach probabilities and counterfactual values over (augmented) information sets or public states. In general EFGs, an augmented information set $I\in \mc I_i^\textnormal{aug}$ can be ``thick'', i.e. it can contain both some $ha\in \mc H$ and it's parent $h$.
This necessarily happens when we are unsure how many actions were taken by other players between our two successive actions.
For such $I$, we only aggregate over the ``upper frontier'' $\hat I := \{ h \in I | \, \nexists g \in I: g \sqsubset h \, \& \, g\neq h \}$ of $I$~\cite{halpern2016upperFrontier,halpern1997upperFrontier}: 
We overload $\pi^\sigma(\cdot)$ as $\pi^\sigma(I) := \sum_{h\in\hat I} \pi^\sigma(h)$ and $v_i^\sigma(\cdot)$ as $v_i^\sigma(I) := \sum_{h\in\hat I} v_i^\sigma(h)$. We define $\hat S$ for $S\in \mc S$, $\pi_i^\sigma(I)$, $\pi_{-i}^\sigma(I)$ and $v^\sigma_i(I,a)$ analogously.
By $\hat S(i) := \{  I \in \mc I_i^\textnormal{aug}\, | \ \hat I \subseteq \hat S \}$ we denote the topmost (augmented) information sets of player $i$ in $S$.

To the best of our knowledge, the issue of ``thick'' information sets has only been discussed in the context of non-augmented information sets in games with imperfect recall~\cite{halpern2016upperFrontier}.
One scenario where thick augmented information sets cause problems is the resolving gadget game, which we discuss next.

\subsection{Resolving Gadget Game}\label{sec:gadget}
We describe a generalization of the resolving gadget game from~\cite{CFR-D} (cf.~\cite{MoravcikGadget,brown2017safe}) for resolving Player 1's strategy (see Figure~\ref{fig:gadget}).

Let $S\in \mc S$ be a public state to resolve from, $\sigma\in \Sigma$, and let $\tilde v(I) \in \R$ for $I\in \hat S(i)$ be the required counterfactual values.
First, the upper frontier of $S$ is duplicated as $\{ \tilde h | \, h\in \hat S \} =: \tilde S$.
Player 2 is the acting player in $\tilde S$, and from his point of view, nodes $\tilde h$ are partitioned according to $\{\tilde I := \{ \tilde h| \, h\in \hat I\}\  |\ I\in \hat S(2)\}$.
In $\tilde h \in \tilde I$ corresponding to $h\in I$, he can choose between ``following'' (F) into $h$ and ``terminating'' (T), which ends the game with utility $\tilde u_2(\tilde h  T) := \tilde v(I) \pi_{-2}^\sigma(S) / \pi^\sigma_{-2}(I)$.
From any $h\in \hat S$ onward, the game is identical to $G(S)$, except that the utilities are multiplied by a constant: $\tilde u_i(z) := u_i(z) \pi_{-2}^\sigma(S)$.
To turn this into a well-defined game, a root chance node is added and connected to each $h\in \hat S$, with transition probabilities $\pi_{-2}^\sigma(h) / \pi_{-2}^\sigma(S)$.

\begin{figure}[t]
        \includegraphics[width=0.75\linewidth]{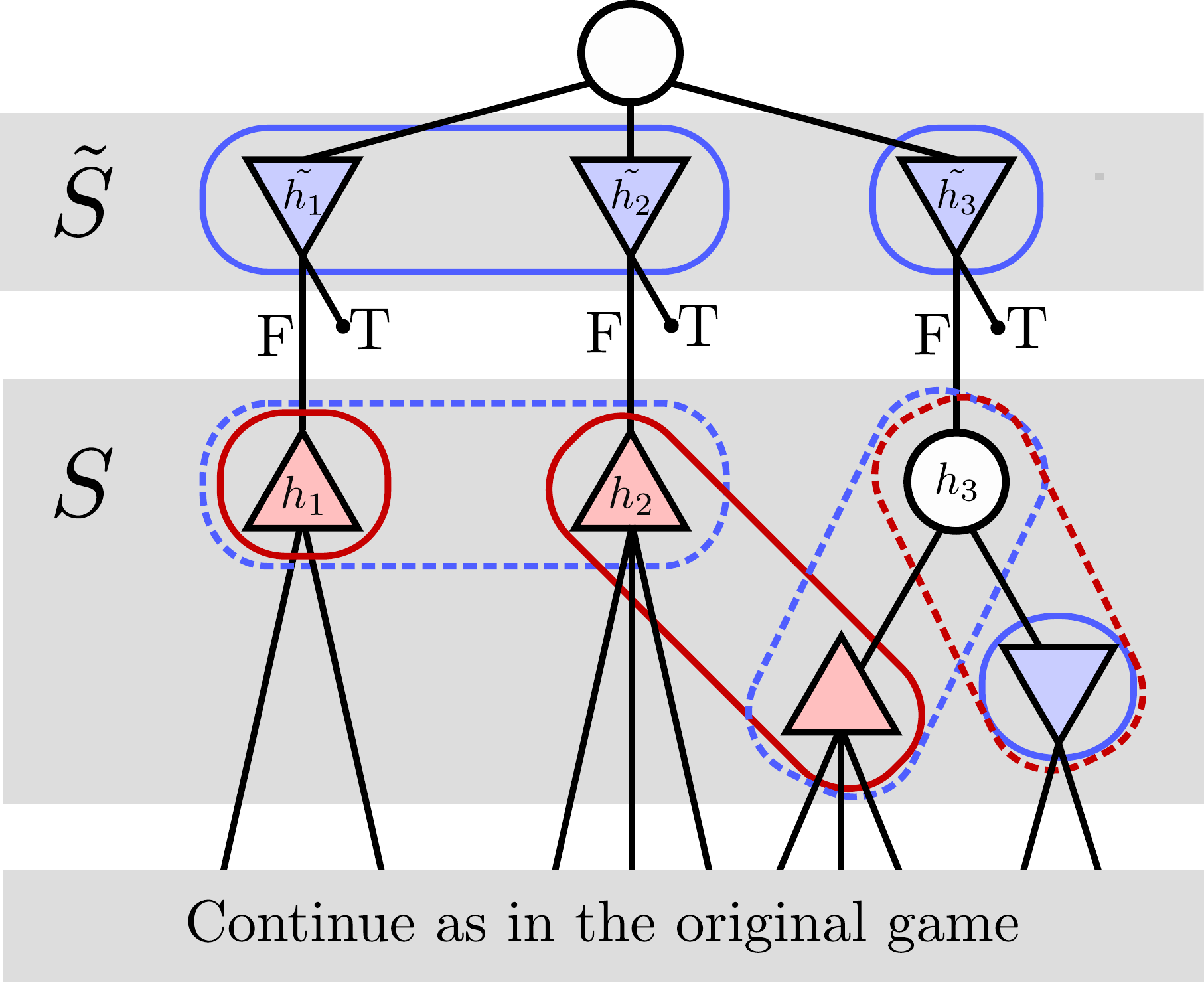}
        \caption{Resolving game $\widetilde G\left( S, \sigma, \tilde v \right)$ constructed for player $\triangle$ in a~public state $S$. Player's (augmented) information sets are drawn with solid (resp. dashed) lines of the respective color. The chance node $\bigcirc$ chooses one of $\triangledown$'s histories $\tilde{h_1},\tilde{h_2},\tilde{h_3}$, which correspond to the ``upper frontier'' of $S$.}
        \label{fig:gadget}
\end{figure}

This game is called the \emph{resolving gadget game} $\widetilde G\left( S, \sigma, \tilde v \right)$, or simply $\widetilde G\left(S\right)$ when there is no risk of confusion, and the variables related to it are denoted by tilde.
If $\tilde \rho \in \widetilde \Sigma$ is a ``resolved'' strategy in $\widetilde G(S)$, we denote the new combined strategy in $G$ as $\sigma^{\textnormal{new}} := \sigma|_{G(S)\leftarrow \tilde \rho}$, i.e. play according to strategy $\tilde \rho$ in the subgame $G(S)$ and according to $\sigma$ everywhere else.

The difference between $\widetilde G\left( S, \sigma, \tilde v \right)$ and the original version of~\cite{CFR-D} is that our modification only duplicates the upper frontier $\hat S$ and uses normalization constant $\sum_{\hat S} \pi_{-2}^\sigma (h)$ (rather than $\sum_{S} \pi_{-2}^\sigma (h)$) and estimates $\tilde v(I)=\sum_{\hat I} \tilde v(h)$ (rather than $\sum_{I} \tilde v(h)$).
This enables $\widetilde G\left( S, \sigma, \tilde v \right)$ to handle domains with thick information sets and public states.
While tedious, it is straightforward to check that $\widetilde G\left( S, \sigma, \tilde v \right)$  has all the properties proved in~\cite{CFR-D,DeepStack,Neil_thesis}.
Without our modification, the resolving games would either sometimes be ill-defined, or wouldn't have the~desired properties.

The following properties are helpful to get an intuitive understanding of gadget games. Their more general versions and proofs (resp. references for proofs) are listed in the appendix.

\begin{lemma}[Gadget game preserves opponent's values]
For each $I \in \mc I_2^\textnormal{aug}$ with $I\subset G(S)$, we have
$v_2^{\sigma^\textnormal{new}}(I) = \tilde v_2^{\tilde \rho}(I)$.
\end{lemma}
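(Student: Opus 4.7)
The plan is to prove the per-history equality $\tilde v_2^{\tilde \rho}(h) = v_2^{\sigma^\textnormal{new}}(h)$ for every $h \in \hat I$, then sum over $\hat I$ to obtain the lemma. Since the utilities at terminals of $\widetilde G(S)$ are the $u_2$-utilities scaled by $\pi_{-2}^\sigma(S)$ and the root chance in the gadget picks $\tilde h_0 \in \tilde S$ with probability $\pi_{-2}^\sigma(h_0)/\pi_{-2}^\sigma(S)$, the intuition is that these two $\pi_{-2}^\sigma(S)$ factors cancel exactly, leaving the usual counterfactual value in $G$ under $\sigma^\textnormal{new}$.

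More concretely, fix $h \in \hat I$. Because $I \subset G(S)$, the history $h$ has a unique ancestor $h_0 \in \hat S$ (possibly $h_0=h$). In the original game I would decompose
\[ \pi_{-2}^{\sigma^\textnormal{new}}(h) \;=\; \pi_{-2}^{\sigma}(h_0) \cdot \pi_{-2}^{\tilde \rho}(h\,|\,h_0), \]
using that $\sigma^\textnormal{new}$ agrees with $\sigma$ strictly above $S$ and with $\tilde\rho$ inside $G(S)$. Similarly,
\[ u_2^{\sigma^\textnormal{new}}(h) \;=\; u_2^{\tilde\rho}(h), \]
since the continuation below $h$ lies entirely in $G(S)$ where $\sigma^\textnormal{new} = \tilde\rho$.

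On the gadget side, the path from the gadget root to $h$ has three segments: the root chance move (an opponent-of-2 action contributing $\pi_{-2}^\sigma(h_0)/\pi_{-2}^\sigma(S)$), player 2's own ``Follow'' action at $\tilde I_0 \ni \tilde h_0$ (an own action, hence \emph{not} in $\pi_{-2}^{\tilde\rho}$), and the descent from $h_0$ to $h$ inside the embedded copy of $G(S)$. Therefore
\[ \pi_{-2}^{\tilde\rho}(\tilde h) \;=\; \frac{\pi_{-2}^\sigma(h_0)}{\pi_{-2}^\sigma(S)} \cdot \pi_{-2}^{\tilde\rho}(h\,|\,h_0). \]
Because $\tilde u_2 = u_2 \cdot \pi_{-2}^\sigma(S)$ at every gadget terminal and $\pi^{\tilde\rho}(z|h)$ matches its $G(S)$ counterpart, $\tilde u_2^{\tilde\rho}(h) = u_2^{\tilde\rho}(h)\cdot \pi_{-2}^\sigma(S)$. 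Multiplying reach by utility and canceling $\pi_{-2}^\sigma(S)$ gives
\[ \tilde v_2^{\tilde\rho}(h) = \pi_{-2}^\sigma(h_0) \cdot \pi_{-2}^{\tilde\rho}(h\,|\,h_0) \cdot u_2^{\tilde\rho}(h) = v_2^{\sigma^\textnormal{new}}(h). \]
Summing over $h\in \hat I$ completes the argument.

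The main obstacle I anticipate is bookkeeping around ``thick'' augmented information sets and the role of the gadget-root chance: one must be scrupulous that Follow is counted as player 2's own action (excluded from $\pi_{-2}$) while the root chance is counted in $\pi_{-2}$, and that the sum is taken only over $\hat I$ so that distinct $h$'s with possibly different ancestors $h_0 \in \hat S$ are handled uniformly. Once the per-history equality is established, these subtleties dissolve, because the identity holds pointwise before summation.
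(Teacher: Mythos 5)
Your proposal is correct and follows essentially the same route as the paper's own proof: a per-history identity obtained by decomposing the $-2$-reach into the gadget-root chance factor $\pi_{-2}^\sigma(h_0)/\pi_{-2}^\sigma(S)$, the Follow action (player 2's own, hence excluded from $\pi_{-2}$), and the within-subgame reach, then cancelling the utility scaling $\pi_{-2}^\sigma(S)$ before summing over $\hat I$. The only cosmetic difference is notation ($h_0$ versus the paper's $g$); no substantive gap.
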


\noindent Note that the conclusion does \emph{not} hold for counterfactual values of the (resolving) player 1! (This can be easily verified on a simple specific example such as Matching Pennies.)

\begin{lemma}[Optimal resolving]
If $\sigma$ and $\tilde \rho$ are both Nash equilibria and $\tilde v(I) = v_2^\sigma(I)$ for each $I\in \hat S(2)$, then $\sigma^{\textnormal{new}}_1$ is not exploitable.
\end{lemma}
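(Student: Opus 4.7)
The plan is to prove that $\sigma^\textnormal{new}_1$ is not exploitable by establishing $u_2(\sigma^\textnormal{new}_1,\sigma'_2)\le u_2(\sigma)$ for every $\sigma'_2\in\Sigma_2$. The argument proceeds in three stages; the technical heart is the careful calibration of gadget-game values against original-game counterfactual values at the frontier $\hat S(2)$.

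\emph{Stage~1 (gadget NE value).} I first show $\tilde u_2(\tilde\rho)=\sum_{I\in\hat S(2)} v_2^\sigma(I)$ and, per frontier info set, $\tilde v_2^{\tilde\rho}(\tilde I)=v_2^\sigma(I)$. The lower bound $\tilde u_2(\tilde\rho)\ge\sum_I v_2^\sigma(I)$ comes from player~2 terminating everywhere: the utility scaling $\tilde u_2(\tilde h T)=\tilde v(I)\pi^\sigma_{-2}(S)/\pi^\sigma_{-2}(I)$ and the top-chance probabilities $\pi^\sigma_{-2}(h)/\pi^\sigma_{-2}(S)$ combine so that the T-value at $\tilde I$ equals $\tilde v(I)=v_2^\sigma(I)$. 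For the matching upper bound, let player~1 play $\sigma_1|_{G(S)}$ in the gadget; then the F-value at $\tilde I$ equals $v_2^{\sigma_1,\sigma''_2}(I)$ for an appropriate completion $\sigma''_2\in\Sigma_2$, and this is at most $v_2^\sigma(I)$ because $\sigma$ is a NE and, without loss of generality, $\sigma_2$ can be chosen as a counterfactual best response to $\sigma_1$. Since the T-value also equals $v_2^\sigma(I)$, the gadget CFV at $\tilde I$ under $(\sigma_1|_{G(S)},\cdot)$ is $\le v_2^\sigma(I)$, yielding $\tilde u_2(\tilde\rho)\le\sum_I v_2^\sigma(I)$. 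The two bounds agree, and the per-$\tilde I$ lower bound $\tilde v_2^{\tilde\rho}(\tilde I)\ge v_2^\sigma(I)$ (from the termination option at $\tilde I$) forces termwise equality $\tilde v_2^{\tilde\rho}(\tilde I)=v_2^\sigma(I)$.

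\emph{Stage~2 (uniform CFV bound).} Since a counterfactual best response to $\tilde\rho_1$ is a best response and $\tilde\rho$ is a NE, $\sum_I\max_{\sigma''_2}\tilde v_2^{\tilde\rho_1,\sigma''_2}(\tilde I)=\tilde u_2(\tilde\rho)=\sum_I v_2^\sigma(I)$. Combined with the per-$\tilde I$ termination lower bound $\max_{\sigma''_2}\tilde v_2^{\tilde\rho_1,\sigma''_2}(\tilde I)\ge v_2^\sigma(I)$, this gives termwise equality. Choosing $\sigma''_2$ to play F at $\tilde I$ and any $\sigma'_2\in\Sigma_2$ below makes the F-value equal $v_2^{\sigma^\textnormal{new}_1,\sigma'_2}(I)$, which yields the key inequality $v_2^{\sigma^\textnormal{new}_1,\sigma'_2}(I)\le v_2^\sigma(I)$ for every $I\in\hat S(2)$ and every $\sigma'_2\in\Sigma_2$. \emph{Stage~3 (decomposition).} Fix $\sigma'_2\in\Sigma_2$ and let $\bar\sigma_2$ agree with $\sigma'_2$ outside $G(S)$ and with $\sigma_2$ inside. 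Decomposing $u_2(\sigma^\textnormal{new}_1,\sigma'_2)$ into contributions from terminals not reaching $S$ and terminals reaching $S$, the latter aggregate to $\sum_{I\in\hat S(2)}\pi^{\sigma'_2}_2(I)\,v_2^{\sigma^\textnormal{new}_1,\sigma'_2}(I)$ by perfect recall, while the former coincides with the corresponding part of $u_2(\sigma_1,\bar\sigma_2)$ because $\sigma^\textnormal{new}_1=\sigma_1$ outside $G(S)$. Stage~2 then gives $u_2(\sigma^\textnormal{new}_1,\sigma'_2)\le u_2(\sigma_1,\bar\sigma_2)$, and the NE property of $\sigma$ yields $u_2(\sigma_1,\bar\sigma_2)\le u_2(\sigma)$.

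The hard part is stage~1's upper bound via $\sigma_1|_{G(S)}$. It requires upgrading $\sigma_2$ to a counterfactual best response (which preserves the NE property) and careful accounting of the normalization constants built into $\widetilde G(S,\sigma,\tilde v)$. Once stage~1 is established, stages~2 and~3 reduce to bookkeeping with reach probabilities and per-info-set inequalities.
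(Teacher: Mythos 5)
Your argument takes a genuinely different route from the paper, which obtains this lemma as the degenerate case of the quantitative resolving bound (Lemma~\ref{lem:resolving_lemma} in the appendix, imported from DeepStack's Lemmas 24--25): there $\expl_1(\sigma)=0$ and $\widetilde{\expl}_1(\tilde\rho)=0$ kill two of the three error terms, and the remaining term is a sum of $\left|v_2^{\sigma_1,\textrm{CBR}(\sigma_1)}(I)-\tilde v(I)\right|$. Your Stages 2 and 3 are a sound, self-contained re-derivation of that machinery, and the normalization bookkeeping in Stage 1 (the T-value at $\tilde I$ collapsing to $\tilde v(I)$) is correct.

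The gap is the ``without loss of generality'' in Stage 1's upper bound. What that step actually needs is $v_2^{\sigma_1,\textrm{CBR}(\sigma_1)}(I)\le\tilde v(I)=v_2^\sigma(I)$ for each $I\in\hat S(2)$; since the reverse inequality always holds, you need equality. For a general Nash equilibrium $\sigma$ this can fail: $\sigma_2$ is only a best response, and at information sets with $\pi_2^\sigma(I)=0$ it may play counterfactually suboptimally below $I$, which leaves $u_2(\sigma)$ untouched but makes $v_2^\sigma(I)$ strictly smaller than $v_2^{\sigma_1,\textrm{CBR}(\sigma_1)}(I)$. Upgrading $\sigma_2$ to a counterfactual best response does preserve the NE property of $\sigma$, but it is \emph{not} free of loss: it changes $v_2^\sigma(I)$ while $\tilde v$ is fixed --- $\tilde\rho$ is by hypothesis an equilibrium of the already-built gadget $\widetilde G(S,\sigma,\tilde v)$ whose terminal payoffs encode the \emph{original} $v_2^\sigma(I)$ --- so after the upgrade the hypothesis $\tilde v(I)=v_2^\sigma(I)$ no longer holds and the comparison of the F-value against the T-value breaks. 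Without the resulting equality $\tilde u_2(\tilde\rho)=\sum_I\tilde v(I)$, the termwise bound of Stage 2 collapses (a sum constraint alone permits player 1 to trade a follow-value surplus at one $I$ against a deficit at another, which is exactly the exploitable scenario). To close this you must either strengthen the hypothesis to $\tilde v(I)=v_2^{\sigma_1,\textrm{CBR}(\sigma_1)}(I)$ (equivalently, take $\sigma_2$ to be a counterfactual best response), or supply a separate argument for the per-$I$ bound. In fairness, this is the same identification the paper itself makes silently when specializing its CBR-valued resolving bound to $\tilde v=v_2^\sigma$; your proof has the merit of making visible exactly where it is needed.
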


\subsection{Continual Resolving}

Domain-independent continual resolving closely follows the structure of continual resolving for poker~\cite{DeepStack}, but uses a~generalized resolving gadget and handles situations which do not arise in poker, such as multiple moves in one public state. We explain it from the perspective of Player 1. The abstract CR keeps track of strategy $\sigma_1$ it has computed in previous moves. Whenever it gets to a public state $S$, where $\sigma_1$ has not been computed, it resolves the subgame $G(S)$.
As a by-product of this resolving, it estimates opponent's counterfactual values $v_2^{\sigma_1,\textrm{CBR}(\sigma_1)}$ for all public states that might come next, allowing it to keep resolving as the game progresses.

CR repetitively calls a \texttt{Play} function which takes the current information set $I\in \mc I_1$ as the input and returns an action $a\in \mc A(I)$ for Player 1 to play. It maintains the following variables:
\begin{itemize}
	\item $S\in \mc S$ \dots the current public state,
	\item $\textrm{KPS} \subset \mc S$ \dots the public states where strategy is known,
	\item $\sigma_1$ \dots a strategy defined for every $I\in \mc I_1$ in \textrm{KPS},
	\item $\textrm{NPS} \subset \mc S$ \dots the public states where \texttt{CR} may resolve next,
	\item $D(S')$ for $S'\in \textrm{NPS}$ \dots data allowing resolving at $S'$, such as the estimates of opponent's counterfactual values.
\end{itemize}

\begin{algorithm}[t]
\LinesNumbered
\SetKwData{knownStrategy}{KPS}
\SetKwData{knownValues}{NPS}
\SetKwData{state}{S}
\SetKwData{data}{D}
\SetKwData{action}{a}

\SetKwFunction{BuildGadget}{BuildResolvingGame}
\SetKwFunction{GetNextStrategy}{ExtendKPS}
\SetKwFunction{Resolve}{Resolve}

\SetKwFunction{assert}{assert}
\SetKwFunction{return}{return}

\SetKwInOut{Input}{Input}
\SetKwInOut{Output}{Output}

\Input{Information set $I\in \mc I_1$}
\Output{An action $\action\in \mc A(I)$}
\BlankLine
\state $\leftarrow$ the public state which contains $I$\;
\If{\state $\notin$ \knownStrategy}{
	$\widetilde G(S) \leftarrow$ \BuildGadget{\state,\data{\state}}\;
	\knownStrategy $\leftarrow$ \knownStrategy $\cup$ \state\;
	\knownValues $\leftarrow$ all $S'\in \mc S$ where CR acts for the first time after leaving $\knownStrategy$\;\label{line:NPS}
	$\tilde \rho, \tilde \data \leftarrow$ \Resolve{$\widetilde G(S)$,\knownValues}\;
	$\sigma_1 |_{S'} \leftarrow \tilde \rho |_{S'}$\;
	\data $\leftarrow$ calculate data for \knownValues based on \data, $\sigma_1$ and $\tilde \data$\;
}
\textbf{return} $\action\sim \sigma_1(I)$
\caption{Function \texttt{Play} of Continual Resolving}\label{alg:CR}
\end{algorithm}

The pseudo-code for CR is described in Algorithm~\ref{alg:CR}.
If the current public state belongs to \textrm{KPS}, then the strategy $\sigma_1(I)$ is defined, and we sample action $a$ from it.
Otherwise, we should have the data necessary to build some resolving game $\widetilde G(S)$ (line 3).
We then determine the public states $\textrm{NPS}$ where we might need to resolve next (line 5).
We solve $\widetilde G(S)$ via some resolving method which also computes the data necessary to resolve from any $S'\in \textrm{NPS}$ (line 6).
Finally, we save the resolved strategy in $S$ and update the data needed for resolving (line 7-9).
To initialize the variables before the first resolving, we
set \textrm{KPS} and $\sigma_1$ to $\emptyset$, find appropriate $\textrm{NPS}$, and start solving the game from the root using the same solver as \texttt{Play}, i.e. $\_\, , D \leftarrow \texttt{Resolve}(G, \textrm{NPS})$.

We now consider CR variants that use the gadget game from Section~\ref{sec:gadget} and data of the form $D=(r_1,\tilde v)$, where $r_1(S') = ( \pi_1^{\sigma_1}(h) )_{S'}$ is CR's range and $\tilde v(S') = (\tilde v(J))_J$ estimates opponent's counterfactual value at each $J\in S'(2)$.
We shall use the following notation: $S_n$ is the $n$-th public state from which CR resolves; $\tilde \rho_n$ is the corresponding strategy in $\widetilde G(S_n)$; $\sigma_1^n$ is CR's strategy after $n$-th resolving, defined on $\textrm{KPS}_n$; the optimal extension of $\sigma_1^n$ is
$$\sigma_1^{*n} := {\arg \! \min}_{\nu_1 \in \Sigma_1} \ \textnormal{expl}_1 \left( \sigma_1^n|_{\textrm{KPS}_n} \cup \nu_1|_{\mc S \setminus \textrm{KPS}_n} \right) .$$
Lemmata 24 and 25 of~\cite{DeepStack} (summarized into Lemma~\ref{lem:resolving_lemma} in our Appendix~\ref{sec:proofs}) give the following generalization of~\cite[Theorem\,S1]{DeepStack}:

\begin{theorem}[Continual resolving bound]\label{thm:cr}
Suppose that CR uses $D=(r_1,\tilde v)$ and $\widetilde G(S,\sigma_1,\tilde v)$.
Then the exploitability of its strategy is bounded by
$\textnormal{expl}_1 (\sigma_1) \leq \epsilon_0^{\tilde v} + \epsilon_1^R + \epsilon_1^{\tilde v} + \dots + \epsilon_{N-1}^{\tilde v} + \epsilon_N^R$,
where $N$ is the number of resolving steps and
$\epsilon_n^R := \widetilde{\textnormal{expl}}_1(\tilde \rho_n)$,
$\epsilon_n^{\tilde v} := \sum_{J\in \hat S_{n+1}(2)} \left| \tilde v(J) - v_2^{\sigma_1^{*n},CBR}(J) \right|$
are the exploitability (in $\widetilde G(S_n)$) and value estimation error made by the $n$-th resolver (resp. initialization for $n=0$).
\end{theorem}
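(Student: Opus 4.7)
The plan is to establish the bound by induction on the number of completed resolving steps, mirroring the argument for Theorem~S1 in~\cite{DeepStack} but invoking the generalized resolving lemma (Lemma~\ref{lem:resolving_lemma}) in place of DeepStack's Lemmas~24--25. Write $e_n := \expl_1(\sigma_1^{*n})$ for the residual exploitability of the optimal extension of the partial strategy committed to after $n$ resolvings. Because $\sigma_1^{*N}$ is one admissible extension of the partial strategy $\sigma_1$ that CR eventually commits to, and the exploitability of a partial strategy is naturally taken as the minimum exploitability over its extensions, we have $\expl_1(\sigma_1)\le e_N$, so it suffices to bound $e_N$.

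The workhorse is a single-step inequality supplied by the generalized resolving lemma, namely
\[
e_{n+1} \;\leq\; e_n + \epsilon_n^{\tilde v} + \epsilon_{n+1}^R,
\]
where $\epsilon_n^{\tilde v}$ quantifies the error of the opponent-value estimates $\tilde v$ produced at step $n$ (or at the initialization when $n=0$) and fed as input to the $(n{+}1)$-st resolving, and $\epsilon_{n+1}^R$ is the gadget-game exploitability of $\tilde\rho_{n+1}$. The base case $e_0=0$ is immediate: since $\textrm{KPS}_0 = \emptyset$, any Nash equilibrium of $G$ is an optimal extension of the empty partial strategy. Iterating the single-step inequality from $n=0$ to $n=N-1$ and telescoping yields
\[
e_N \;\leq\; \sum_{n=0}^{N-1} \epsilon_n^{\tilde v} + \sum_{n=1}^{N} \epsilon_n^R \;=\; \epsilon_0^{\tilde v} + \epsilon_1^R + \epsilon_1^{\tilde v} + \dots + \epsilon_{N-1}^{\tilde v} + \epsilon_N^R,
\]
which is the stated bound.

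The main obstacle is not the induction itself, which is plain telescoping, but rather making the resolving lemma available in the general EFG setting. The points needing verification are that (i) the modified gadget $\widetilde G(S,\sigma_1,\tilde v)$ of Section~\ref{sec:gadget} remains well-defined when $S$ contains thick augmented information sets or when both players may act within $S$; (ii) the normalization constants and value-estimate sums are taken over the upper frontiers $\hat S$ and $\hat I$ rather than over $S$ and $I$, so that the opponent-value-preservation and optimal-resolving lemmata stated above carry through and feed correctly into $\epsilon_n^{\tilde v}$; and (iii) $\textrm{NPS}_n$ as defined on line~\ref{line:NPS} of Algorithm~\ref{alg:CR} does cover every public state at which CR might next be called, even when chance or the opponent interleaves multiple moves with Player~1. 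With these in place (the generalized resolving lemma being proved in Appendix~\ref{sec:proofs} along the lines of DeepStack's Lemmas~24--25), Theorem~\ref{thm:cr} drops out of the telescoping above.
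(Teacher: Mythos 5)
Your proof is correct and takes essentially the same route as the paper: the paper obtains Theorem~\ref{thm:cr} by inductively applying the generalized one-step resolving bound (Lemma~\ref{lem:resolving_lemma}, summarizing DeepStack's Lemmas 24--25) to the optimal extensions $\sigma_1^{*n}$ and telescoping, exactly as you do, and the same induction is carried out explicitly in the appendix proof of Theorem~\ref{thm:mccr}. The items you flag as needing verification (the upper-frontier gadget being well-defined for thick information sets, and NPS covering all next resolving points) are precisely where the paper places the remaining work, in Section~\ref{sec:gadget} and Appendix~\ref{sec:proofs}.
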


The DeepStack algorithm from~\cite{DeepStack} is a~poker-specific instance of the~CR variant described in the above paragraph. Its resolver is a modification of CFR with neural network heuristics and sparse look-ahead trees. We make CR domain-independent and allowing for different resolving games ($\texttt{BuildResolvingGame}$), algorithms ($\texttt{Resolve}$), and schemes (by changing line \ref{line:NPS}).

\section{Monte Carlo Continual Resolving}\label{sec:mccr}


Monte Carlo Continual Resolving is a specific instance of CR which uses Outcome Sampling \ac{MCCFR} for game (re)solving. Its data are of the form $D=(r_1,\tilde v)$ described above and it resolves using the gadget game from Section~\ref{sec:gadget}. We first present an abstract version of the algorithms that we formally analyze, and then add improvements that make it practical.
To simplify the theoretical analysis, we assume \ac{MCCFR} computes the exact counterfactual value of resulting average strategy $\bar \sigma^T$ for further resolving. (We later discuss more realistic alternatives.)
The following theorem shows that \ac{MCCR}'s exploitability converges to 0 at the rate of $O(T^{-1/2})$.

\begin{restatable}[MCCR bound]{theorem}{MCCRbound}\label{thm:mccr}
With probability at least $(1-p)^{N+1}$, the exploitability of strategy $\sigma$ computed by \ac{MCCR} satisfies
\begin{align*}
    \expl_i(\sigma) \leq
    \left( \sqrt{2} / \sqrt{p} + 1 \right)|\mc I_i|
    \frac{\Delta_{u,i}\sqrt{A_i}}{\delta}
    \left( \frac{2}{\sqrt{T_0}} + \frac{2N-1}{\sqrt{T_R}} \right),
\end{align*}
where $T_0$ and $T_R$ are the numbers of \ac{MCCR}'s iterations in pre-play and each resolving, $N$ is the required number of resolvings, $\delta = \min_{z, t} q_t(z)$ where $q_t(z)$ is the probability of sampling $z\in \mc Z$ at iteration $t$, $\Delta_{u,i} = \max_{z,z'} | u_i(z) - u_i(z')|$ and $A_i = \max_{I\in \mc I_i} |\mc A(I)|$.
\end{restatable}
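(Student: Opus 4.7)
The plan is to combine the abstract continual-resolving bound of Theorem~\ref{thm:cr} with the standard high-probability regret guarantee for Outcome-Sampling \ac{MCCFR}. First, I would apply Theorem~\ref{thm:cr} to rewrite the exploitability as
\begin{equation*}
\expl_i(\sigma) \leq \epsilon_0^{\tilde v} + \epsilon_N^R + \sum_{n=1}^{N-1}\bigl(\epsilon_n^R + \epsilon_n^{\tilde v}\bigr),
\end{equation*}
which reduces the problem to bounding the errors contributed by the $N+1$ independent \ac{MCCFR} invocations that \ac{MCCR} performs: one pre-play solve of length $T_0$ on the whole game, and $N$ subsequent resolvings of length $T_R$ on the respective gadget games.

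Second, I would bound each exploitability term $\epsilon_n^R = \widetilde{\expl}_1(\tilde\rho_n)$ using the high-probability convergence bound for \ac{OS}-\ac{MCCFR} (Theorem~2 of~\cite{MCCFR}): after $T$ iterations the average strategy satisfies, with probability at least $1-p$,
\begin{equation*}
\widetilde{\expl}_i(\bar\sigma^T) \leq \Bigl(1 + \tfrac{\sqrt{2}}{\sqrt{p}}\Bigr)\frac{|\mc I_i|\,\Delta_{u,i}\sqrt{A_i}}{\delta\sqrt{T}}.
\end{equation*}
Because \ac{MCCR} uses fresh random draws for each solver call, the probability that the bound holds simultaneously in all $N+1$ invocations is at least $(1-p)^{N+1}$, matching the confidence level claimed in the theorem.

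Third, and this is the main obstacle, I would bound each value-estimation error $\epsilon_n^{\tilde v} = \sum_{J\in \hat S_{n+1}(2)} |\tilde v(J) - v_2^{\sigma_1^{*n},\mathrm{CBR}}(J)|$ by an exploitability-type quantity from the same \ac{MCCFR} run. Under the theorem's assumption that $\tilde v(J)$ is computed as the exact counterfactual value of \ac{MCCFR}'s average output $\bar\sigma^T$, the individual differences measure how much the continuation values of $\bar\sigma^T$ deviate from those of the optimal extension of $\bar\sigma^T|_{\textrm{KPS}_n}$. The key point is that the gadget of Section~\ref{sec:gadget} is engineered so that any systematic gap at $J$ could be turned by Player~2 into a profitable terminate-or-follow deviation at the gadget's root; hence the summed gaps are controlled by $\widetilde{\expl}_1(\bar\sigma^T)=\epsilon_n^R$ up to constants. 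The delicate accounting lies in tracking the gadget's normalization by $\pi_{-2}^\sigma(S)$, and in the $n=0$ case the argument picks up an extra factor of $2$ because the value-error bound at the initial boundaries uses both players' \ac{MCCFR} guarantees simultaneously.

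Finally, I would substitute these bounds into the \ac{CR} decomposition. The initialization contributes $2/\sqrt{T_0}$ copies of the \ac{MCCFR} prefactor from $\epsilon_0^{\tilde v}$; each of the $N-1$ intermediate resolvings contributes two copies ($\epsilon_n^R$ and $\epsilon_n^{\tilde v}$) of $1/\sqrt{T_R}$; and the final $\epsilon_N^R$ contributes one more, for a total of $2/\sqrt{T_0}+(2N-1)/\sqrt{T_R}$. Factoring out the common prefactor $(1+\sqrt{2}/\sqrt{p})|\mc I_i|\Delta_{u,i}\sqrt{A_i}/\delta$ then yields exactly the bound in the theorem.
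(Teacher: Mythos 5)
Your high-level shape is right --- combine a continual-resolving error decomposition with the high-probability \ac{OS} bound and multiply the success probabilities over the $N+1$ solver invocations --- but the route you take through Theorem~\ref{thm:cr} is exactly the one the paper's footnote warns against, and your resolution of the ``main obstacle'' does not actually close it. The term $\epsilon_n^{\tilde v}$ in Theorem~\ref{thm:cr} compares $\tilde v(J)$ to $v_2^{\sigma_1^{*n},\mathrm{CBR}}(J)$, where $\sigma_1^{*n}$ is the \emph{optimal extension} of the strategy computed so far. The \ac{MCCFR} run gives you no direct handle on that quantity: $\bar\sigma^T$ is not $\sigma_1^{*n}$, and your appeal to the gadget's terminate-or-follow structure only relates the estimates to values of the strategy you actually computed, not to values of its optimal extension. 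The paper avoids this mismatch entirely by \emph{not} invoking Theorem~\ref{thm:cr}: it runs an induction directly on Lemma~\ref{lem:resolving_lemma}, in which the value-error term is $\sum_I |v_2^{\sigma_1,\mathrm{CBR}(\sigma_1)}(I) - v_2^{\sigma}(I)|$, i.e.\ the opponent's \emph{full counterfactual regret} under the computed strategy itself. That quantity is then bounded by Lemma~\ref{lem:MCCFR_values}, whose proof shows the full regret is dominated by the same sum of positive immediate regrets that the \ac{OS} bound already controls. This lemma --- value approximation via full counterfactual regret --- is the missing key step in your argument; without it (or an equivalent), the bound on $\epsilon_n^{\tilde v}$ is unsupported. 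You also need Lemma~\ref{lem:gadget_values} explicitly to transfer the value-approximation guarantee from the $n$-th gadget game back to the original game before the $(n{+}1)$-th resolving; ``tracking the normalization'' gestures at this but is not the argument.

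Your accounting of the final constants also reveals the gap. Theorem~\ref{thm:cr}'s decomposition has a single initialization term $\epsilon_0^{\tilde v}$, so your derivation would naturally give $1/\sqrt{T_0}$, and the claim that an ``extra factor of $2$'' appears because the initial value-error ``uses both players' \ac{MCCFR} guarantees simultaneously'' is not a real mechanism. In the paper the $2/\sqrt{T_0}$ arises because the pre-play solve contributes \emph{two} separately bounded quantities --- its own exploitability $\epsilon^E_0$ (Lemma~\ref{lem:MCCFR_expl}) and the value-approximation error $\epsilon^A_0$ at the first resolving boundary (Lemma~\ref{lem:MCCFR_values}) --- each dominated by the same bound, giving $\epsilon^E_0+\epsilon^A_0\le 2\epsilon^E_0$. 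The $(2N-1)/\sqrt{T_R}$ term then comes from the analogous pairing $\tilde\epsilon^A_n\le\tilde\epsilon^R_n$ for the intermediate resolvings plus the final $\tilde\epsilon^R_N$, which does match your count. To repair your proof, replace the Theorem~\ref{thm:cr} decomposition with the inductive application of Lemma~\ref{lem:resolving_lemma} and prove (or cite) the full-regret bound of Lemma~\ref{lem:MCCFR_values}.
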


The proof is presented in the appendix. Essentially, it inductively combines the \ac{OS} bound (Lemma~\ref{lem:MCCFR_expl}) with the guarantees available for resolving games in order to compute the overall exploitability bound.\footnote{Note that Theorem~\ref{thm:mccr} isn't a straightforward corollary of Theorem~\ref{thm:cr}, since calculating the numbers $\epsilon_n^{\tilde v}$ does require non-trivial work. In particular, $\bar \sigma^T$ from the $n$-th resolving isn't the same as $\sigma_1^{n*}, CBR(\sigma_1^{n*})$ and the simplifying assumption about $\tilde v$ is \emph{not} equivalent to assuming that $\epsilon_n^{\tilde v} = 0$.}
For specific domains, a much tighter bound can be obtained by going through our proof in more detail and noting that the size of subgames decreases exponentially as the game progresses (whereas the proof assumes that it remains constant). In effect, this would replace the $N$ in the bound above by a small constant.

\subsection{Practical Modifications}\label{sec:hacks}

Above, we describe an abstract version of \ac{MCCR} optimized for clarity and ease of subsequent theoretical analysis.
We now describe the version of \ac{MCCR} that we implemented in practice. The code used for the experiments is available online at \url{https://github.com/aicenter/gtlibrary-java/tree/mccr}.

\subsubsection{Incremental Tree-Building} A massive reduction in the memory requirements can be achieved by building the game tree incrementally, similarly to Monte Carlo Tree Search (\acs{MCTS})~\cite{browne2012survey}. We start with a tree that only contains the root.
When an information set is reached that is not in memory, it is added to it and a playout policy (e.g., uniformly random) is used for the remainder of the sample.
In playout, information sets are not added to memory. Only the regrets in information sets stored in the memory are updated.

\subsubsection{Counterfactual Value Estimation}\label{sec:cfv-estimation}

Since the computation of the~exact counterfactual values of the average strategy needed by $\widetilde G(S,\sigma,\cdot)$ requires the traversal of the whole game tree, we have to work with their estimates instead.
To this end, our \ac{MCCFR} additionally computes the opponent's \emph{sampled counterfactual values}
\[ \tilde v_2^{\sigma^t}(I) :=
\frac{1}{\pi^{\sigma^{t,\epsilon}}(z)}  \pi^{\sigma^t}_{-2}(h)  \pi^{\sigma^t} \! (z|h)  u_2(z)
.\]
It is not possible to compute the exact counterfactual value of the average strategy just from the values of the current strategies. Once the $T$ iterations are complete, the standard way of estimating the counterfactual values of $\bar \sigma^T$ is using \emph{arithmetic averages}
\begin{equation}\label{eq:ord-avg-sampl-vals}
\tilde v(I) := \frac{1}{T}\sum \tilde v_2^{\sigma^t}(I).
\end{equation}
However, we have observed better results with \emph{weighted averages}
\begin{equation}\label{eq:weigh-avg-sampl-vals}
\tilde v(h) := \sum_t \tilde \pi^{\sigma^t}\!\!(h) \, v_2^{\sigma^t}\!(h) \ / \ \sum_t \tilde \pi^{\sigma^t}\!\!(h).
\end{equation}
The stability and accuracy of these estimates is experimentally evaluated in Section~\ref{sec:experiments} and further analyzed in Appendix~\ref{sec:cf_values}. We also propose an unbiased estimate of the exact values computed from the already executed samples, but its variance makes it impractical.

\subsubsection{Root Distribution of Gadgets}
As in~\cite{DeepStack}, we use the information about opponent's strategy from previous resolving when constructing the gadget game. Rather than being proportional to $\pi_{-2}(h)$, the root probabilities are proportional to $\pi_{-2}(h)(\pi_2(h)+\epsilon)$. This modification is sound as long as $\epsilon>0$.

\subsubsection{Custom Sampling Scheme}\label{sec:custom_sampling}
To improve the efficiency of resolving by \ac{MCCFR}, we use a custom sampling scheme which differs from \ac{OS} in two aspects.
First, we modify the above sampling scheme such that with probability 90\% we sample a history that belongs to the current information set $I$. This allows us to focus on the most relevant part of the game.
Second, whenever $\tilde h \in \tilde S$ is visited by \ac{MCCFR}, we sample both actions (T and F). This increases the transparency of the algorithm, since all iterations now ``do a similar amount of work'' (rather than some terminating immediately).
These modifications are theoretically sound, since the resulting sampling scheme still satisfies the assumptions of the general \ac{MCCFR} bound from~\cite{lanctot_thesis}.

\subsubsection{Keeping the Data between Successive Resolvings}
Both in pre-play and subsequent resolvings, \ac{MCCFR} operates on successively smaller and smaller subsets of the game tree. In particular, we don't need to start each resolving from scratch, but we can re-use the previous computations.
To do this, we initialize each resolving \ac{MCCFR} with the \ac{MCCFR} variables (regrets, average strategy and the corresponding value estimates) from the previous resolving (resp. pre-play). In practice this is accomplished by simply not resetting the data from the previous \ac{MCCFR}.
While not being backed up by theory, this approach worked better in most practical scenarios, and we believe it can be made correct with the use of warm-starting~\cite{warm_start} of the resolving gadget.


\section{Experimental evaluation} \label{sec:experiments}

After brief introduction of competing methods and explaining the used methodology, we focus on evaluating the alternative methods to estimate the counterfactual values required for resolving during \ac{MCCFR}.
Next, we evaluate how quickly and reliably these values can be estimated in different domains, since these values are crucial for good performance of \ac{MCCR}. Finally, we compare exploitability and head-to-head performance to competing methods.

\subsection{Competing Methods}
\noindent \textbf{Information-Set Monte Carlo Tree Search.\ }
\acs{IS-MCTS}~\cite{ISMCTS} runs \ac{MCTS} samples as in a perfect information game, but computes statistics for the whole information set and not individual states.
When initiated from a non-empty match history, it starts samples uniformly
from the states in the current information set.
We use two selection functions: Upper Confidence bound applied to Trees (\acs{UCT})~\cite{kocsis2006bandit} and Regret Matching (\acs{RM})~\cite{hart2001reinforcement}. We use the same settings as~in~\cite{OOS}: UCT constant 2x the maximal game outcome, and RM with exploration 0.2.
In the following, we refer to \ac{IS-MCTS} with the corresponding selection function by only \texttt{UCT} or \texttt{RM}.

\noindent \textbf{Online Outcome Sampling.\ }
\ac{OOS}~\cite{OOS} is an online search variant of \ac{MCCFR}.
\ac{MCCFR} samples from the root of the tree and needs to pre-build the whole game tree.
\ac{OOS} has two primary distinctions from \ac{MCCFR}: it builds its search tree incrementally and
it can bias samples with some probability to any specific parts of the game tree.
This is used to target the information sets (\acs{OOS-IST}) or the public states (\acs{OOS-PST})
where the players act during a match.

We do not run OOS-PST on domain of IIGS, due to non-trivial biasing of sampling towards current public state.


We further compare to \ac{MCCFR} with incremental tree building and the random player denoted RND.
\subsection{Computing Exploitability}\label{sec:perf-eval}

Since the online game playing algorithms do not compute the strategy for the whole game, evaluating exploitability of the strategies they play is more complicated. 
One approach, called brute-force in~\cite{OOS}, suggest ensuring that the online algorithm is executed in each information set in the game and combining the computed strategies. If the thinking time of the algorithm per move is $t$, it requires $O(t\dot|\mc I|)$ time to compute one combined strategy and multiple runs are required to achieve statistical significance for randomized algorithms. While this is prohibitively expensive even for the smaller games used in our evaluation, computing the strategy for each public state, instead of each information set is already feasible. We use this approach, however, it means we have to disable the additional targeting of the current information set in the resolving gadget proposed in Section~\ref{sec:custom_sampling}.

There are two options how to deal with the variance in the combined strategies in different runs of the algorithm in order to compute the exploitability of the real strategy realized by the algorithm. The pessimistic option is to compute the exploitability of each combined strategy and average the results. This assumes the opponent knows the random numbers used by the algorithm for sampling in each resolving. A more realistic option is to average the combined strategies from different runs into an \emph{expected strategy} $\dbbar{\sigma}$ and compute its exploitability. We use the latter.

\subsection{Domains}
For direct comparison with prior work, we use same domains as~in~\cite{OOS} with parametrizations noted in parentheses: Imperfect Information Goofspiel \texttt{IIGS(N)}, Liar's Dice \texttt{LD(D1,D2,F)} and Generic Poker \texttt{GP(T,C,R,B)}. We add Phantom Tic-Tac-Toe \texttt{PTTT} to also have a domain with thick public states, and use Biased Rock Paper Scissors \texttt{B-RPS}~for small experiments. The detailed rules are in Appendix~\ref{sec:rules} with the sizes of the domains in Table~\ref{tab:sizes}.
We use small and large variants of the domains based on their parametrization. Note that large variants are about $10^{4}$ up to $10^{15}$ times larger than the smaller ones.

\subsection{Results}
\setcounter{figure}{1}
\begin{figure*}[t]
    \includegraphics[width=0.4\linewidth]{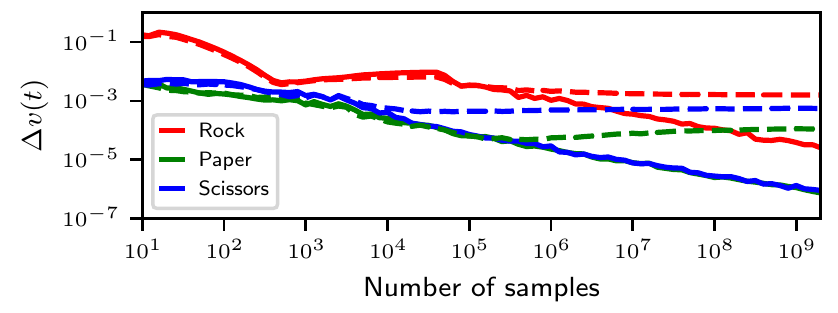}\hfill
    \includegraphics[width=0.59\linewidth]{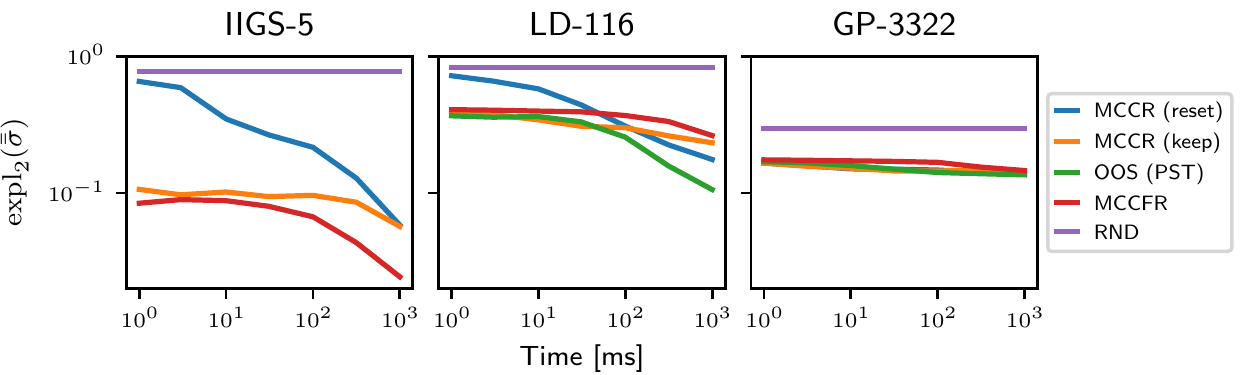}
    \caption{
    Left -- Estimation error of the arithmetic (dashed lines) and weighted averages (solid lines) of action values in \texttt{B-RPS}. Right -- Exploitability of $\dbbar{\sigma}$ as a function of the resolving time. All algorithms have pre-play of 300ms.}\label{fig:expl}
\end{figure*}

\setcounter{figure}{2}
\begin{figure}[t]
    \includegraphics[width=0.9\linewidth]{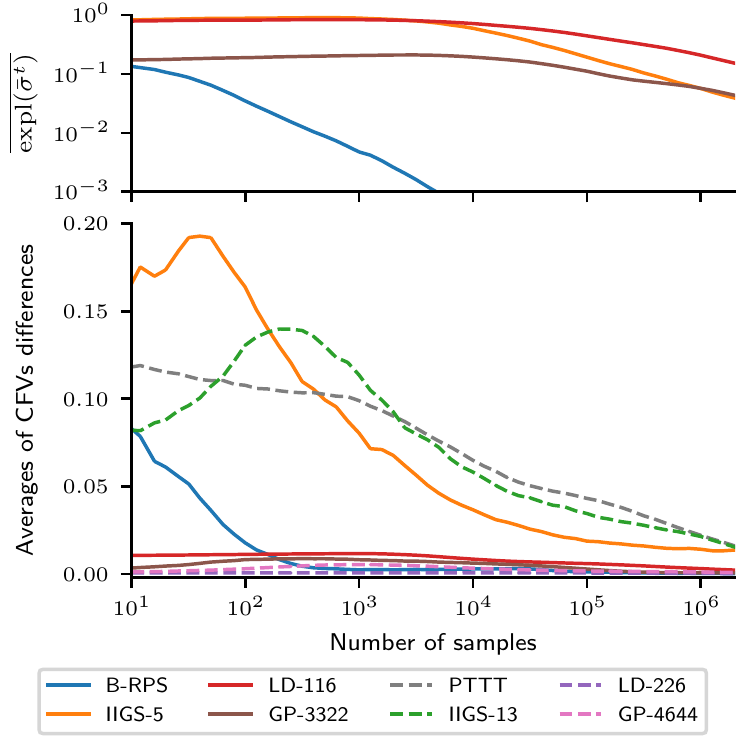}
    \caption{A comparison of exploitability (top) with ``CFV instability'' (bottom) in different domains. For $t=10^7$, the differences are 0 by definition.}\label{fig:cfv_stability}
\end{figure}

\setcounter{figure}{3}

\subsubsection{Averaging of Sampled \ac{CFV}s}\label{sec:experiment-cfv-averaging}

As mentioned in Section~\ref{sec:cfv-estimation}, computing the  exact counterfactual values of the average strategy $\bar\sigma^T$ is often prohibitive, and we replace it by using the arithmetic or weighted averages instead.
To compare the two approaches, we run \ac{MCCFR} on the B-RPS domain (which only has a single \ac{NE} $\sigma*$ to which $\bar\sigma^T$ converges) and measure the distance $\Delta v(t)$ between the estimates and the correct action-values $v_1^{\sigma*}(\textnormal{root},a)$.
%
In~Figure~\ref{fig:expl} (left), the weighted averages converge to the correct values with increasing number of samples, while the arithmetic averages eventually start diverging.
The weighted averages are more accurate (see Appendix, Figure~\ref{fig:cfv_comparison_domains} for comparison on each domain) and we will use them exclusively in following experiments.


\subsubsection{Stability of \ac{CFV}s}\label{sec:experiment-cfv-convergence}

To find a nearly optimal strategy when resolving, \ac{MCCR} first needs \ac{CFV}s of such a strategy (Lemma~\ref{lem:resolving_lemma}).
However, the \ac{MCCFR} resolver typically won't have enough time to find such $\bar \sigma^T$.
If \ac{MCCR} is to work, the \ac{CFV}s computed by \ac{MCCFR} need to be close to those of an approximate equilibrium \ac{CFV}s even though they correspond to an exploitable strategy.

We run \ac{MCCFR} in the root and focus on \ac{CFV}s in the public states where player 1 acts for the 2nd time (the gadget isn't needed for the first action, and thus neither are CFVs):
\begin{align*}\label{eq:omega}
    \Omega := \!\{ J \in \mc I_2^\textnormal{aug} \mid \exists S' \!\in \mc S, h \in S':  J \subset S' \& \textnormal{ pl. 1 played once in }h \}.
\end{align*}
Since there are multiple equilibria \ac{MCCFR} might converge to, we cannot measure the convergence exactly.
Instead, we measure the ``instability'' of \ac{CFV} estimates by saving $\tilde v_2^t(J)$ for $t\leq T$, tracking how $\Delta_t(J) := | \tilde v_2^t(J) - \tilde v_2^T(J) | $ changes over time,
and aggregating it into $\frac{1}{|\Omega|}\sum_J \Delta_t(J)$.
We repeat this experiment with 100 different \ac{MCCFR} seeds and measure the expectation of the aggregates and, for the small domains, the expected exploitability of $\bar \sigma^t$.
If the resulting ``instability'' is close to zero after $10^5$ samples (our typical time budget), we consider CFVs to be sufficiently stable.

Figure~\ref{fig:cfv_stability} confirms that in small domains (LD, GP), \ac{CFV}s stabilize long before the exploitability of the strategy gets low.
The error still decreases in larger games (GS, PTTT), but at a slow rate.

\subsubsection{Comparison of Exploitability with Other Algorithms} \label{sec:expl-other-algs}
We compare the exploitability of \ac{MCCR} to \ac{OOS-PST} and \ac{MCCFR}, and include random player for reference.
We do not include \ac{OOS-IST}, whose exploitability is comparable to that of \ac{OOS-PST}~\cite{OOS}.
For an evaluation of \ac{IS-MCTS}'s exploitability (which is very high, with the exception of poker) we refer the reader to~\cite{OOS,ISMCTS}.

Figure~\ref{fig:expl} (right) confirms that for all algorithms, the exploitability decreases with the increased time per move. \ac{MCCR} is better than \ac{MCCFR} on LD and worse on IIGS. The keep variant of \ac{MCCR} is initially less exploitable than the reset variant, but improves slower. This suggests the keep variant could be improved.

\subsubsection{Influence of the Exploration Rate}

One of \ac{MCCR}'s parameters is the exploration rate $\epsilon$ of its \ac{MCCFR} resolver. When measuring the exploitability of \ac{MCCR} we observed no noteworthy influence of $\epsilon$ (for $\epsilon = 0.2,\,0.4,\,0.6,\,0.8$,  across all of the evaluated domains).

\subsubsection{Head-to-head Performance}
For each pair of algorithms, thousands of matches have been played on each domain, alternating positions of player 1 and 2.
In the smaller (larger) domains, players have $0.3s$ (5s) of pre-play computation and $0.1s$ (1s) per move. Table~\ref{tab:matches} summarizes the~results as percentages of the maximum domain payoff.

Note that the results of the matches are not necessarily transitive, since they are not representative of the algorithms' exploitability. When computationally tractable, the previous experiment~\ref{sec:expl-other-algs} is therefore a~much better comparison of an algorithm's performance.

Both variants of MCCR significantly outperform the random opponent in all games. With the exception of PTTT, they are also at least as good as the MCCFR ``baseline''.
This is because public states in PTTT represent the number of moves made, which results in a non-branching public tree and resolving games occupy the entire level as in the original game.
\ac{MCCR} is better than \ac{OOS-PST} in LD and GP, and better than \ac{OOS-IST} in large IIGS.
MCCR is worse than IS-MCTS on all games with the exception of small LD.
However, this does not necessarily mean that MCCR's exploitability is higher than for IS-MCTS in the larger domains, MCCR only fails to find the strategy that would exploit IS-MCTS.

\begin{table*}[t!]
    \scriptsize
    \setlength\tabcolsep{1.5pt}
    \begin{subtable}[t]{.5\linewidth}%
        \centering%
        \scalebox{1.02}{\begin{tabular}{lrrrrrrl}
\toprule
\textbf{IIGS-5} &                                                              MCCR (reset) &                                                                     MCCFR &                                           OOS-PST &                                                  OOS-IST &                                                       RM &                                                      UCT &                                                     RND \\
\midrule
MCCR (keep)  &  {\color{insignwin} \nobreakspace\nobreakspace0.4 $\pm$ \nobreakspace1.2} &  {\color{insignwin} \nobreakspace\nobreakspace0.6 $\pm$ \nobreakspace1.3} & \multicolumn{1}{c}{--}  &  {\color{lose} \nobreakspace-2.6 $\pm$ \nobreakspace1.3} &  {\color{lose} \nobreakspace-2.8 $\pm$ \nobreakspace1.3} &  {\color{lose} \nobreakspace-6.5 $\pm$ \nobreakspace1.2} &  {\color{win} \nobreakspace51.2 $\pm$ \nobreakspace1.2} \\
MCCR (reset) &   &  {\color{insignlose} \nobreakspace-0.8 $\pm$ \nobreakspace1.2} & \multicolumn{1}{c}{--}  &  {\color{lose} \nobreakspace-2.5 $\pm$ \nobreakspace1.2} &  {\color{lose} \nobreakspace-5.5 $\pm$ \nobreakspace1.2} &  {\color{lose} \nobreakspace-8.1 $\pm$ \nobreakspace1.2} &  {\color{win} \nobreakspace49.1 $\pm$ \nobreakspace1.2} \\
MCCFR        &   &   & \multicolumn{1}{c}{--}  &  {\color{lose} \nobreakspace-1.3 $\pm$ \nobreakspace1.3} &  {\color{lose} \nobreakspace-2.7 $\pm$ \nobreakspace1.3} &  {\color{lose} \nobreakspace-5.6 $\pm$ \nobreakspace1.2} &  {\color{win} \nobreakspace48.5 $\pm$ \nobreakspace1.2} \\
OOS-PST      &   &   &   & \multicolumn{1}{c}{--}  &  \multicolumn{1}{c}{--} & \multicolumn{1}{c}{--}  & \multicolumn{1}{c}{--}  \\
OOS-IST      &   &   &  &   &  {\color{lose} \nobreakspace-2.0 $\pm$ \nobreakspace1.3} &  {\color{lose} \nobreakspace-5.2 $\pm$ \nobreakspace1.2} &  {\color{win} \nobreakspace54.5 $\pm$ \nobreakspace1.1} \\
RM           &   &   &   &   &   &  {\color{lose} -16.6 $\pm$ \nobreakspace1.2} &  {\color{win} \nobreakspace67.8 $\pm$ \nobreakspace1.0} \\
UCT          &   &   &   &   &   &   &  {\color{win} \nobreakspace70.0 $\pm$ \nobreakspace1.0} \\
\bottomrule
\end{tabular}

}
    \end{subtable}%
    \begin{subtable}[t]{.5\linewidth}
        \centering
        \scalebox{1.02}{\begin{tabular}{lrrrrrrl}
\toprule
\textbf{IIGS-13} &                                 MCCR (reset) &                                                   MCCFR &                                           OOS-PST &                                                                   OOS-IST &                                           RM &                                          UCT &                                                     RND \\
\midrule
MCCR (keep)  &  {\color{lose} -18.8 $\pm$ \nobreakspace5.6} &  {\color{win} \nobreakspace12.8 $\pm$ \nobreakspace5.7} & \multicolumn{1}{c}{--}  &  {\color{lose} \nobreakspace-7.3 $\pm$ \nobreakspace5.7} &  {\color{lose} -56.4 $\pm$ \nobreakspace4.7} &  {\color{lose} -69.1 $\pm$ \nobreakspace4.1} &  {\color{win} \nobreakspace43.9 $\pm$ \nobreakspace5.1} \\
MCCR (reset) &   &  {\color{win} \nobreakspace24.4 $\pm$ \nobreakspace5.5} & \multicolumn{1}{c}{--}  &  {\color{win} \nobreakspace\nobreakspace4.9 $\pm$ \nobreakspace2.6} &  {\color{lose} -35.6 $\pm$ \nobreakspace5.3} &  {\color{lose} -56.0 $\pm$ \nobreakspace4.7} &  {\color{win} \nobreakspace55.6 $\pm$ \nobreakspace4.7} \\
MCCFR        &   &   &  \multicolumn{1}{c}{--} &  {\color{lose} -22.8 $\pm$ \nobreakspace5.6} &  {\color{lose} -59.9 $\pm$ \nobreakspace4.6} &  {\color{lose} -75.1 $\pm$ \nobreakspace3.7} &  {\color{win} \nobreakspace37.8 $\pm$ \nobreakspace5.3} \\
OOS-PST      &   &   &   & \multicolumn{1}{c}{--}  & \multicolumn{1}{c}{--}  & \multicolumn{1}{c}{--}  & \multicolumn{1}{c}{--}  \\
OOS-IST      &   &   &   &   &  {\color{lose} -44.4 $\pm$ \nobreakspace5.1} &  {\color{lose} -61.2 $\pm$ \nobreakspace4.5} &  {\color{win} \nobreakspace58.2 $\pm$ \nobreakspace4.6} \\
RM           &   &   &   &   &   &  {\color{lose} -22.8 $\pm$ \nobreakspace5.6} &  {\color{win} \nobreakspace82.3 $\pm$ \nobreakspace3.2} \\
UCT          &   &   &   &   &   &   &  {\color{win} \nobreakspace91.2 $\pm$ \nobreakspace2.3} \\
\bottomrule
\end{tabular}

}
    \end{subtable}\par\bigskip
    \begin{subtable}[t]{.5\linewidth}%
        \centering%
        \scalebox{1.02}{\begin{tabular}{lrrrrrrl}
\toprule
\textbf{LD-116} &                                                   MCCR (reset) &                                                               MCCFR &                                                                   OOS-PST &                                                  OOS-IST &                                                             RM &                                                                       UCT &                                                     RND \\
\midrule
MCCR (keep)  &  {\color{insignlose} \nobreakspace-1.4 $\pm$ \nobreakspace2.0} &  {\color{win} \nobreakspace\nobreakspace9.7 $\pm$ \nobreakspace2.0} &  {\color{win} \nobreakspace\nobreakspace5.2 $\pm$ \nobreakspace2.0} &  {\color{lose} \nobreakspace-4.1 $\pm$ \nobreakspace2.0} &  {\color{win} \nobreakspace2.6 $\pm$ \nobreakspace1.0} &  {\color{win} \nobreakspace\nobreakspace5.6 $\pm$ \nobreakspace2.0} &  {\color{win} \nobreakspace60.9 $\pm$ \nobreakspace1.6} \\
MCCR (reset) &   &  {\color{win} \nobreakspace\nobreakspace11.6 $\pm$ \nobreakspace2.0} &  {\color{win} \nobreakspace\nobreakspace10.1 $\pm$ \nobreakspace2.0} &  {\color{lose} \nobreakspace-3.9 $\pm$ \nobreakspace2.0} &  {\color{lose} \nobreakspace-5.5 $\pm$ \nobreakspace2.0} &  {\color{win} \nobreakspace\nobreakspace5.3 $\pm$ \nobreakspace2.0} &  {\color{win} \nobreakspace60.5 $\pm$ \nobreakspace1.6} \\
MCCFR        &   &   &  {\color{lose} \nobreakspace-6.8 $\pm$ \nobreakspace2.0} &  {\color{lose} \nobreakspace-8.7 $\pm$ \nobreakspace2.0} &  {\color{lose} \nobreakspace-4.7 $\pm$ \nobreakspace2.0} &  {\color{win} \nobreakspace\nobreakspace1.9 $\pm$ \nobreakspace1.0} &  {\color{win} \nobreakspace54.0 $\pm$ \nobreakspace1.7} \\
OOS-PST      &   &   &   &  {\color{lose} \nobreakspace-4.3 $\pm$ \nobreakspace2.0} &  {\color{lose} \nobreakspace-3.0 $\pm$ \nobreakspace2.0} &  {\color{win} \nobreakspace\nobreakspace6.5 $\pm$ \nobreakspace2.0} &  {\color{win} \nobreakspace60.3 $\pm$ \nobreakspace1.6} \\
OOS-IST      &   &   &   &   &  {\color{lose} \nobreakspace-1.7 $\pm$ \nobreakspace1.0} &  {\color{win} \nobreakspace\nobreakspace5.2 $\pm$ \nobreakspace2.0} &  {\color{win} \nobreakspace64.8 $\pm$ \nobreakspace1.6} \\
RM           &   &   &   &   &   &  {\color{win} \nobreakspace\nobreakspace5.2 $\pm$ \nobreakspace2.0} &  {\color{win} \nobreakspace66.1 $\pm$ \nobreakspace1.5} \\
UCT          &   &   &   &   &   &   &  {\color{win} \nobreakspace65.4 $\pm$ \nobreakspace1.5} \\
\bottomrule
\end{tabular}

}
    \end{subtable}%
    \begin{subtable}[t]{.5\linewidth}
        \centering
        \scalebox{1.02}{\begin{tabular}{lrrrrrrl}
\toprule
\textbf{LD-226} &                                 MCCR (reset) &                                                   MCCFR &                                                                   OOS-PST &                                      OOS-IST &                                           RM &                                          UCT &                                                     RND \\
\midrule
MCCR (keep)  &  {\color{win} 6.2 $\pm$ \nobreakspace5.4} &  {\color{win} \nobreakspace45.3 $\pm$ \nobreakspace5.7} &  {\color{win} \nobreakspace\nobreakspace46.1 $\pm$ \nobreakspace5.7} &  {\color{lose} -22.7 $\pm$ \nobreakspace5.9} &  {\color{lose} -33.6 $\pm$ \nobreakspace5.7} &  {\color{lose} -33.4 $\pm$ \nobreakspace5.7} &  {\color{win} \nobreakspace76.2 $\pm$ \nobreakspace4.2} \\
MCCR (reset) &   &  {\color{win} \nobreakspace37.8 $\pm$ \nobreakspace5.4} &  {\color{win} \nobreakspace\nobreakspace44.2 $\pm$ \nobreakspace5.7} &  {\color{lose} -31.2 $\pm$ \nobreakspace5.7} &  {\color{lose} -39.8 $\pm$ \nobreakspace5.4} &  {\color{lose} -44.8 $\pm$ \nobreakspace5.2} &  {\color{win} \nobreakspace81.6 $\pm$ \nobreakspace4.5} \\
MCCFR        &   &   &  {\color{lose} -5.4 $\pm$ \nobreakspace4.6} &  {\color{lose} -55.7 $\pm$ \nobreakspace4.5} &  {\color{lose} -49.3 $\pm$ \nobreakspace4.6} &  {\color{lose} -47.8 $\pm$ \nobreakspace5.1} &  {\color{win} \nobreakspace45.8 $\pm$ \nobreakspace5.2} \\
OOS-PST      &   &   &   &  {\color{lose} -53.6 $\pm$ \nobreakspace5.3} &  {\color{lose} -51.5 $\pm$ \nobreakspace4.9} &  {\color{lose} -46.4 $\pm$ \nobreakspace5.1} &  {\color{win} \nobreakspace49.6 $\pm$ \nobreakspace4.1} \\
OOS-IST      &   &   &   &   &  {\color{lose} -12.0 $\pm$ \nobreakspace5.6} &  {\color{lose} -22.5 $\pm$ \nobreakspace5.6} &  {\color{win} \nobreakspace83.8 $\pm$ \nobreakspace3.8} \\
RM           &   &   &   &   &   &  {\color{lose} -11.6 $\pm$ \nobreakspace5.7} &  {\color{win} \nobreakspace79.7 $\pm$ \nobreakspace3.5} \\
UCT          &   &   &   &   &   &   &  {\color{win} \nobreakspace75.4 $\pm$ \nobreakspace3.8} \\
\bottomrule
\end{tabular}
}
    \end{subtable}\par\bigskip
    \begin{subtable}[t]{.5\linewidth}%
        \centering%
        \scalebox{1.02}{\begin{tabular}{lrrrrrrl}
\toprule
\textbf{GP-3322} &                                                              MCCR (reset) &                                                               MCCFR &                                                                   OOS-PST &                                                                   OOS-IST &                                                       RM &                                                      UCT &                                                                 RND \\
\midrule
MCCR (keep)  &  {\color{insignwin} \nobreakspace\nobreakspace0.2 $\pm$ \nobreakspace0.4} &  {\color{win} \nobreakspace\nobreakspace2.2 $\pm$ \nobreakspace0.5} &  {\color{win} \nobreakspace\nobreakspace1.7 $\pm$ \nobreakspace0.5} &  {\color{win} \nobreakspace\nobreakspace0.4 $\pm$ \nobreakspace0.2} &  {\color{lose} \nobreakspace-0.5 $\pm$ \nobreakspace0.4} &  {\color{lose} \nobreakspace-1.5 $\pm$ \nobreakspace0.4} &  {\color{win} \nobreakspace\nobreakspace5.9 $\pm$ \nobreakspace0.5} \\
MCCR (reset) &   &  {\color{win} \nobreakspace\nobreakspace0.7 $\pm$ \nobreakspace0.4} &  {\color{win} \nobreakspace\nobreakspace0.4 $\pm$ \nobreakspace0.2} &  {\color{lose} \nobreakspace-0.3 $\pm$ \nobreakspace0.2} &  {\color{lose} \nobreakspace-0.5 $\pm$ \nobreakspace0.4} &  {\color{lose} \nobreakspace-1.0 $\pm$ \nobreakspace0.3} &  {\color{win} \nobreakspace\nobreakspace5.5 $\pm$ \nobreakspace0.4} \\
MCCFR        &   &   &  {\color{lose} \nobreakspace-1.9 $\pm$ \nobreakspace0.6} &  {\color{lose} \nobreakspace-2.7 $\pm$ \nobreakspace0.6} &  {\color{lose} \nobreakspace-3.6 $\pm$ \nobreakspace0.5} &  {\color{lose} \nobreakspace-3.3 $\pm$ \nobreakspace0.5} &  {\color{win} \nobreakspace\nobreakspace6.0 $\pm$ \nobreakspace0.6} \\
OOS-PST      &   &   &   &  {\color{lose} \nobreakspace-1.0 $\pm$ \nobreakspace0.9} &  {\color{lose} \nobreakspace-2.1 $\pm$ \nobreakspace0.5} &  {\color{lose} \nobreakspace-2.8 $\pm$ \nobreakspace0.4} &  {\color{win} \nobreakspace\nobreakspace7.4 $\pm$ \nobreakspace0.6} \\
OOS-IST      &   &   &   &   &  {\color{lose} \nobreakspace-1.3 $\pm$ \nobreakspace0.5} &  {\color{lose} \nobreakspace-2.1 $\pm$ \nobreakspace0.4} &  {\color{win} \nobreakspace\nobreakspace7.4 $\pm$ \nobreakspace0.6} \\
RM           &   &   &   &   &   &  {\color{lose} \nobreakspace-1.2 $\pm$ \nobreakspace0.4} &  {\color{win} \nobreakspace\nobreakspace7.8 $\pm$ \nobreakspace0.5} \\
UCT          &   &   &   &   &   &   &  {\color{win} \nobreakspace\nobreakspace6.3 $\pm$ \nobreakspace0.4} \\
\bottomrule
\end{tabular}

}
    \end{subtable}%
    \begin{subtable}[t]{.5\linewidth}
        \centering
        \scalebox{1.02}{\begin{tabular}{lrrrrrrl}
\toprule
\textbf{GP-4644} &                                                        MCCR (reset) &                                                               MCCFR &                                                        OOS-PST &                                                  OOS-IST &                                                             RM &                                                            UCT &                                                                 RND \\
\midrule
MCCR (keep)  &  {\color{win} \nobreakspace\nobreakspace1.6 $\pm$ \nobreakspace1.2} &  {\color{win} \nobreakspace\nobreakspace7.3 $\pm$ \nobreakspace2.6} &  {\color{win} \nobreakspace14.2 $\pm$ \nobreakspace2.5} &  {\color{lose} \nobreakspace-3.4 $\pm$ \nobreakspace2.3} &  {\color{lose} \nobreakspace-4.1 $\pm$ \nobreakspace1.8} &  {\color{lose} \nobreakspace-6.9 $\pm$ \nobreakspace1.5} &  {\color{win} \nobreakspace19.3 $\pm$ \nobreakspace2.6} \\
MCCR (reset) &   &  {\color{win} \nobreakspace9.5 $\pm$ \nobreakspace2.0} &  {\color{win} \nobreakspace11.9 $\pm$ \nobreakspace2.0} &  {\color{lose} \nobreakspace-3.5 $\pm$ \nobreakspace1.8} &  {\color{lose} \nobreakspace-3.0 $\pm$ \nobreakspace1.5} &  {\color{lose} \nobreakspace-2.5 $\pm$ \nobreakspace1.3} &  {\color{win} \nobreakspace15.8 $\pm$ \nobreakspace2.2} \\
MCCFR        &   &   &  {\color{lose} \nobreakspace-8.7 $\pm$ \nobreakspace3.1} &  {\color{lose} -13.3 $\pm$ \nobreakspace2.9} &  {\color{lose} \nobreakspace-9.6 $\pm$ \nobreakspace2.3} &  {\color{lose} \nobreakspace-6.8 $\pm$ \nobreakspace1.9} &  {\color{win} \nobreakspace12.0 $\pm$ \nobreakspace3.0} \\
OOS-PST      &   &   &   &  {\color{lose} \nobreakspace-8.1 $\pm$ \nobreakspace3.0} &  {\color{lose} \nobreakspace-8.9 $\pm$ \nobreakspace2.4} &  {\color{lose} \nobreakspace-5.0 $\pm$ \nobreakspace2.0} &  {\color{win} \nobreakspace11.8 $\pm$ \nobreakspace3.1} \\
OOS-IST      &   &   &   &   &  {\color{lose} \nobreakspace-2.1 $\pm$ \nobreakspace1.8} &  {\color{lose} \nobreakspace-1.6 $\pm$ \nobreakspace1.2} &  {\color{win} \nobreakspace20.4 $\pm$ \nobreakspace2.9} \\
RM           &   &   &   &   &   &  {\color{insignlose} \nobreakspace-0.3 $\pm$ \nobreakspace1.1} &  {\color{win} \nobreakspace20.5 $\pm$ \nobreakspace2.3} \\
UCT          &   &   &   &   &   &   &  {\color{win} \nobreakspace17.6 $\pm$ \nobreakspace2.0} \\
\bottomrule
\end{tabular}

}
    \end{subtable}\par\bigskip
    \begin{subtable}[t]{.5\linewidth}%
        \centering%
        \begin{tabular}{l}

        \end{tabular}
    \end{subtable}%
    \begin{subtable}[t]{.5\linewidth}
        \centering
        \scalebox{1.02}{\begin{tabular}{lrrrrrrl}
\toprule
\textbf{PTTT} &                                            MCCR (reset) &                                                          MCCFR &                                                                   OOS-PST &                                                        OOS-IST &                                                             RM &                                                                 UCT &                                                     RND \\
\midrule
MCCR (keep)  &  {\color{win} \nobreakspace17.7 $\pm$ \nobreakspace3.8} &  {\color{lose} \nobreakspace-1.1 $\pm$ \nobreakspace0.9} &  {\color{lose} \nobreakspace-1.8 $\pm$ \nobreakspace1.6} &  {\color{lose} \nobreakspace-5.0 $\pm$ \nobreakspace3.7} &  {\color{lose} \nobreakspace-6.9 $\pm$ \nobreakspace3.7} &  {\color{lose} \nobreakspace-6.2 $\pm$ \nobreakspace3.7} &  {\color{win} \nobreakspace25.5 $\pm$ \nobreakspace3.8} \\
MCCR (reset) &   &  {\color{lose} \nobreakspace-6.2 $\pm$ \nobreakspace3.8} &  {\color{lose} -9.8 $\pm$ \nobreakspace3.7} &  {\color{lose} -14.6 $\pm$ \nobreakspace3.6} &  {\color{lose} -20.9 $\pm$ \nobreakspace3.6} &  {\color{lose} -14.3 $\pm$ \nobreakspace3.7} &  {\color{win} \nobreakspace21.6 $\pm$ \nobreakspace3.7} \\
MCCFR        &   &   &  {\color{insignwin} \nobreakspace\nobreakspace0.1 $\pm$ \nobreakspace3.7} &  {\color{lose} \nobreakspace-2.1 $\pm$ \nobreakspace1.5} &  {\color{lose} \nobreakspace-5.2 $\pm$ \nobreakspace3.7} &  {\color{lose} \nobreakspace-4.0 $\pm$ \nobreakspace3.6} &  {\color{win} \nobreakspace27.9 $\pm$ \nobreakspace3.7} \\
OOS-PST      &   &   &   &  {\color{lose} \nobreakspace-5.6 $\pm$ \nobreakspace3.7} &  {\color{lose} \nobreakspace-5.7 $\pm$ \nobreakspace3.7} &  {\color{lose} \nobreakspace-5.2 $\pm$ \nobreakspace3.7} &  {\color{win} \nobreakspace29.4 $\pm$ \nobreakspace3.7} \\
OOS-IST      &   &   &   &   &  {\color{lose} \nobreakspace-3.5 $\pm$ \nobreakspace3.2} &  {\color{lose} \nobreakspace-3.9 $\pm$ \nobreakspace3.6} &  {\color{win} \nobreakspace35.1 $\pm$ \nobreakspace3.6} \\
RM           &   &   &   &   &   &  {\color{win} \nobreakspace\nobreakspace5.6 $\pm$ \nobreakspace3.6} &  {\color{win} \nobreakspace51.3 $\pm$ \nobreakspace3.3} \\
UCT          &   &   &   &   &   &   &  {\color{win} \nobreakspace52.6 $\pm$ \nobreakspace3.3} \\
\bottomrule
\end{tabular}

}
    \end{subtable}
    \rule{0pt}{3mm}
    \caption{Head-to-head performance. Positive numbers mean that the row algorithm is winning against the column algorithm by the given percentage of the maximum payoff in the domain. Gray numbers indicate the winner isn't statistically significant.}
    \label{tab:matches}
\end{table*}
\normalsize

\section{Conclusion}

We propose a~generalization of Continual Resolving from poker~\cite{DeepStack} to other extensive-form games. We show that the structure of the public tree may be more complex in general, and propose an extended version of the resolving gadget necessary to handle this complexity.
Furthermore, both players may play in the same public state (possibly multiple times), and we extend the definition of Continual Resolving to allow this case as well. We present a completely domain-independent version of the algorithm that can be applied to any EFG, is sufficiently robust to use variable resolving schemes, and can be coupled with different resolving games and algorithms (including classical CFR, depth-limited search utilizing neural networks, or other domain-specific heuristics). We show that the existing theory naturally translates to this generalized case.


We further introduce Monte Carlo CR as a specific instance of this abstract algorithm that uses \ac{MCCFR} as a resolver. It allows deploying continual resolving on any domain, without the need for expensive construction of evaluation functions.
\ac{MCCR} is theoretically sound as demonstrated by Theorem~\ref{thm:mccr}, constitutes an
improvement over \ac{MCCFR} in the online setting in terms head-to-head performance,
and doesn't have the restrictive memory requirements of \ac{OOS}. The experimental evaluation shows that \ac{MCCR} is very sensitive to the quality of its counterfactual value estimates. With good estimates, its worst-case performance (i.e. exploitability) improves faster than that of \ac{OOS}. In head-to-head matches \ac{MCCR} plays similarly to \ac{OOS}, but  it only outperforms \ac{IS-MCTS} in one of the smaller tested domains. Note however that the lack of theoretical guarantees of \ac{IS-MCTS} often translates into severe exploitability in practice~\cite{OOS}, and this cannot be alleviated by increasing \ac{IS-MCTS}'s computational resources~\cite{ISMCTS}.
In domains where \ac{MCCR}'s counterfactual value estimates are less precise, its exploitability still converges to zero, but at a slower rate than \ac{OOS}, and its head-to-head performance is noticeably weaker than that of both \ac{OOS} and \ac{IS-MCTS}.

In the future work, the quality of \ac{MCCR}'s estimates might be improved by variance reduction~\cite{var_reduction}, exploring ways of improving these estimates over the course of the game, or by finding an alternative source from which they can be obtained.
We also plan to test the hypothesis that there are classes of games where MCCR performs much better than the~competing algorithms (in particular, we suspect this might be true for small variants of turn-based computer games such as Heroes of Might \& Magic or Civilization).

\section*{Acknowledgments}
Access to computing and storage facilities owned by parties and projects contributing to the National Grid Infrastructure MetaCentrum provided under the program "Projects of Large Research, Development, and Innovations Infrastructures" (CESNET LM2015042), is greatly appreciated.
This work was supported by Czech science foundation grant no. 18-27483Y.



\bibliographystyle{ACM-Reference-Format}  
\newpage
\balance
\bibliography{refs}  


\begin{thebibliography}{00}


\ifx \showCODEN    \undefined \def \showCODEN     #1{\unskip}     \fi
\ifx \showDOI      \undefined \def \showDOI       #1{{\tt DOI:}\penalty0{#1}\ }
  \fi
\ifx \showISBNx    \undefined \def \showISBNx     #1{\unskip}     \fi
\ifx \showISBNxiii \undefined \def \showISBNxiii  #1{\unskip}     \fi
\ifx \showISSN     \undefined \def \showISSN      #1{\unskip}     \fi
\ifx \showLCCN     \undefined \def \showLCCN      #1{\unskip}     \fi
\ifx \shownote     \undefined \def \shownote      #1{#1}          \fi
\ifx \showarticletitle \undefined \def \showarticletitle #1{#1}   \fi
\ifx \showURL      \undefined \def \showURL       #1{#1}          \fi
\providecommand\bibfield[2]{#2}
\providecommand\bibinfo[2]{#2}
\providecommand\natexlab[1]{#1}
\providecommand\showeprint[2][]{arXiv:#2}

\bibitem[\protect\citeauthoryear{Blackwell et~al\mbox{.}}{Blackwell
  et~al\mbox{.}}{1956}]%
        {blackwell}
\bibfield{author}{\bibinfo{person}{David Blackwell} {and}
  \bibinfo{person}{others}.} \bibinfo{year}{1956}\natexlab{}.
\newblock \showarticletitle{An analog of the minimax theorem for vector
  payoffs.}
\newblock \bibinfo{journal}{{\it Pacific J. Math.}} \bibinfo{volume}{6},
  \bibinfo{number}{1} (\bibinfo{year}{1956}), \bibinfo{pages}{1--8}.
\newblock


\bibitem[\protect\citeauthoryear{Brown and Sandholm}{Brown and
  Sandholm}{2016}]%
        {warm_start}
\bibfield{author}{\bibinfo{person}{Noam Brown} {and} \bibinfo{person}{Tuomas
  Sandholm}.} \bibinfo{year}{2016}\natexlab{}.
\newblock \showarticletitle{Strategy-Based Warm Starting for Regret
  Minimization in Games.}. In \bibinfo{booktitle}{{\em AAAI}}.
  \bibinfo{pages}{432--438}.
\newblock


\bibitem[\protect\citeauthoryear{Brown and Sandholm}{Brown and
  Sandholm}{2017}]%
        {brown2017safe}
\bibfield{author}{\bibinfo{person}{Noam Brown} {and} \bibinfo{person}{Tuomas
  Sandholm}.} \bibinfo{year}{2017}\natexlab{}.
\newblock \showarticletitle{Safe and nested subgame solving for
  imperfect-information games}. In \bibinfo{booktitle}{{\em Advances in Neural
  Information Processing Systems}}. \bibinfo{pages}{689--699}.
\newblock


\bibitem[\protect\citeauthoryear{Brown, Sandholm, and Amos}{Brown
  et~al\mbox{.}}{2018}]%
        {brown2018depth}
\bibfield{author}{\bibinfo{person}{Noam Brown}, \bibinfo{person}{Tuomas
  Sandholm}, {and} \bibinfo{person}{Brandon Amos}.}
  \bibinfo{year}{2018}\natexlab{}.
\newblock \showarticletitle{Depth-Limited Solving for Imperfect-Information
  Games}.
\newblock \bibinfo{journal}{{\em arXiv preprint arXiv:1805.08195\/}}
  (\bibinfo{year}{2018}).
\newblock


\bibitem[\protect\citeauthoryear{Browne, Powley, Whitehouse, Lucas, Cowling,
  Rohlfshagen, Tavener, Perez, Samothrakis, and Colton}{Browne
  et~al\mbox{.}}{2012}]%
        {browne2012survey}
\bibfield{author}{\bibinfo{person}{Cameron~B Browne}, \bibinfo{person}{Edward
  Powley}, \bibinfo{person}{Daniel Whitehouse}, \bibinfo{person}{Simon~M
  Lucas}, \bibinfo{person}{Peter~I Cowling}, \bibinfo{person}{Philipp
  Rohlfshagen}, \bibinfo{person}{Stephen Tavener}, \bibinfo{person}{Diego
  Perez}, \bibinfo{person}{Spyridon Samothrakis}, {and} \bibinfo{person}{Simon
  Colton}.} \bibinfo{year}{2012}\natexlab{}.
\newblock \showarticletitle{A survey of {M}onte {C}arlo tree search methods}.
\newblock \bibinfo{journal}{{\em IEEE Transactions on Computational
  Intelligence and AI in games\/}} \bibinfo{volume}{4}, \bibinfo{number}{1}
  (\bibinfo{year}{2012}), \bibinfo{pages}{1--43}.
\newblock


\bibitem[\protect\citeauthoryear{Burch}{Burch}{2017}]%
        {Neil_thesis}
\bibfield{author}{\bibinfo{person}{Neil Burch}.}
  \bibinfo{year}{2017}\natexlab{}.
\newblock {\em \bibinfo{title}{Time and Space: Why Imperfect Information Games
  are Hard}}.
\newblock \bibinfo{thesistype}{Ph.D. Dissertation}. \bibinfo{school}{University
  of Alberta}.
\newblock


\bibitem[\protect\citeauthoryear{Burch, Johanson, and Bowling}{Burch
  et~al\mbox{.}}{2014}]%
        {CFR-D}
\bibfield{author}{\bibinfo{person}{Neil Burch}, \bibinfo{person}{Michael
  Johanson}, {and} \bibinfo{person}{Michael Bowling}.}
  \bibinfo{year}{2014}\natexlab{}.
\newblock \showarticletitle{Solving Imperfect Information Games Using
  Decomposition.}. In \bibinfo{booktitle}{{\em AAAI}}.
  \bibinfo{pages}{602--608}.
\newblock


\bibitem[\protect\citeauthoryear{Ciancarini and Favini}{Ciancarini and
  Favini}{2010}]%
        {Ciancarini2010}
\bibfield{author}{\bibinfo{person}{Paolo Ciancarini} {and}
  \bibinfo{person}{Gian~Piero Favini}.} \bibinfo{year}{2010}\natexlab{}.
\newblock \showarticletitle{{Monte Carlo tree search in Kriegspiel}}.
\newblock \bibinfo{journal}{{\em Artificial Intelligence\/}}
  \bibinfo{volume}{174} (\bibinfo{date}{July} \bibinfo{year}{2010}),
  \bibinfo{pages}{670--684}.
\newblock
Issue 11.
\showISSN{0004-3702}


\bibitem[\protect\citeauthoryear{Cowling, Powley, and Whitehouse}{Cowling
  et~al\mbox{.}}{2012}]%
        {cowling2012information}
\bibfield{author}{\bibinfo{person}{Peter~I Cowling}, \bibinfo{person}{Edward~J
  Powley}, {and} \bibinfo{person}{Daniel Whitehouse}.}
  \bibinfo{year}{2012}\natexlab{}.
\newblock \showarticletitle{Information set monte carlo tree search}.
\newblock \bibinfo{journal}{{\em IEEE Transactions on Computational
  Intelligence and AI in Games\/}} \bibinfo{volume}{4}, \bibinfo{number}{2}
  (\bibinfo{year}{2012}), \bibinfo{pages}{120--143}.
\newblock


\bibitem[\protect\citeauthoryear{Halpern}{Halpern}{1997}]%
        {halpern1997upperFrontier}
\bibfield{author}{\bibinfo{person}{Joseph~Y Halpern}.}
  \bibinfo{year}{1997}\natexlab{}.
\newblock \showarticletitle{On ambiguities in the interpretation of game
  trees}.
\newblock \bibinfo{journal}{{\em Games and Economic Behavior\/}}
  \bibinfo{volume}{20}, \bibinfo{number}{1} (\bibinfo{year}{1997}),
  \bibinfo{pages}{66--96}.
\newblock


\bibitem[\protect\citeauthoryear{Halpern and Pass}{Halpern and Pass}{2016}]%
        {halpern2016upperFrontier}
\bibfield{author}{\bibinfo{person}{Joseph~Y Halpern} {and}
  \bibinfo{person}{Rafael Pass}.} \bibinfo{year}{2016}\natexlab{}.
\newblock \showarticletitle{Sequential Equilibrium in Games of Imperfect
  Recall.}. In \bibinfo{booktitle}{{\em KR}}. \bibinfo{pages}{278--287}.
\newblock


\bibitem[\protect\citeauthoryear{Hart and Mas-Colell}{Hart and
  Mas-Colell}{2000}]%
        {hart2000simple}
\bibfield{author}{\bibinfo{person}{Sergiu Hart} {and} \bibinfo{person}{Andreu
  Mas-Colell}.} \bibinfo{year}{2000}\natexlab{}.
\newblock \showarticletitle{A simple adaptive procedure leading to correlated
  equilibrium}.
\newblock \bibinfo{journal}{{\em Econometrica\/}} \bibinfo{volume}{68},
  \bibinfo{number}{5} (\bibinfo{year}{2000}), \bibinfo{pages}{1127--1150}.
\newblock


\bibitem[\protect\citeauthoryear{Hart and Mas-Colell}{Hart and
  Mas-Colell}{2001}]%
        {hart2001reinforcement}
\bibfield{author}{\bibinfo{person}{Sergiu Hart} {and} \bibinfo{person}{Andreu
  Mas-Colell}.} \bibinfo{year}{2001}\natexlab{}.
\newblock \showarticletitle{A reinforcement procedure leading to correlated
  equilibrium}.
\newblock In \bibinfo{booktitle}{{\em Economics Essays}}.
  \bibinfo{publisher}{Springer}, \bibinfo{pages}{181--200}.
\newblock


\bibitem[\protect\citeauthoryear{Hsu}{Hsu}{2006}]%
        {DeepBlue}
\bibfield{author}{\bibinfo{person}{Feng-Hsiung Hsu}.}
  \bibinfo{year}{2006}\natexlab{}.
\newblock \bibinfo{booktitle}{{\em Behind Deep Blue: Building the Computer that
  Defeated the World {C}hess Championship}}.
\newblock \bibinfo{publisher}{Princeton University Press}.
\newblock


\bibitem[\protect\citeauthoryear{Johanson, Waugh, Bowling, and
  Zinkevich}{Johanson et~al\mbox{.}}{2011}]%
        {accelerated_BR}
\bibfield{author}{\bibinfo{person}{Michael Johanson}, \bibinfo{person}{Kevin
  Waugh}, \bibinfo{person}{Michael Bowling}, {and} \bibinfo{person}{Martin
  Zinkevich}.} \bibinfo{year}{2011}\natexlab{}.
\newblock \showarticletitle{Accelerating best response calculation in large
  extensive games}. In \bibinfo{booktitle}{{\em IJCAI}},
  Vol.~\bibinfo{volume}{11}. \bibinfo{pages}{258--265}.
\newblock


\bibitem[\protect\citeauthoryear{Kocsis and Szepesv{\'a}ri}{Kocsis and
  Szepesv{\'a}ri}{2006}]%
        {kocsis2006bandit}
\bibfield{author}{\bibinfo{person}{Levente Kocsis} {and} \bibinfo{person}{Csaba
  Szepesv{\'a}ri}.} \bibinfo{year}{2006}\natexlab{}.
\newblock \showarticletitle{Bandit based monte-carlo planning}. In
  \bibinfo{booktitle}{{\em European conference on machine learning}}. Springer,
  \bibinfo{pages}{282--293}.
\newblock


\bibitem[\protect\citeauthoryear{Lanctot}{Lanctot}{2013}]%
        {lanctot_thesis}
\bibfield{author}{\bibinfo{person}{Marc Lanctot}.}
  \bibinfo{year}{2013}\natexlab{}.
\newblock {\em \bibinfo{title}{Monte Carlo sampling and regret minimization for
  equilibrium computation and decision-making in large extensive form games}}.
\newblock \bibinfo{thesistype}{Ph.D. Dissertation}. \bibinfo{school}{University
  of Alberta}.
\newblock


\bibitem[\protect\citeauthoryear{Lanctot, Waugh, Zinkevich, and
  Bowling}{Lanctot et~al\mbox{.}}{2009}]%
        {MCCFR}
\bibfield{author}{\bibinfo{person}{Marc Lanctot}, \bibinfo{person}{Kevin
  Waugh}, \bibinfo{person}{Martin Zinkevich}, {and} \bibinfo{person}{Michael
  Bowling}.} \bibinfo{year}{2009}\natexlab{}.
\newblock \showarticletitle{Monte Carlo sampling for regret minimization in
  extensive games}. In \bibinfo{booktitle}{{\em Advances in neural information
  processing systems}}. \bibinfo{pages}{1078--1086}.
\newblock


\bibitem[\protect\citeauthoryear{Lisy}{Lisy}{2014}]%
        {ISMCTS}
\bibfield{author}{\bibinfo{person}{Viliam Lisy}.}
  \bibinfo{year}{2014}\natexlab{}.
\newblock \showarticletitle{Alternative selection functions for Information Set
  Monte Carlo tree search}.
\newblock \bibinfo{journal}{{\em Acta Polytechnica\/}} \bibinfo{volume}{54},
  \bibinfo{number}{5} (\bibinfo{year}{2014}), \bibinfo{pages}{333--340}.
\newblock


\bibitem[\protect\citeauthoryear{Lis{\'y}, Lanctot, and Bowling}{Lis{\'y}
  et~al\mbox{.}}{2015}]%
        {OOS}
\bibfield{author}{\bibinfo{person}{Viliam Lis{\'y}}, \bibinfo{person}{Marc
  Lanctot}, {and} \bibinfo{person}{Michael Bowling}.}
  \bibinfo{year}{2015}\natexlab{}.
\newblock \showarticletitle{Online monte carlo counterfactual regret
  minimization for search in imperfect information games}. In
  \bibinfo{booktitle}{{\em Proceedings of the 2015 International Conference on
  Autonomous Agents and Multiagent Systems}}. International Foundation for
  Autonomous Agents and Multiagent Systems, \bibinfo{pages}{27--36}.
\newblock


\bibitem[\protect\citeauthoryear{Long, Sturtevant, Buro, and Furtak}{Long
  et~al\mbox{.}}{2010}]%
        {long2010understanding}
\bibfield{author}{\bibinfo{person}{Jeffrey Long}, \bibinfo{person}{Nathan~R
  Sturtevant}, \bibinfo{person}{Michael Buro}, {and} \bibinfo{person}{Timothy
  Furtak}.} \bibinfo{year}{2010}\natexlab{}.
\newblock \showarticletitle{Understanding the success of perfect information
  monte carlo sampling in game tree search}. In \bibinfo{booktitle}{{\em
  Proceedings of the Twenty-Fourth AAAI Conference on Artificial
  Intelligence}}. AAAI Press, \bibinfo{pages}{134--140}.
\newblock


\bibitem[\protect\citeauthoryear{Morav{\v{c}}{\'\i}k, Schmid, Burch, Lis{\'y},
  Morrill, Bard, Davis, Waugh, Johanson, and Bowling}{Morav{\v{c}}{\'\i}k
  et~al\mbox{.}}{2017}]%
        {DeepStack}
\bibfield{author}{\bibinfo{person}{Matej Morav{\v{c}}{\'\i}k},
  \bibinfo{person}{Martin Schmid}, \bibinfo{person}{Neil Burch},
  \bibinfo{person}{Viliam Lis{\'y}}, \bibinfo{person}{Dustin Morrill},
  \bibinfo{person}{Nolan Bard}, \bibinfo{person}{Trevor Davis},
  \bibinfo{person}{Kevin Waugh}, \bibinfo{person}{Michael Johanson}, {and}
  \bibinfo{person}{Michael Bowling}.} \bibinfo{year}{2017}\natexlab{}.
\newblock \showarticletitle{Deepstack: Expert-level artificial intelligence in
  heads-up no-limit poker}.
\newblock \bibinfo{journal}{{\em Science\/}} \bibinfo{volume}{356},
  \bibinfo{number}{6337} (\bibinfo{year}{2017}), \bibinfo{pages}{508--513}.
\newblock


\bibitem[\protect\citeauthoryear{Moravcik, Schmid, Ha, Hladik, and
  Gaukrodger}{Moravcik et~al\mbox{.}}{2016}]%
        {MoravcikGadget}
\bibfield{author}{\bibinfo{person}{Matej Moravcik}, \bibinfo{person}{Martin
  Schmid}, \bibinfo{person}{Karel Ha}, \bibinfo{person}{Milan Hladik}, {and}
  \bibinfo{person}{Stephen~J Gaukrodger}.} \bibinfo{year}{2016}\natexlab{}.
\newblock \showarticletitle{Refining Subgames in Large Imperfect Information
  Games.}. In \bibinfo{booktitle}{{\em AAAI}}. \bibinfo{pages}{572--578}.
\newblock


\bibitem[\protect\citeauthoryear{Osborne and Rubinstein}{Osborne and
  Rubinstein}{1994}]%
        {osborne1994course}
\bibfield{author}{\bibinfo{person}{Martin~J Osborne} {and}
  \bibinfo{person}{Ariel Rubinstein}.} \bibinfo{year}{1994}\natexlab{}.
\newblock \bibinfo{booktitle}{{\em A course in game theory}}.
\newblock \bibinfo{publisher}{MIT press}.
\newblock


\bibitem[\protect\citeauthoryear{Raboin, Nau, Kuter, Gupta, and Svec}{Raboin
  et~al\mbox{.}}{2010}]%
        {raboin2010strategy}
\bibfield{author}{\bibinfo{person}{Eric Raboin}, \bibinfo{person}{Dana Nau},
  \bibinfo{person}{Ugur Kuter}, \bibinfo{person}{Satyandra~K Gupta}, {and}
  \bibinfo{person}{Petr Svec}.} \bibinfo{year}{2010}\natexlab{}.
\newblock \showarticletitle{Strategy generation in multi-agent
  imperfect-information pursuit games}. In \bibinfo{booktitle}{{\em Proceedings
  of the 9th International Conference on Autonomous Agents and Multiagent
  Systems: volume 1-Volume 1}}. International Foundation for Autonomous Agents
  and Multiagent Systems, \bibinfo{pages}{947--954}.
\newblock


\bibitem[\protect\citeauthoryear{Schmid, Burch, Lanctot, Moravcik, Kadlec, and
  Bowling}{Schmid et~al\mbox{.}}{2018}]%
        {var_reduction}
\bibfield{author}{\bibinfo{person}{Martin Schmid}, \bibinfo{person}{Neil
  Burch}, \bibinfo{person}{Marc Lanctot}, \bibinfo{person}{Matej Moravcik},
  \bibinfo{person}{Rudolf Kadlec}, {and} \bibinfo{person}{Michael Bowling}.}
  \bibinfo{year}{2018}\natexlab{}.
\newblock \showarticletitle{Variance Reduction in Monte Carlo Counterfactual
  Regret Minimization (VR-MCCFR) for Extensive Form Games using Baselines}.
\newblock \bibinfo{journal}{{\em arXiv preprint arXiv:1809.03057\/}}
  (\bibinfo{year}{2018}).
\newblock


\bibitem[\protect\citeauthoryear{Silver, Huang, Maddison, Guez, Sifre, Van
  Den~Driessche, Schrittwieser, Antonoglou, Panneershelvam, Lanctot,
  et~al\mbox{.}}{Silver et~al\mbox{.}}{2016}]%
        {Silver16:AlphaGo}
\bibfield{author}{\bibinfo{person}{David Silver}, \bibinfo{person}{Aja Huang},
  \bibinfo{person}{Chris~J Maddison}, \bibinfo{person}{Arthur Guez},
  \bibinfo{person}{Laurent Sifre}, \bibinfo{person}{George Van Den~Driessche},
  \bibinfo{person}{Julian Schrittwieser}, \bibinfo{person}{Ioannis Antonoglou},
  \bibinfo{person}{Veda Panneershelvam}, \bibinfo{person}{Marc Lanctot}, {and}
  \bibinfo{person}{others}.} \bibinfo{year}{2016}\natexlab{}.
\newblock \showarticletitle{Mastering the game of Go with deep neural networks
  and tree search}.
\newblock \bibinfo{journal}{{\em Nature\/}} \bibinfo{volume}{529},
  \bibinfo{number}{7587} (\bibinfo{year}{2016}), \bibinfo{pages}{484--489}.
\newblock


\bibitem[\protect\citeauthoryear{Teytaud and Teytaud}{Teytaud and
  Teytaud}{2011}]%
        {teytaud2011lemmas}
\bibfield{author}{\bibinfo{person}{Fabien Teytaud} {and}
  \bibinfo{person}{Olivier Teytaud}.} \bibinfo{year}{2011}\natexlab{}.
\newblock \showarticletitle{Lemmas on partial observation, with application to
  phantom games}. In \bibinfo{booktitle}{{\em Computational Intelligence and
  Games (CIG), 2011 IEEE Conference on}}. IEEE, \bibinfo{pages}{243--249}.
\newblock


\bibitem[\protect\citeauthoryear{Zinkevich, Johanson, Bowling, and
  Piccione}{Zinkevich et~al\mbox{.}}{2008}]%
        {CFR}
\bibfield{author}{\bibinfo{person}{Martin Zinkevich}, \bibinfo{person}{Michael
  Johanson}, \bibinfo{person}{Michael Bowling}, {and} \bibinfo{person}{Carmelo
  Piccione}.} \bibinfo{year}{2008}\natexlab{}.
\newblock \showarticletitle{Regret minimization in games with incomplete
  information}. In \bibinfo{booktitle}{{\em Advances in neural information
  processing systems}}. \bibinfo{pages}{1729--1736}.
\newblock


\end{thebibliography}

\clearpage
\newpage
\section*{Appendix}
\appendix
\printacronyms[include-classes=abbrev,name=Abbreviations]

\section{The Proof of Theorem~\ref{thm:mccr}}\label{sec:proofs}

We now formalize the guarantees available for different ingredients of our continual resolving algorithm, and put them together to prove Theorem~\ref{thm:mccr}.

%

\subsection{Monte Carlo CFR}

The basic tool in our algorithm is the outcome sampling (\ac{OS}) variant of \ac{MCCFR}.
In the following text, $p \in (0,1]$ and $\delta >0$ will be fixed,
and we shall assume that the \ac{OS}'s sampling scheme is such that for each $z\in \mc Z$, the probability of sampling $z$ is either 0 or higher than $\delta$.

In~\cite[Theorem 4]{lanctot_thesis}, it is proven that the \ac{OS}'s average regret decreases with increasing number of iterations. This translates to the following exploitability bound, where
$\Delta_{u,i} := \max_{z1,z_2} |u_i(z_1) - u_i(z_2)|$ is the maximum difference between utilities and
$A_i := \max_{\mc H_i} |\mc A(h)|$ is the player $i$ branching factor.

\begin{lemma}[\ac{MCCFR} exploitability bound]\label{lem:MCCFR_expl}
Let $\bar \sigma^T$ be the average strategy produced by applying $T$ iterations of \ac{OS} to some game $G$. Then with probability at least $1-p$, we have
\begin{equation} \label{eq:MCCFR_regret}
\bar R^T_i \leq 
\left( \sqrt{\frac{2}{p}} + 1 \right)|\mc I_i|
\frac{\Delta_{u,i}\sqrt{A_i}}{\delta} \cdot \frac{1}{\sqrt{T}} .
\end{equation}
\end{lemma}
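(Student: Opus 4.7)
The plan is to essentially follow the proof of Theorem~4 in Lanctot's thesis~\cite{lanctot_thesis}, which this lemma restates in our notation. I would proceed in three stages: first, reduce the external regret to a sum of immediate counterfactual regrets via Zinkevich's decomposition; second, bound the \emph{sampled} immediate regret at each information set via regret matching; and third, lift the sampled bound to a bound on the true regret via a second-moment concentration argument. The ``$+1$'' term in the factor $(\sqrt{2/p}+1)$ will come from the regret-matching step, and the $\sqrt{2/p}$ term from the concentration step.

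For the first step, the standard CFR bound~\cite[Theorem~3]{CFR} gives $\bar R^T_i \leq \sum_{I\in\mc I_i} \bar R^{T,+}_{i,\textnormal{imm}}(I)$, so it suffices to control the positive part of the immediate regret at each $I$ and sum. For the second step, OS plays regret matching at each $I$ on the sampled immediate regrets~\eqref{eq:sampled_regret}. Because every sampled terminal has sampling probability at least $\delta$, the weight $w_I$ satisfies $|w_I|\leq \Delta_{u,i}/\delta$, and hence each $\tilde r^{\sigma^t}_i(I,a)$ lies in $[-\Delta_{u,i}/\delta,\, \Delta_{u,i}/\delta]$. Blackwell's regret-matching bound then yields $\max_a \tilde R^T_{i,\textnormal{imm}}(I,a) \leq (\Delta_{u,i}/\delta)\sqrt{A_i T}$; dividing by $T$ and summing over the $|\mc I_i|$ information sets produces exactly the ``$+1$'' contribution.

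The third step is the crux. Conditional on the first $t{-}1$ iterations, the strategy $\sigma^t$ is determined by the sampled regrets so far, and~\cite[Lemma~1]{MCCFR} gives $\mathbb{E}[\tilde r^{\sigma^t}_i(I,a)\mid \mc F_{t-1}] = r^{\sigma^t}_i(I,a)$. Therefore $M^T(I,a) := \tilde R^T_{i,\textnormal{imm}}(I,a) - R^T_{i,\textnormal{imm}}(I,a)$ is a martingale whose increments are bounded by $2\Delta_{u,i}/\delta$. Using $\bar R^{T,+}_{i,\textnormal{imm}}(I) \leq (1/T)\max_a \tilde R^T_{i,\textnormal{imm}}(I,a) + (1/T)\max_a |M^T(I,a)|$ (both summands are non-negative), I would sum over $I$ and apply Chebyshev's inequality to the aggregate deviation, obtaining with probability at least $1-p$ a bound of $\sqrt{2/p}\cdot |\mc I_i|\Delta_{u,i}\sqrt{A_i}/(\delta\sqrt{T})$ on the deviation term. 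Adding the two contributions produces the stated factor $\left(\sqrt{2/p}+1\right)|\mc I_i|\Delta_{u,i}\sqrt{A_i}/(\delta\sqrt{T})$.

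The main obstacle is the third step, specifically controlling $\sum_I \max_a |M^T(I,a)|$ using a \emph{single} application of Chebyshev rather than a wasteful per-$(I,a)$ union bound (which would introduce an extra $\sqrt{|\mc I_i|A_i}$ factor and spoil the constants). The clean aggregation exploits the fact that OS samples only one trajectory per iteration, so at each time $t$ the non-zero increments of the various $M^t(I,a)$ come from the same sampled path and are highly correlated, making the aggregate second moment scale as $|\mc I_i|^2 A_i(\Delta_{u,i}/\delta)^2 T$ rather than as $|\mc I_i|A_i$ times it. This is the delicate bookkeeping carried out in Lanctot's thesis, and I would transcribe it step for step in our notation.
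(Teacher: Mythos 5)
Your proposal is correct and is essentially the paper's own argument: the paper simply cites Theorem~4 of Lanctot's thesis, which gives $\bar R^T_i \leq \bigl( \sqrt{2|\mc I_i||\mc B_i|}/\sqrt{p} + M_i \bigr)\Delta_{u,i}\sqrt{A_i}/(\delta\sqrt{T})$, and then coarsens the domain-specific constants via $M_i, |\mc B_i| \leq |\mc I_i|$ to obtain the stated form. Your three-stage reconstruction (Zinkevich's decomposition, regret matching on the sampled regrets, and a single aggregate Chebyshev step rather than a per-$(I,a)$ union bound) is precisely the content of that cited theorem, with your crude per-$I$ summation playing the role of the relaxations $M_i, |\mc B_i| \leq |\mc I_i|$.
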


\begin{proof}
The exploitability $\expl_i(\bar \sigma^T)$ of the average strategy $\bar \sigma^T$ is equal to the average regret $\bar R^T_i$.
By Theorem 4 from~\cite{lanctot_thesis}, after $T$ iterations of \ac{OS} we have
\[ \bar R^T_i \leq 
\left( \frac{\sqrt{2|\mc I_i||\mc B_i|}}{\sqrt{p}} + M_i \right)
\frac{\Delta_{u,i}\sqrt{A_i}}{\delta} \cdot \frac{1}{\sqrt{T}}
\]
with probability at least $1-p$, where
$M_i, |\mc B_i| \leq |\mc I_i|$ are some domain specific constants.
The regret can then be bounded as
\[ \bar R^T_i \leq 
\left( \sqrt{\frac{2}{p}} + 1 \right)|\mc I_i|
\frac{\Delta_{u,i}\sqrt{A_i}}{\delta} \cdot \frac{1}{\sqrt{T}},
\]
which concludes the proof.
\end{proof}

\begin{lemma}[\ac{MCCFR} value approximation bound]\label{lem:MCCFR_values}
Let $S\in \mc S$. Under the assumptions of Lemma~\ref{lem:MCCFR_expl}, we further have 
\begin{align*}
\sum_{I\in S(2)} \left|
v_2^{\bar \sigma^T}(I) - v_2^{\bar\sigma^T_1 , \textrm{CBR}(\bar\sigma^T_1) }(I)
\right| \leq \\
\left( \sqrt{\frac{2}{p}} + 1 \right)|\mc I_i|
\frac{\Delta_{u,i}\sqrt{A_i}}{\delta} \cdot \frac{1}{\sqrt{T}}.
\end{align*}
\end{lemma}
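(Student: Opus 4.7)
The plan is to bound the sum by player 2's total immediate counterfactual regret and then invoke the same MCCFR analysis that underlies Lemma~\ref{lem:MCCFR_expl}. The skeleton has three steps: drop the absolute value, decompose each summand into local counterfactual regrets, and sum these up over $\hat S(2)$.

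First, since $\textrm{CBR}(\bar\sigma_1^T)$ is by definition a strategy that maximizes player 2's counterfactual value at every information set against $\bar\sigma_1^T$, we have $v_2^{\bar\sigma_1^T,\textrm{CBR}}(I) \geq v_2^{\bar\sigma^T}(I)$ for every $I \in \mc I_2^\textrm{aug}$. Each summand is therefore non-negative, so the absolute values may be dropped. Writing $g(I) := v_2^{\bar\sigma_1^T, \textrm{CBR}}(I) - v_2^{\bar\sigma^T}(I)$, the next step is to establish a recursive inequality of the form $g(I) \leq \max\{0,\bar R^{T}_{2,\textrm{imm}}(I)\} + \sum_{I'\in\textrm{children}(I)} g(I')$, where $\textrm{children}(I)$ denotes the augmented information sets of player 2 reached immediately below $I$. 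This mirrors the standard counterfactual-regret decomposition of Zinkevich et al., adapted to augmented information sets and to the upper-frontier convention of Section~\ref{sec:gadget}.

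Unfolding the recursion down the information-set tree gives $g(I) \leq \sum_{I'\sqsupseteq I,\, I'\in\mc I_2^\textrm{aug}} \max\{0,\bar R^T_{2,\textrm{imm}}(I')\}$. Summing over $I \in \hat S(2)$ and using that, by perfect recall, the augmented information sets descending from distinct members of $\hat S(2)$ are disjoint, I obtain $\sum_{I\in\hat S(2)} g(I) \leq \sum_{I'\in\mc I_2} \max\{0,\bar R^T_{2,\textrm{imm}}(I')\}$. The right-hand side is precisely the quantity bounded in the proof of Lemma~\ref{lem:MCCFR_expl}: Theorem~4 of~\cite{lanctot_thesis} bounds each $\bar R^T_{2,\textrm{imm}}(I)$, and (via the Zinkevich decomposition) it is this sum of positive parts that subsequently bounds $\bar R^T_2$ itself. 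Substituting the very same bound yields the claimed inequality.

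The main obstacle is the recursive decomposition: establishing it cleanly in the presence of augmented, possibly ``thick'' information sets where player 2 need not be the acting player. Expanding $g(I)$ as a sum over $h \in \hat I$ and carefully manipulating counterfactual values should reduce it to the usual CFR identity, but the bookkeeping around child-information-set probabilities, the upper-frontier convention, and the cases where chance or player 1 act multiple times between two consecutive moves of player 2 is where most of the technical work will lie.
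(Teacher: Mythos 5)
Your proposal is correct and follows essentially the same route as the paper: your quantity $g(I)$ is exactly the full counterfactual regret $R^T_{2,\textnormal{full}}(I)$ of the average strategy, your recursive decomposition unfolded is precisely Lemma~6 of the CFR paper (which the paper cites rather than re-derives), and both arguments terminate by observing that the proof of Theorem~4 in~\cite{lanctot_thesis} actually bounds $\sum_{J\in\mc I_2} R^{T,+}_{2,\textnormal{imm}}(J)$ by the right-hand side of \eqref{eq:MCCFR_regret}. The only difference is presentational — you propose to re-establish the Zinkevich recursion by hand for augmented information sets, whereas the paper invokes it as a black box.
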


\begin{proof}
Consider the \emph{full counterfactual regret} of player $2$'s average strategy, defined in~\cite[Appendix A.1]{CFR}:
\begin{equation}\label{eq:full_regret}
R^T_{2,\textnormal{full}}(I) :=
\frac{1}{T} \max_{\sigma'_2 \in \Sigma_2} \sum_{t=1}^T \left(
v_2^{\sigma^t|_{D(I) \leftarrow \sigma'_2}}(I) - v_2^{\sigma^t}(I) 
\right) ,
\end{equation}
where $D(I)\subset \mc I_2$ contains $I$ and all its descendants.
Since $\textrm{CBR}_2(\bar \sigma^T)$ is one of the strategies $\sigma'_2$ which maximize the sum in \eqref{eq:full_regret}, we have
\begin{equation} \label{eq:regret_and_value_approx}
R^T_{2,\textnormal{full}}(I) = v_2^{\bar\sigma^T_1 , \textrm{CBR}(\bar\sigma_1) }(I) - v_2^{\bar \sigma^T}(I) .
\end{equation}

Consider now any $S \in \mc S$. By~\cite[Lemma\,6]{CFR}, we have
\begin{align*}
& \sum_{I\in S(2)} \left|
v_2^{\bar \sigma^T}(I) - v_2^{\bar\sigma^T_1, \textrm{CBR}(\bar\sigma_1) }(I)
\right| \overset{\eqref{eq:regret_and_value_approx} }{\leq} \!\!\!
\sum_{I\in S(2)} \!\! R^T_{2,\textnormal{full}}(I) \\
& \leq \sum_{I\in S(2)} \sum_{J\in D(I)}
 R^{T,+}_{i,\textnormal{imm}}(J) \leq
\sum_{J \in \mc I_2} R^{T,+}_{i,\textnormal{imm}}(J) .
\end{align*}
The proof is now complete, because the proof of~\cite[Theorem 4]{lanctot_thesis}, which we used to prove Lemma~\ref{lem:MCCFR_expl}, actually shows that the sum $\sum_{J \in \mc I_2} R^{T,+}_{i,\textnormal{imm}}(J)$ is bounded by the right-hand side of \eqref{eq:MCCFR_regret}.
\end{proof}

\subsection{Gadget Game Properties}

The following result is a part of why resolving gadget games are useful, and the reason behind multiplying all utilities in $G\left< S, \sigma, v_2^\sigma \right>$ by $\pi_{-2}^\sigma(S)$.

\begin{lemma}[Gadget game preserves opponent's values]\label{lem:gadget_values}
Let $S\in \mc S$, $\sigma \in \Sigma$ and let $\tilde \rho$ be any strategy in the resolving game $G\left< S, \sigma, \tilde v\right>$ (where $\tilde v$ is arbitrary).
Denote $\sigma^\textnormal{new} := \sigma |_{G(S) \leftarrow \tilde \rho}$.
Then for each $I \in \mc I_2^{\textnormal{aug}}(G(S))$, we have
$v_2^{\sigma^\textnormal{new}}(I) = \tilde v_2^{\tilde \rho}(I)$.
\end{lemma}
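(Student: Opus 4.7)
The plan is to unpack the definition of $v_2^{\sigma^\textnormal{new}}(I) = \sum_{h \in \hat I} \pi_{-2}^{\sigma^\textnormal{new}}(h)\, u_2^{\sigma^\textnormal{new}}(h)$ and show that each summand equals the corresponding summand in $\tilde v_2^{\tilde \rho}(I) = \sum_{h \in \hat I} \tilde\pi_{-2}^{\tilde \rho}(h)\, \tilde u_2^{\tilde \rho}(h)$. The two factors ``reach'' and ``utility'' are each distorted by the gadget construction, but in opposite ways, and a factor of $\pi_{-2}^\sigma(S)$ cancels.

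First I would handle the reach probabilities. For every $h \in \hat I$ there is a unique ancestor $h_0 \in \hat S$ with $h_0 \sqsubseteq h$ (possibly $h_0 = h$). In the gadget, the only path from the root to $h$ traverses the chance node (which picks $h_0$ with probability $\pi_{-2}^\sigma(h_0)/\pi_{-2}^\sigma(S)$), then the ``F'' action of player $2$ at $\tilde h_0$ (which does not enter $\pi_{-2}$), and finally the actions between $h_0$ and $h$. Since $\sigma^\textnormal{new}$ coincides with $\sigma$ outside $G(S)$ and with $\tilde\rho$ inside $G(S)$, we get
\begin{equation*}
\tilde \pi_{-2}^{\tilde \rho}(h) \;=\; \frac{\pi_{-2}^\sigma(h_0)}{\pi_{-2}^\sigma(S)} \cdot \pi_{-2}^{\tilde \rho}(h\mid h_0) \;=\; \frac{\pi_{-2}^{\sigma^\textnormal{new}}(h_0)\,\pi_{-2}^{\sigma^\textnormal{new}}(h\mid h_0)}{\pi_{-2}^\sigma(S)} \;=\; \frac{\pi_{-2}^{\sigma^\textnormal{new}}(h)}{\pi_{-2}^\sigma(S)}.
\end{equation*}

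Next I would handle the utilities. By construction every terminal $z$ reachable from $h \in \hat S$ in the gadget has $\tilde u_2(z) = \pi_{-2}^\sigma(S)\, u_2(z)$, and the continuation from $h$ in the gadget is identical to $G(S)$ but played under $\tilde \rho = \sigma^\textnormal{new}|_{G(S)}$. Hence $\tilde u_2^{\tilde \rho}(h) = \pi_{-2}^\sigma(S)\, u_2^{\sigma^\textnormal{new}}(h)$ for every $h \in \hat I$. Multiplying the two equalities, the factor $\pi_{-2}^\sigma(S)$ cancels, giving $\tilde \pi_{-2}^{\tilde\rho}(h)\, \tilde u_2^{\tilde\rho}(h) = \pi_{-2}^{\sigma^\textnormal{new}}(h)\, u_2^{\sigma^\textnormal{new}}(h)$, i.e.\ $\tilde v_2^{\tilde\rho}(h) = v_2^{\sigma^\textnormal{new}}(h)$. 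Summing over $\hat I$ yields the claim.

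The only subtlety I anticipate is the bookkeeping around ``thick'' augmented information sets: one must be careful that summing over $\hat I$ (not all of $I$) is consistent with the upper-frontier convention introduced for $v_2^\sigma$ and with the gadget's chance node, which only duplicates $\hat S$ (not all of $S$). That is precisely why the normalization constant in $\widetilde G(S,\sigma,\tilde v)$ was chosen as $\pi_{-2}^\sigma(S) = \sum_{h_0 \in \hat S} \pi_{-2}^\sigma(h_0)$ rather than a sum over $S$: without this choice the cancellation step above would fail, matching the remark in the paper that the unmodified \cite{CFR-D} version lacks the desired properties in games with thick public states.
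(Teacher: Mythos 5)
Your proof is correct and follows essentially the same route as the paper's: both reduce the claim to the per-history identity $\tilde v_2^{\tilde\rho}(h)=v_2^{\sigma^{\textnormal{new}}}(h)$, factor the gadget reach probability through the root chance node (with the player-2 ``F'' action contributing nothing to $\pi_{-2}$), use the utility rescaling $\tilde u_2(z)=\pi_{-2}^\sigma(S)\,u_2(z)$, and cancel the $\pi_{-2}^\sigma(S)$ factor. The only cosmetic difference is that the paper establishes the identity for all $h\in G(S)$ rather than just $h\in\hat I$, which is immaterial for the stated conclusion.
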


\noindent Note that the conclusion does \emph{not} hold for counterfactual values of the (resolving) player 1! (This is can be easily verified by hand in any simple game such as matching pennies.)

\begin{proof}
In the setting of the lemma, it suffices to show that $\tilde{v}_2^{\tilde\rho}(h)$ is equal to $v_2^{\sigma^{\textnormal{new}}}(h)$ for every $h\in G(S)$.
Let $h\in G(S)$ and denote by $g$ the prefix of $h$ that belongs to $\hat S$.

Recall that $\tilde g$ is the parent of $g$ in the resolving game, and that the reach probability $\tilde \pi_{-2}^{\tilde \mu}(\tilde g)$ of $\tilde g$ in the resolving game (for any strategy $\tilde \mu$) is equal to $\pi^\sigma_{-2}(g) / \pi_{-2}^\sigma (S)$.
Since $\tilde u_2 (z) = u_2(z) \pi^\sigma_{-2}(S)$ for any $h\sqsubset z \in \mc Z$, the definition of $\sigma^{\textnormal{new}}$, gives $\tilde u_2^{\tilde\rho}(h) = u_2^{\sigma^{\textnormal{new}}}(h) \pi_{-2}^\sigma(S)$.
The following series of identities then completes the proof:

\begin{align*}
\tilde{v}_2^{\tilde\rho}(h) &
	= \sum_{h\sqsubset z \in \mc Z} \tilde \pi_{-2}^{\tilde\rho}(h) {\tilde\pi}^{\tilde\rho}(z|h) \tilde{u}_2(z) \\
& = \tilde \pi_{-2}^{\tilde\rho}(h) \tilde u_2^{\tilde\rho}(h) \\
& = \tilde \pi^{\tilde\rho}_{-2}(\tilde g) \cdot \tilde \pi_{-2}(g | \tilde g) \cdot \tilde \pi_{-2}^{\tilde\rho}(h|g) \cdot \tilde u_2^{\tilde\rho}(h) \\
& = \tilde \pi_c^{\tilde\rho}(\tilde g) \cdot 1 \cdot \tilde \pi_{-2}^{\tilde\rho}(h|g) \cdot \tilde u_2^{\tilde\rho}(h) \\
& = \frac{ \pi^\sigma_{-2}(g) }{ \pi_{-2}^\sigma (S) } \cdot 1 \cdot \pi_{-2}(h|g) \cdot u_2^{\sigma^{\textnormal{new}}}(h) \pi_{-2}^\sigma(S) \\
& = \pi^\sigma_{-2}(h) u_2^{\sigma^{\textnormal{new}}}(h) 
 = \pi^{\sigma^{\textnormal{new}}}_{-2}(h) u_2^{\sigma^{\textnormal{new}}}(h)
 = v_2^{\sigma^{\textnormal{new}}}(h) .
\end{align*}
\end{proof}

\begin{corollary}[Gadget game preserves value approximation]\label{lem:gadget_BR}
Under the assumptions of Lemma~\ref{lem:gadget_values}, we have the following for each $I \in \mc I_2^{\textnormal{aug}}(G(S))$:
\begin{align*}
& \left|
v_2^{\sigma^\textnormal{new}}(I) - 
v_2^{\sigma^\textnormal{new}_1, \textrm{CBR}(\sigma^\textnormal{new}_1)}(I)
\right| = \\
& \left|
\tilde{v}_2^{\tilde \rho}(I) -
\tilde{v}_2^{\tilde\rho_1, \widetilde{\textrm{CBR}}(\tilde\rho_1)}(I)
\right| .
\end{align*}
\end{corollary}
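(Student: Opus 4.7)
The plan is to derive the equality by applying Lemma~\ref{lem:gadget_values} twice: once to the original pair $(\sigma^\textnormal{new}, \tilde\rho)$ to match the first terms inside the absolute values, and a second time to a carefully chosen pair in which both players have been replaced by their counterfactual best responses, to match the second terms. The only substantive work lies in showing that the two counterfactual best responses coincide on $G(S)$, so that Lemma~\ref{lem:gadget_values} actually applies.

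First I would invoke Lemma~\ref{lem:gadget_values} directly on $(\sigma^\textnormal{new}, \tilde\rho)$ to get $v_2^{\sigma^\textnormal{new}}(I) = \tilde{v}_2^{\tilde\rho}(I)$ for every $I\in \mc I_2^\textnormal{aug}(G(S))$. This handles the first terms of the two sides of the claimed equality. Next I would establish the key auxiliary fact: for every information set $J \in \mc I_2^\textnormal{aug}$ lying inside $G(S)$ and every action $a \in \mc A(J)$, the counterfactual action values under $\sigma_1^\textnormal{new}$ in $G$ and under $\tilde\rho_1$ in the gadget agree. The argument mirrors the proof of Lemma~\ref{lem:gadget_values}: for any player-2 strategy $\mu_2$ in $G$ and the corresponding gadget strategy $\tilde\mu_2$ that equals $\mu_2$ on info sets inside $G(S)$ (and is arbitrary at the gadget's top frontier), Lemma~\ref{lem:gadget_values} gives $v_2^{\sigma_1^\textnormal{new},\mu_2}(J,a) = \tilde v_2^{\tilde\rho_1,\tilde\mu_2}(J,a)$. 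Since $\textrm{CBR}$ is defined pointwise as the maximizer of the counterfactual action values, taking the argmax on either side yields the same action at $J$. Hence
\[
\widetilde{\textrm{CBR}}(\tilde\rho_1)\big|_{G(S)\cap \mc H_2} \;=\; \textrm{CBR}(\sigma_1^\textnormal{new})\big|_{G(S)\cap \mc H_2}.
\]

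With this in hand, I would define $\sigma^{\star} := (\sigma_1^\textnormal{new},\,\textrm{CBR}(\sigma_1^\textnormal{new}))$ in $G$ and $\tilde\rho^{\star} := (\tilde\rho_1,\,\widetilde{\textrm{CBR}}(\tilde\rho_1))$ in $\widetilde G(S)$. By the previous paragraph, $\tilde\rho^{\star}$ and $\sigma^{\star}$ coincide on all of $G(S)$ in the sense required by Lemma~\ref{lem:gadget_values}, i.e.\ $\sigma^{\star} = \sigma^{\star}|_{G(S)\leftarrow \tilde\rho^{\star}}$. A second application of Lemma~\ref{lem:gadget_values} then yields
\[
v_2^{\sigma_1^\textnormal{new},\textrm{CBR}(\sigma_1^\textnormal{new})}(I) \;=\; \tilde v_2^{\tilde\rho_1,\widetilde{\textrm{CBR}}(\tilde\rho_1)}(I).
\]
Subtracting the two equalities and taking absolute values gives precisely the claim.

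The main obstacle is the CBR-matching step. One has to be careful that the gadget game introduces extra player-2 structure (the T/F information sets on the frontier $\tilde S$) which has no counterpart in $G$; this is why the statement is restricted to $I\subset G(S)$ and why the CBR-equality needs to be stated only on info sets strictly below the gadget's top. Once one observes that counterfactual best responses are chosen pointwise, that $\pi_{-2}$ is independent of player~2's strategy, and that Lemma~\ref{lem:gadget_values} extends verbatim from $v_2^\sigma(I)$ to $v_2^\sigma(I,a)$ (by applying it to the strategy that forces $a$ at $I$), the argument is short and mechanical.
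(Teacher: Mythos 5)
Your proof is correct and takes essentially the same route as the paper's: one application of Lemma~\ref{lem:gadget_values} for the first terms, a claim that $\widetilde{\textrm{CBR}}(\tilde\rho_1)$ and $\textrm{CBR}(\sigma^\textnormal{new}_1)$ coincide on $G(S)$, and a second application of the lemma for the best-response terms. The paper justifies the CBR coincidence by noting that the backwards induction computing the counterfactual best response sees identical subgame structure and identical player-2 counterfactual values in $G$ and in the gadget; your pointwise-argmax argument via action values is the same idea, merely spelled out in more detail.
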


\begin{proof}
By Lemma~\ref{lem:gadget_values}, we have $v_2^{\sigma^\textnormal{new}}(I) = \tilde{v}_2^{\tilde \rho}(I)$.
We claim that $\widetilde{\textrm{CBR}}(\tilde\rho_1)$ coincides with $\textrm{CBR}(\sigma^\textnormal{new}_1)$ on $G(S)$, which (again by Lemma~\ref{lem:gadget_values}) implies the equality of the corresponding counterfactual values, and hence the conclusion of the corollary.
To see that the claim is true, recall that the counterfactual best-response is constructed by the standard backwards-induction in the corresponding game tree (player 2 always picks the action with highest $v^{(\cdot)}_2(I,a)$). 
We need this backwards induction to return the same strategy independently of whether $G(S)$ is viewed as a subgame of $G$ or $G\left< S, \sigma, v_2^\sigma \right>$.
But this is trivial, since both the structure of $G(S)$ and counterfactual values of player 2 are the same in both games.
This concludes the proof.
\end{proof}

\subsection{Resolving}

The second useful property of resolving games is that if we start with an approximately optimal strategy, and find an approximately optimal strategy in the resolving game, then the new strategy will again be approximately optimal.
This is formalized by the following immediate corollary of Lemma 24 and 25 from~\cite{DeepStack}:

\begin{lemma}[Resolved strategy performs well]\label{lem:resolving_lemma}
Let $S\in \mc S$, $\sigma \in \Sigma$ and denote by $\tilde \rho$ a strategy in the resolving game $\widetilde G\left( S, \sigma, v_2^\sigma\right)$.
Then the exploitability $\expl_1(\sigma^\textnormal{new})$ of the strategy
$\sigma^\textnormal{new} := \sigma |_{G(S) \leftarrow \tilde \rho}$ is no greater than
\[ \expl_1(\sigma) + \sum_{I \in S(2)}
\left| v_2^{\sigma_1,\textrm{CBR}(\sigma_1)}(I) - v_2^\sigma(I) \right| + \widetilde \expl_1( \tilde \rho)
.\]
\end{lemma}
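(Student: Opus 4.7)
The plan is to bound $\expl_1(\sigma^\textnormal{new})$ by decomposing play at the boundary of $G(S)$, bounding the ``outside'' contribution by $\expl_1(\sigma)$, and translating the ``inside'' contribution into the gadget game via Lemma \ref{lem:gadget_values}. I would start by fixing a counterfactual best response $\sigma_2^\star := \textrm{CBR}(\sigma_1^\textnormal{new})$, so that for any Nash equilibrium $\sigma^\ast$ we have $\expl_1(\sigma^\textnormal{new}) = u_1(\sigma^\ast) - u_1(\sigma_1^\textnormal{new}, \sigma_2^\star)$. I would then write this difference as the telescoping sum $[u_1(\sigma^\ast) - u_1(\sigma_1, \sigma_2^\star)] + [u_1(\sigma_1, \sigma_2^\star) - u_1(\sigma_1^\textnormal{new}, \sigma_2^\star)]$. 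The first bracket is bounded by $\expl_1(\sigma)$ because $\sigma_2^\star$ is an admissible opponent strategy against $\sigma_1$ (and $u_1(\sigma^\ast)$ is the value of the game). The second bracket is supported entirely on histories entering $G(S)$, since $\sigma_1$ and $\sigma_1^\textnormal{new}$ coincide everywhere outside $G(S)$.

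For the ``inside'' term I would express it as a sum over $I \in S(2)$ of the counterfactual value differences $v_2^{(\sigma_1^\textnormal{new}, \sigma_2^\star)}(I) - v_2^{(\sigma_1, \sigma_2^\star)}(I)$, weighted by player $2$'s reach contributions. Adding and subtracting $v_2^\sigma(I)$ at each $I$, the sum splits into two parts. The first part, involving $v_2^{(\sigma_1, \sigma_2^\star)}(I) \le v_2^{\sigma_1,\textrm{CBR}(\sigma_1)}(I)$, produces the middle summand $\sum_{I \in S(2)} \bigl|v_2^{\sigma_1,\textrm{CBR}(\sigma_1)}(I) - v_2^\sigma(I)\bigr|$. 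The second part measures the surplus $v_2^{(\sigma_1^\textnormal{new}, \sigma_2^\star)}(I) - v_2^\sigma(I)$ inside the subgame under the new strategy; by Lemma \ref{lem:gadget_values} each $v_2^{(\sigma_1^\textnormal{new}, \sigma_2^\star)}(I)$ equals its gadget counterpart $\tilde v_2^{\tilde \rho^\star}(I)$ in $\widetilde G(S, \sigma, v_2^\sigma)$, where $\tilde \rho^\star$ extends $\tilde \rho$ by the gadget analogue of $\sigma_2^\star$. Since in the gadget player $2$'s ``terminate'' action already pays exactly $v_2^\sigma(I)$ (up to the common normalization $\pi_{-2}^\sigma(S)$), the total surplus that ``follow'' can yield over ``terminate'' is precisely the gadget exploitability $\widetilde{\expl}_1(\tilde \rho)$, which bounds the aggregated second part.

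The main obstacle will be the boundary bookkeeping: I must verify that the reach-probability weights appearing when passing from utilities to sums of counterfactual values cancel correctly against the $\pi_{-2}^\sigma(S)$ factors used to rescale the gadget, so that $\widetilde{\expl}_1(\tilde \rho)$ enters without spurious constants, and that these manipulations remain valid when $S$ contains thick augmented information sets. This is precisely the reason the modified gadget in Section \ref{sec:gadget} normalizes over the upper frontier $\hat S$ rather than over all of $S$; correctness of that modification, already encoded in Lemma \ref{lem:gadget_values} and Corollary \ref{lem:gadget_BR}, makes the translation of the inside term consistent. Assembling the outside bound $\expl_1(\sigma)$, the value-approximation term, and the gadget exploitability $\widetilde{\expl}_1(\tilde \rho)$ then yields the claimed three-term bound.
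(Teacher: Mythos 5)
There is a genuine gap in your treatment of the ``inside'' term, and it is a sign issue that no local patch can fix. Your telescoping fixes $\sigma_2^\star = \textrm{CBR}(\sigma_1^\textnormal{new})$ and bounds the first bracket by $\expl_1(\sigma)$, which is fine; the second bracket then equals $\sum_{I}\pi_2^{\sigma_2^\star}(I)\big[v_2^{(\sigma_1^\textnormal{new},\sigma_2^\star)}(I) - v_2^{(\sigma_1,\sigma_2^\star)}(I)\big]$. After adding and subtracting $v_2^\sigma(I)$, the piece $v_2^\sigma(I) - v_2^{(\sigma_1,\sigma_2^\star)}(I)$ must be bounded \emph{above}, i.e.\ you need a \emph{lower} bound on $v_2^{(\sigma_1,\sigma_2^\star)}(I)$; the inequality $v_2^{(\sigma_1,\sigma_2^\star)}(I) \le v_2^{\sigma_1,\textrm{CBR}(\sigma_1)}(I)$ you invoke points the wrong way. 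And no lower bound exists: inside $G(S)$, $\sigma_2^\star$ best-responds to $\tilde\rho_1$ and can be arbitrarily bad against the old $\sigma_1$. Concretely, let player 2 choose L (she gets 1) or R, after which player 1 chooses between giving her 2 or 0; let $\sigma = (\sigma_1 = \textnormal{``give }0\textnormal{''},\ \sigma_2 = L)$ and let resolving return the bad strategy $\tilde\rho_1 = \textnormal{``give }2\textnormal{''}$. Then $v_2^\sigma(I)=1$, the middle summand is $0$, $\widetilde\expl_1(\tilde\rho)=1$, but $\sigma_2^\star$ plays R, so your second bracket is $2-0=2 > 0+1$. The lemma itself holds with equality ($\expl_1(\sigma^\textnormal{new})=1$) only because the first bracket is $-1$ --- exactly the slack you discarded when relaxing $u_1(\sigma_1,\sigma_2^\star)$ to the minimum over $\sigma_2$. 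So bounding the two brackets independently, as you propose, is irreparably leaky. Your allocation of the other piece is also off: the surplus of ``follow'' over ``terminate'' against $\tilde\rho_1$ is $\widetilde\expl_1(\tilde\rho)$ \emph{plus} the gap between the gadget's game value and the total terminate payoff $\sum_I \tilde v(I)$, not $\widetilde\expl_1(\tilde\rho)$ alone.

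The repair, which is how the argument the paper relies on actually runs (the paper gives no proof --- it derives the lemma as an immediate corollary of Lemmata 24 and 25 of the DeepStack supplement), keeps the trunk comparison \emph{coupled}: compare $u_2(\sigma_1^\textnormal{new},\sigma_2^\star)$ with $u_2(\sigma_1,\sigma_2^\dagger)$, where $\sigma_2^\dagger$ agrees with $\sigma_2^\star$ outside $G(S)$ but switches to $\textrm{CBR}(\sigma_1)$ inside. The boundary difference then becomes $\sum_I \pi_2(I)\big[v_2^{\tilde\rho_1,\textrm{CBR}(\tilde\rho_1)}(I) - v_2^{\sigma_1,\textrm{CBR}(\sigma_1)}(I)\big]$, while $u_2(\sigma_1,\sigma_2^\dagger)$ is still dominated by player 2's best response to $\sigma_1$, yielding the $\expl_1(\sigma)$ term. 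The remaining sum is then priced in the gadget: player 2's best-response value in $\widetilde G$ against $\tilde\rho_1$ is $\sum_I \max\{\tilde v(I),\, v_2^{\tilde\rho_1,\textrm{CBR}(\tilde\rho_1)}(I)\}$ (terminate vs.\ follow, using your correct observation that terminating at $\tilde I$ is worth exactly $\tilde v(I)$ and Lemma~\ref{lem:gadget_values} for the follow values), which equals the gadget value plus $\widetilde\expl_1(\tilde\rho)$; and the gadget value is bounded above by letting player 1 play $\sigma_1$ there, giving $\sum_I \max\{v_2^\sigma(I),\, v_2^{\sigma_1,\textrm{CBR}(\sigma_1)}(I)\} = \sum_I v_2^{\sigma_1,\textrm{CBR}(\sigma_1)}(I)$. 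This is where the middle summand genuinely originates --- it measures how much the terminate payoffs $\tilde v(I)=v_2^\sigma(I)$ undersell player 2's true best-response values --- not from comparing $\sigma_2^\star$ with $\textrm{CBR}(\sigma_1)$ under the old profile. Your upper-frontier and normalization bookkeeping is sound, but the decomposition must be restructured as above for the three-term bound to assemble.
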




\subsection{Monte Carlo Continual Resolving}

Suppose that \ac{MCCR} is run with parameters $T_0$ and $T_R$. For a game requires at most $N$ resolving steps, we then have the following theoretical guarantee:

\MCCRbound*

\begin{proof}
Without loss of generality, we assume that \ac{MCCR} acts as player 1.
We prove the theorem by induction. The initial step of the proof is different from the induction steps, and goes as follows.
Let $S_0$ be the root of the game and denote by $\sigma^0$ the strategy obtained by applying $T_0$ iterations of \ac{MCCFR} to $G$.
Denote by $\epsilon^E_0$ the upper bound on $\expl_1(\sigma^0)$ obtained from Lemma~\ref{lem:MCCFR_expl}.
If $S_1$ is the first encountered public state where player 1 acts then by Lemma~\ref{lem:MCCFR_values} $\sum_{I\in S_1(2))} \left|v_2^{\sigma^0}(I) - v_2^{\sigma^0_1 , \textrm{CBR}(\sigma_1^0) }(I)\right|$ is bounded by some $\epsilon^A_0$.
This concludes the initial step.

For the induction step, suppose that $n\geq 1$ and player 1 has already acted $(n-1)$-times according to some strategy $\sigma^{n-1}$ with $\expl_1(\sigma^{n-1})\leq \epsilon^E_{n-1}$, and is now in a public state $S_n$ where he needs to act again. Moreover, suppose that there is some $\epsilon^A_{n-1}\geq 0$ s.t. 
\[ \sum_{S_n(2)} \left|
v_2^{\sigma^{n-1}}(I) - v_2^{\sigma^{n-1}_1 , \textrm{CBR}(\sigma^{n-1}_1) }(I)
\right| \leq \epsilon^A_{n-1} .\]

We then obtain some strategy $\tilde \rho^n$ by resolving the game $\widetilde G_{n} := G\left<S_{n}, \sigma^n,v_2^{\sigma^n}\right>$ by $T_R$ iterations of \ac{MCCFR}.
By Lemma~\ref{lem:MCCFR_expl}, the exploitability $\widetilde \expl_1(\tilde \rho^n)$ in $\widetilde G_{n}$ is bounded by some $\epsilon^R_n$.

We choose our next action according to the strategy
\[ \sigma^n := \sigma^{n-1} |_{G(S_n) \leftarrow \tilde \rho^n} .\]
By Lemma~\ref{lem:resolving_lemma}, the exploitability $\expl_1(\sigma^n)$ is bounded by $\epsilon^E_{n-1} + \epsilon^A_{n-1} + \epsilon^R_n =: \epsilon^E_n$.
The game then progresses until it either ends without player 1 acting again, or reaches a new public state $S_{n+1}$ where player 1 acts.

If such $S_{n+1}$ is reached, then by Lemma~\ref{lem:MCCFR_values}, the value approximation error
$\sum_{S_{n+1}(2)} \left|
\tilde v_2^{\tilde \rho^n}(I) - \tilde v_2^{\tilde \rho^n_1 , \textrm{CBR}(\tilde \rho^n_1) }(I)
\right|$
\emph{in the resolving gadget game} is bounded by some $\epsilon^A_{n+1}$. By Lemma~\ref{lem:gadget_values}, this sum is equal to the value approximation error
$$\sum_{S_{n+1}(2)} \left|
v_2^{\sigma^{n}}(I) - v_2^{\sigma^{n}_1 , \textrm{CBR}(\sigma^{n}_1) }(I)
\right|$$
in the original game.
This concludes the inductive step.

Eventually, the game reaches a terminal state after visiting some sequence $S_1,\dots,S_{n}$ of public states where player 1 acted by using the strategy $\sigma := \sigma^N$. We now calculate the exploitability of $\sigma$.
It follows from the induction that
\[ \expl_1(\sigma) \leq \epsilon^E_N = \epsilon^E_0 + \epsilon^A_0 + \epsilon^R_1 + \epsilon^A_1 + \epsilon^R_2 + \dots + \epsilon^A_{N-1} + \epsilon^R_N .\]
To emphasize which variables come from using $T_0$ iterations of \ac{MCCFR} in the original game and which come from applying $T_R$ iterations of \ac{MCCFR} to the resolving game, we set $\tilde \epsilon^R_n := \epsilon^R_n$ and $\tilde \epsilon^A_n := \epsilon^A_n$ for $n\geq 1$. We can then write
\begin{equation}\label{eq:main_thm_precise}
\expl_1(\sigma) \leq \epsilon^E_N = \epsilon^E_0 + \epsilon^A_0 + \sum_{n=1}^{N-1} (\tilde \epsilon^R_n + \tilde \epsilon^A_n ) + \tilde \epsilon^R_N.
\end{equation}
Since the bound from Lemma~\ref{lem:MCCFR_expl} is strictly higher than the one from Lemma~\ref{lem:MCCFR_values}, we have $\epsilon^A_0 \leq \epsilon^E_0$ and $\tilde \epsilon^A_n \leq \tilde \epsilon^R_n$.
Moreover, we have $G(S_1) \supset G(S_2) \supset \dots G(S_N)$, which means that $\tilde \epsilon^R_1 \geq \tilde \epsilon^R_2 \geq \dots \tilde \epsilon^R_N$.
It follows that $\expl_1(\sigma) \leq 2 \epsilon^E_0 + (2N-1)\tilde \epsilon^R_1$. Finally, we clearly have $N\leq D_1$, where $D_1$ is the ``player 1 depth of the public tree of $G$''. This implies that
\begin{equation}\label{eq:main_thm_nice}
\expl_1(\sigma) \leq 2 \epsilon^E_0 + (2D_1 - 1) \tilde \epsilon^R_1 .
\end{equation}

Plugging in the specific numbers from Lemma~\ref{lem:MCCFR_expl} for $\epsilon^E_0$ and $\tilde \epsilon^R_1$ gives the exact bound, and noticing that we have used the lemma $(D_1+1)$-times implies that the result holds with probability $(1-p)^{D_1+1}$.
\end{proof}

Note that a tighter bound could be obtained if we were more careful and plugged in the specific bounds from Lemma~\ref{lem:MCCFR_expl} and \ref{lem:MCCFR_values} into \eqref{eq:main_thm_precise}, as opposed to using \eqref{eq:main_thm_nice}. Depending on the specific domain, this would yield something smaller than the current bound, but higher than $\epsilon^E_0 + \tilde \epsilon^R_1$.

\section{Computing Counterfactual Values Online}\label{sec:cf_values}

For the purposes of \ac{MCCR}, we require that our solver (\ac{MCCFR}) also returns the counterfactual values of the average strategy.
The straightforward way of ensuring this is to simply calculate the counterfactual values once the algorithm has finished running. However, this might be computationally intractable in larger games, since it potentially requires traversing the whole game tree.
One straightforward way of fixing this issue is to replace this exact computation by a sampling-based evaluation of $\bar \sigma^T$. With a sufficient number of samples, the estimate will be reasonably close to the actual value.

In practice, this is most often solved as follows.
During the normal run of \ac{MCCFR}, we additionally compute the opponent's \emph{sampled counterfactual values}
\[ \tilde v_2^{\sigma^t}(I) :=
\frac{1}{\pi^{\sigma'}(z)}  \pi^{\sigma^t}_{-2}(h)  \pi^{\sigma^t} \! (z|h)  u_2(z)
.\]
Once the $T$ iterations are complete, the counterfactual values of $\bar \sigma^T$ are estimated by $\tilde v(I) := \frac{1}{T}\sum \tilde v_2^{\sigma^t}(I)$.
While this arithmetical average is the standard way of estimating $v_2^{\bar \sigma^T}(I)$, it is also natural to consider alternatives where the uniform weights are replaced by either $\pi^{\sigma^t}(I)$ or $\pi_2^{\sigma^t}(I)$.
In principle, it is possible for all of these weighting schemes to fail (see the counterexample in Section~\ref{sec:cfv_counter_ex}).
We experimentally show that all of these possibilities produce good estimates of $v_2^{\bar \sigma^T}$ in many practical scenarios, see~Figure~\ref{fig:cfv_comparison_domains}.
Even when this isn't the case, one straightforward way to fix this issue is to designate a part of the computation budget to a sampling-based evaluation of $\bar \sigma^T$.
Alternatively, in Lemma~\ref{lem:u_of_avg} we present a method inspired by lazy-weighted averaging from~\cite{lanctot_thesis} that allows for computing unbiased estimates of $v_2^{\sigma^T}$ on the fly during \ac{MCCFR}.

In the main text, we have assumed that the exact counterfactual values are calculated, and thus that $\tilde v(I) = v_2^{\bar \sigma^T}(I)$.
Note that this assumption is made to simplify the theoretical analysis -- in practice, the difference between the two terms can be incorporated into Theorem~\ref{thm:mccr} (by adding the corresponding term into Lemma~\ref{lem:resolving_lemma}).

\begin{figure}[t]
	\includegraphics[width=1\linewidth]{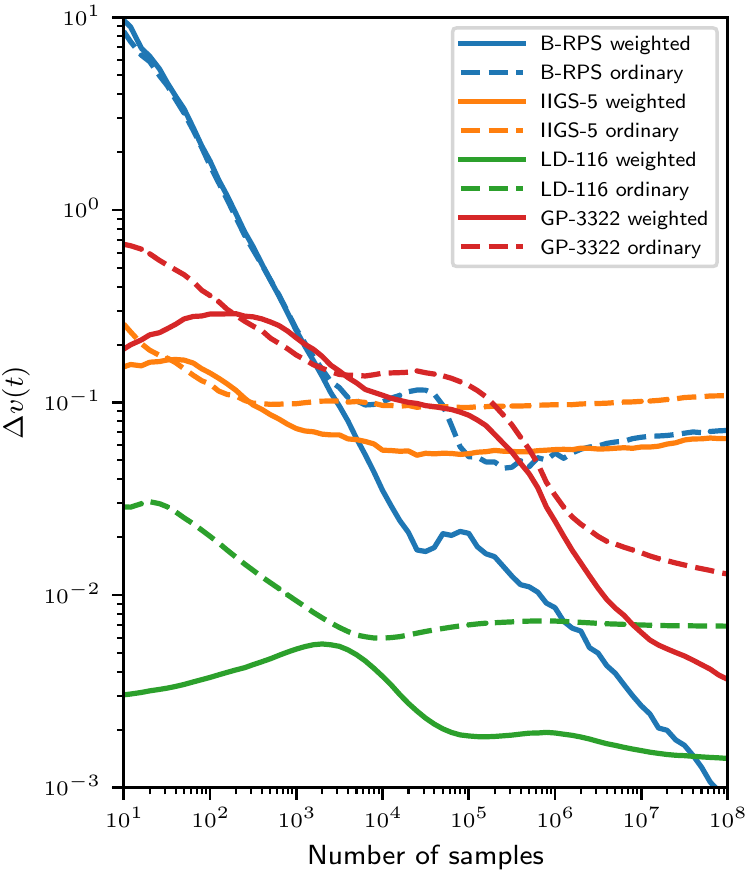}
	\caption{An extension of experiment~\ref{sec:experiment-cfv-averaging}~and~Figure~\ref{fig:expl} (left) for small domains.
	We calculate the exact values $u(h)$, and compute the absolute differences of each weighing sampling scheme ($\tilde{u}^1(h)$ and $\tilde{u}^2(h)$) to exact values. Those differences are then averaged across information sets and seeds.
	In each domain, the weighted averages (solid) have smaller error than ordinary averages (dashed) and are thus better approximations of exact values.}
	\label{fig:cfv_comparison_domains}
\end{figure}

\subsection{The Counterexample}\label{sec:cfv_counter_ex}

In this section we show that no weighting scheme can be used as a universal method for computing the counterfactual values of the average strategy on the fly.
We then derive an alternative formula for $v_i^{\bar \sigma^T}$ and show that it can be used to calculate unbiased estimates of $v_i^{\bar \sigma^T}$ in \ac{MCCFR}.
Note that this problem is not specific to \ac{MCCFR}, but also occurs in CFR (although it is not so pressing there, since CFR's iterations are already so costly that the computation of exact counterfactual values of $\bar \sigma^T$ is not a major expense).
But since these issues already arise in \ac{CFR}, we will work in this simpler setting.

Suppose we have a history $h\in \mc H$, strategies $\sigma^1, \dots, \sigma^T$ and the average strategy $\bar \sigma^T$ defined as
\begin{equation}\label{eq:avg_strat}
\bar \sigma^T (I) := \sum_t \pi_i^{\sigma^t}\!\!(I) \, \sigma^t(I) \, / \, \sigma_t \pi_i^{\sigma^t}\!\!(I)
\end{equation}
for $I\in \mc I_i$.
First, note that we can easily calculate $\pi^{\bar \sigma^T}_{-i}(h)$. Since $v_i^{\bar \sigma^T}(h) = \pi^{\bar \sigma^T}_{-i}(h) u_i^{\bar \sigma^T}(h)$, an equivalent problem is that of calculating the expected utility of the average strategy at $h$ on the fly, i.e. by using $u_i^{\sigma^t}(h)$ and possibly some extra variables, but without having to  traverse the whole tree below $h$.

Looking at the definition of the average strategy, the most natural candidates for an estimate of $u_i^{\bar \sigma^T}(h)$ are the following weighted averages of $u_i^{\sigma^t}(h)$:
\begin{align*}
\tilde u^1(h) & := \sum_t u_i^{\sigma^t}\!\!(h) \, / \, T \\
\tilde u^2(h) & := \sum_t \pi_i^{\sigma^t}\!\!(h) \, u_i^{\sigma^t}\!\!(h) \, / \, \sum_t \pi_i^{\sigma^t}\!\!(g) \\
\tilde u^3(h) & := \sum_t \pi^{\sigma^t}\!\!(h) \, u_i^{\sigma^t}\!\!(h) \, / \, \sum_t \pi^{\sigma^t}\!\!(g) .
\end{align*}

\def\averageStrategyFail{
\tikzset{
	level 1/.style = {level distance = 1.5\nodesize, sibling distance=1.5\nodesize},
	level 2/.style = {level distance = 1.5\nodesize, sibling distance=1.5\nodesize},
	level 3/.style = {level distance = 1.5\nodesize, sibling distance=1.5\nodesize},
	level 4/.style = {level distance = 1.5\nodesize, sibling distance=1.5\nodesize},
	level 5/.style = {level distance = 1.5\nodesize, sibling distance=1.5\nodesize},
}
\node(1)[pl1]{$h_0$}
	child[grow=up]{node[terminal]{0}}
	child[grow=right]{node(2)[pl2]{$h_1$}
		child[grow=up]{node[terminal]{0}}
		child[grow=right]{node(3)[pl1]{$h_2$}
			child[grow=up]{node[terminal]{0}}
			child[grow=right]{node(4)[pl2]{$h_3$}
				child[grow=up]{node[terminal]{0}}
				child[grow=right]{node[terminal,label=above:{z}]{1}}
			}
		}
	}
;
}
\begin{figure}[h]
\centering 
\begin{tikzpicture}
\averageStrategyFail
\end{tikzpicture}
\caption{A domain where weighting schemes for $v_2^{\bar \sigma^T}$ fail.}
\label{fig:avg_str_fail}
\end{figure}
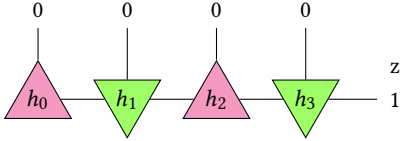

\begin{example}[No weighting scheme works]
Each of the estimates $\tilde u^j(h)$, $j=1,2,3$, can fail to be equal to   $u_i^{\bar \sigma^T}$. Yet worse, no similar works reliably for every sequence $(\sigma^t)_t$.
\end{example}

\noindent Consider the game from Figure~\ref{fig:avg_str_fail} with $T=2$, where under $\sigma^1$, each player always goes right (R) and under $\sigma^2$ the probabilities of going right are $\frac{1}{2}, \frac{1}{3}, \frac{1}{5}$ and $\frac{1}{7}$ at $h_0$, $h_1$, $h_2$ and $h_3$ respectively.
A straightforward application of \eqref{eq:avg_strat} shows that the probabilities of R under $\bar \sigma^2$ are $\frac 3 4$, $\frac 2 3$, $\frac{11}{15}$ and $\frac{11}{14}$ and hence $u_1^{\bar \sigma^2}(h_2) = \frac{11}{15}\cdot \frac{11}{14} = \frac{121}{210}$. On the other hand, we have $\tilde u^1(h_2)=\frac{108}{210}$, $\tilde u^2(h_2)=\frac{142}{210}$ and $\tilde u^3(h_2)\doteq\frac{181}{210}$.

To prove the ``yet worse'' part, consider also the sequence of strategies $\nu^1$, $\nu^2$, where $\nu^1 = \sigma^1$ and under $\nu^2$, the probabilities of going right are $\frac{1}{2}, \frac{1}{3}, 1$ and $\frac{1}{5}\cdot\frac{1}{7}$ at $h_0$, $h_1$, $h_2$ and $h_3$ respectively.
The probabilities of R under $\bar \nu^2$ are $\frac 3 4$, $\frac 2 3$, $1$ and $\frac{27}{35}$ and hence $u_1^{\bar \nu^2}(h_2) = \frac{27}{35} = \frac{162}{210} \neq \frac{121}{210} = u_1^{\bar \sigma^2}(h_2)$.
However, the strategies $\sigma^t$ and $\nu^t$ coincide between the root and $h_2$ for each $t$, and so do the utilities $u^{\sigma^t}(h_0)$ and $u^{\nu^t}(h_0)$.
Necessarily, any weighting scheme of the form
\[ \tilde u(h) := \sum_t w_t(h) u_i^{\sigma^t}\!\!(h) \, / \, \sum_t w_t(h)  \]
where $w_t(h) \in \R$ only depends on the strategy $\sigma^t$ between the root and $h$, yields the same estimate for $(\sigma^t)_{t=1,2}$ and for $(\nu^t)_{t=1,2}$.
As a consequence, any such weighting scheme will be wrong in at least one of these cases.

\subsection{An Alternative Formula for the Utility of \texorpdfstring{$\bar \sigma^T$}{the average strategy}}

We proceed in three steps. First, we derive an alternative formula for $u^{\bar \sigma^T}(h)$ that uses the cumulative reach probabilities
\[ \textnormal{crp}_i^t(z) := \pi_i^{\sigma^1}(z) + \dots + \pi_i^{\sigma^t}(z) \]
for $z\in \mc Z$.
Then we remark that during \ac{MCCFR} it suffices to keep track of $\textnormal{crp}_i^t(ha)$ where $a\in \mc A(h)$ and $h$ is in the tree built by \ac{MCCFR}, and show how these values can be calculated similarly to the lazy-weighted averaging of~\cite{lanctot_thesis}.
Lastly, we note that a sampled variant can be used in order to get an unbiased estimate of $u_i^{\bar \sigma^T}$.

Recall the standard fact that the average strategy satisfies\begin{equation}\label{eq:avg_util}
u^{\bar \sigma^T_1, \nu_2}_i (h)
= \frac{\sum_t \pi^{\sigma^t_1}_1(h) u^{\sigma^t_1, \nu_2}_i (h)}{\sum_t \pi^{\sigma^t_1}_1(h)}
\end{equation}
for every $h\in \mc H$, and has the analogous property for $\bar \sigma^T_2$.
Indeed, this follows from the formula
\begin{equation}
\pi_i^{\bar \sigma^T} (h)
= \frac{1}{T} \sum_t \pi^{\sigma^t}_i(h) ,
\end{equation}
which can be proven by induction over the length of $h$ using \eqref{eq:avg_strat}.

\begin{lemma}\label{lem:u_of_avg}
For any $h\in \mc H$, $i$ and $\sigma^1,\dots, \sigma^T$, we have $u_i^{\bar \sigma^T}(h) = $
\[
\frac{\sum_t \sum_{z\sqsupset h}
	\left( \pi_1^{\sigma^t}\!\!(z)\textnormal{crp}_2^t(z) + \textnormal{crp}_1^t(z) \pi_2^{\sigma^t}\!\!(z) - \pi^{\sigma^t}_{1,2}(z) \right)
	\pi_c(z|h) u_i(z)	
}
{ \textnormal{crp}_1^T(h) \textnormal{crp}_2^T(h) }
. \]
\end{lemma}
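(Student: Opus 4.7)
The plan is to start from the definition $u_i^{\bar\sigma^T}(h) = \sum_{h\sqsubset z\in\mc Z}\pi^{\bar\sigma^T}(z|h)\,u_i(z)$ and to rewrite the conditional reach probability in terms of the cumulative reach probabilities. First I factor $\pi^{\bar\sigma^T}(z|h) = \pi_1^{\bar\sigma^T}(z|h)\,\pi_2^{\bar\sigma^T}(z|h)\,\pi_c(z|h)$, noting that the chance factor does not depend on $\sigma$. Combining the identity $\pi_i^{\bar\sigma^T}(g) = \textnormal{crp}_i^T(g)/T$ (established in the paragraph preceding the lemma by induction on the length of $g$) applied at both $z$ and $h$, the $1/T$ factors cancel and yield
$$\pi_i^{\bar\sigma^T}(z|h) \;=\; \frac{\pi_i^{\bar\sigma^T}(z)}{\pi_i^{\bar\sigma^T}(h)} \;=\; \frac{\textnormal{crp}_i^T(z)}{\textnormal{crp}_i^T(h)}.$$
Pulling the $h$-dependent denominators out of the sum over $z$ reduces the lemma to the purely algebraic identity
$$\textnormal{crp}_1^T(z)\,\textnormal{crp}_2^T(z) \;=\; \sum_{t=1}^T \Bigl(\pi_1^{\sigma^t}(z)\,\textnormal{crp}_2^t(z) + \textnormal{crp}_1^t(z)\,\pi_2^{\sigma^t}(z) - \pi_{1,2}^{\sigma^t}(z)\Bigr).$$

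The remaining step is to verify this identity. Setting $a_t := \pi_1^{\sigma^t}(z)$ and $b_t := \pi_2^{\sigma^t}(z)$, so that $\textnormal{crp}_1^t(z) = \sum_{s\le t} a_s$, $\textnormal{crp}_2^t(z) = \sum_{s\le t} b_s$ and $\pi_{1,2}^{\sigma^t}(z) = a_t b_t$, the right-hand side becomes $\sum_{s\le t} a_t b_s + \sum_{s\le t} a_s b_t - \sum_t a_t b_t$. Splitting each of the first two double sums according to whether $s<t$ or $s=t$, the diagonal is counted twice; the correction $-a_t b_t$ removes the extra copy, and what is left is the symmetric sum $\sum_{s,t} a_s b_t = \bigl(\sum_t a_t\bigr)\bigl(\sum_t b_t\bigr) = \textnormal{crp}_1^T(z)\,\textnormal{crp}_2^T(z)$, as desired. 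Plugging the identity back in and swapping the order of summation over $t$ and $z$ yields the formula in the statement of the lemma.

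The argument is conceptually simple; the only obstacle worth highlighting is the bookkeeping in the algebraic identity, where one must remember that the role of the subtracted $\pi_{1,2}^{\sigma^t}(z)$ term is exactly to cancel the diagonal contribution that appears once in each of the two $\textnormal{crp}$-weighted sums. Everything else follows by direct substitution, using the average-strategy identity established earlier and the fact that chance probabilities are invariant under the choice of strategy profile.
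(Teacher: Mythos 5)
Your proof is correct and takes essentially the same route as the paper's: both arguments reduce the claim to the double-sum identity $\sum_{s,t} a_t b_s = \sum_t\left[a_t(b_1+\dots+b_t) + (a_1+\dots+a_t)b_t - a_t b_t\right]$ applied to $a_t = \pi_1^{\sigma^t}(z)$, $b_s = \pi_2^{\sigma^s}(z)$. The only (cosmetic) difference is that you average reach probabilities at $z$ and $h$ directly via $\pi_i^{\bar\sigma^T}(g) = \textnormal{crp}_i^T(g)/T$, whereas the paper first applies its utility-averaging identity \eqref{eq:avg_util} once per player and then expands over terminal histories --- the paper itself derives \eqref{eq:avg_util} from the same reach-probability fact, so the two derivations pass through the identical intermediate expression.
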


\begin{proof}
For $h\in \mc H$, we can rewrite $u_i^{\bar \sigma^T}(h)$ as\begin{align*}
u_i^{\bar \sigma^T}(h)
& =  u_i^{\bar \sigma^T_1, \bar \sigma^T_2}(h)
\overset{\eqref{eq:avg_util}}{=}
 \frac{\sum_t \pi^{\sigma^t_1}_1(h) u^{\sigma^t_1, \bar \sigma^T_2}_i (h)}
	{\textnormal{crp}_1^T(h)} \\
& \overset{\eqref{eq:avg_util}}{=} \frac{
		\sum_t \pi^{\sigma^t_1}_1(h)
		\frac{\sum_s \pi^{\sigma^s_2}_2(h) u^{\sigma^t_1, \sigma^s_2}_i (h)}
		{\textnormal{crp}_2^T(h)}
	}
	{\textnormal{crp}_1^T(h)} \\
& = \frac{ 
		\sum_t \sum_s \pi^{\sigma^t_1}_1(h) \ \pi^{\sigma^s_2}_2(h)
		\ u^{\sigma^t_1, \sigma^s_2}_i (h)
	}
	{\textnormal{crp}_1^T(h)\textnormal{crp}_2^T(h)} = \frac{N}{D}.
\end{align*}
Using the definition of expected utility, we can rewrite the numerator $N$ as
\begin{align*}
N & = \sum_{s,t} \pi^{\sigma^t_1}_1(h) \ \pi^{\sigma^s_2}_2(h)
	\sum_{z \sqsupset h} \pi^{\sigma^t}_1(z|h) \ \pi^{\sigma^s}_2(z|h)
	\pi_c(z|h) u_i(z) \\
& = \sum_{s,t} \sum_{z \sqsupset h}
	\pi^{\sigma^t_1}_1(z) \ \pi^{\sigma^s_2}_2(z) \
	\pi_c(z|h) u_i(z) \\
& = \sum_{z \sqsupset h} \left( \pi_c(z|h) u_i(z)
	\sum_{s,t} 	\pi^{\sigma^t_1}_1(z) \ \pi^{\sigma^s_2}_2(z) \right) .
\end{align*}

The double sum over $s$ and $t$ can be rewritten using the formula
\[ \sum_t \sum_s  x_t y_s = \sum_t \left[ x_t(y_1+\dots+y_t) + (x_1+\dots+x_t)y_t - x_t y_t \right] , \]
which yields $\sum_{s,t} \pi^{\sigma^t}_1(z) \ \pi^{\sigma^s}_2(z)=$
\begin{align*}
 & = &  \sum_t \Big[ & \pi^{\sigma^t}_1\!(z) \left(\pi^{\sigma^1}_2\!(z)+\dots+\pi^{\sigma^t}_2\!(z) \right) \ + \\
&& & + \ \left( \pi^{\sigma^1}_1\!(z)+\dots+\pi^{\sigma^t}_1\!(z) \right)\pi^{\sigma^t}_2\!(z) - \pi^{\sigma^t}_1\!(z)\pi^{\sigma^t}_2\!(z) \Big] \\
& = & \sum_t \Big[ & \pi^{\sigma^t}_1\!(z) \ \textnormal{crp}^t_2(z) + 
	\textnormal{crp}^t_1 (z) \ \pi^{\sigma^t}_2\!(z) -
	\pi^{\sigma^t}_1\!(z)\pi^{\sigma^t}_2\!(z) \Big] .
 \end{align*}
Substituting this into the formula for $N$ and $\frac{N}{D}$ concludes the proof.
\end{proof}

\subsection{Computing Cumulative Reach Probabilities}
While it is intractable to store $\textnormal{crp}_i^t(z)$ in memory for every $z\in \mc Z$, we \emph{can} store the cumulative reach probabilities for nodes in the tree $\mc T_t$ built by \ac{MCCFR} at time $t$.
We can translate these into $\textnormal{crp}_i^t(z)$ with the help of the uniformly random strategy $\textnormal{rnd}$:

\begin{lemma}
Let $z\in \mc Z$ be s.t. $z \sqsupset ha$, where $h$ is a leaf of $\mc T_t$ and $a\in \mc A(h)$. Then we have $\textnormal{crp}_i^t(z) = \textnormal{crp}_i^t(ha) \pi^\textnormal{rnd}_i(z|ha)$.
\end{lemma}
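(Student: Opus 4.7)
The plan is to exploit two facts about incremental-tree \ac{MCCFR}: (i) the sampled tree $\mc T_s$ grows monotonically in $s$, so if $h$ is a leaf of $\mc T_t$ then no descendant of $h$ belongs to $\mc T_s$ for any $s\leq t$; and (ii) at every iteration $s$, any history that is outside the current sampled tree is handled by the uniform playout policy $\textnormal{rnd}$ (see the ``Incremental Tree-Building'' paragraph). Combining these, the effective strategy $\sigma^s$ restricted to the subtree below $ha$ coincides with $\textnormal{rnd}$ for every $s\leq t$.

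First I would make precise the claim ``$\sigma^s$ equals $\textnormal{rnd}$ below $ha$'', meaning $\sigma^s(I') = \textnormal{rnd}(I')$ for every information set $I'$ visited on any path $ha \sqsubset g \sqsubset z$. This follows because such an $I'$ contains only histories that are descendants of $ha$, none of which has been added to $\mc T_s$ by time $s\leq t$, so the player's action at $I'$ is drawn from $\textnormal{rnd}$ during the playout phase. As a consequence, for the player $i$ contribution we obtain
\begin{equation*}
\pi_i^{\sigma^s}(z \mid ha) \;=\; \pi_i^{\textnormal{rnd}}(z \mid ha)
\qquad \text{for every } s\leq t .
\end{equation*}

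Next, by perfect recall we have the usual factorization $\pi_i^{\sigma^s}(z) = \pi_i^{\sigma^s}(ha)\,\pi_i^{\sigma^s}(z\mid ha)$. Substituting the previous identity and summing over $s=1,\dots,t$ yields
\begin{align*}
\textnormal{crp}_i^t(z)
\;=\; \sum_{s=1}^t \pi_i^{\sigma^s}(z)
\;=\; \sum_{s=1}^t \pi_i^{\sigma^s}(ha)\,\pi_i^{\textnormal{rnd}}(z\mid ha)
\;=\; \pi_i^{\textnormal{rnd}}(z\mid ha)\,\textnormal{crp}_i^t(ha),
\end{align*}
which is the claim.

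The only real obstacle is the first step: one has to argue carefully that $ha$ remains outside the sampled tree throughout iterations $1,\dots,t$, and that ``outside the tree'' unambiguously means ``governed by $\textnormal{rnd}$''. This is where an explicit appeal to the monotonicity of $(\mc T_s)_s$ and to the definition of the playout policy is needed; everything after that is a mechanical regrouping of the summation. If one later wanted to allow a non-uniform playout policy $\pi^{\textnormal{po}}$, the exact same argument goes through with $\textnormal{rnd}$ replaced by $\pi^{\textnormal{po}}$.
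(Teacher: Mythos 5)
Your proof is correct and follows essentially the same route as the paper's: the paper's one-line argument is precisely the observation that $\sigma^s(g)=\textnormal{rnd}(g)$ for every $g\notin\mc T_t$ and every $s\le t$, which is the monotonicity-plus-playout fact you spell out before the (routine) factorization and summation. Your version merely makes explicit the steps the paper leaves implicit.
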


\begin{proof}
This immediately follows from the fact that for any $g \notin \mc T_t$, $\sigma^s(g) = \textnormal{rnd}(g)$ for every $s=1,2,\dots,t$.
\end{proof}

To keep track of $\textnormal{crp}_i^t(ha)$ for $h\in \mc T_t$, we add to it a variable $crp_i(h)$ and auxiliary variables $w_i(ha)$, $a\in \mc A(h)$, which measure the increase in cumulative reach probability since the previous visit of $ha$.
All these variables are initially set to $0$ except for $w_i(\emptyset)$ which is always assumed to be equal to 1.
Whenever \ac{MCCFR} visits some $h\in \mc T_t$,  is visited, $\textnormal{crp}_i(h)$ is increased by $w_i(h)$ (stored in $h$'s parent), each $w_i(ha)$ is increased by $w_i(h) \pi_i^{\sigma^t}(ha|h)$ and $w_i(h)$ (in the parent) is set to $0$.
This ensures that whenever a value $w_i(ha)$ gets updated without being reset, it contains the value $\textnormal{crp}_i^t(ha) - \textnormal{crp}_i^{t_{ha}}(ha)$, where $t_{ha}$ is the previous time when $ha$ got visited. As a consequence, the variables $\textnormal{crp}_i(ha)$ that do get updated are equal to $\textnormal{crp}^t_i(ha)$.
Note that this method is very similar to the lazy-weighted averaging of~\cite{lanctot_thesis}.

Finally, we observe that the formula from Lemma~\ref{lem:u_of_avg} can be used for on-the-fly calculation of an unbiased estimate of $u_i^{\bar \sigma^T}(h)$.
Indeed, it suffices to replace the sum over $z$ by its sampled variant $\hat s^t_i(h) :=$
\begin{equation}\label{eq:sampled_u_of_avg}
\frac{1}{q^t(z)} \left(
\pi_1^{\sigma^t}\!\!(z)\textnormal{crp}_2^t(z)
+ \textnormal{crp}_1^t(z) \pi_2^{\sigma^t}\!\!(z)
- \pi^{\sigma^t}_{1,2}(z)
\right) \pi_c(z|h) u_i(z)
,
\end{equation}
where $z$ is the terminal state sampled at time $t$ and $q^t(z)$ is the probability that it got sampled with $z$.
We keep track of the cumulative sum $\sum_t \hat s_i^t(h)$ and, once we reach iteration $T$, we do one last update of $h$ and set
\begin{align*}
\tilde u_i(h) := \frac{ \sum_t \hat s^t_i(h)}{ \textnormal{crp}_1^T(h) \textnormal{crp}_2^T(h) }  & \ \ \ \ \textnormal{ and } & 
\tilde v_i(h) := \pi_{-i}^{\bar \sigma^T}(h) \tilde u_i(h) .
\end{align*}
By \eqref{eq:sampled_u_of_avg} and Lemma~\ref{lem:u_of_avg}, we have $\mathbf E \tilde u_i(h) = u_i^{\bar \sigma^T}(h)$ and thus $\mathbf E \tilde v_i(h) = v_i^{\bar \sigma^T}(h)$.
Note that $\tilde v_i(h)$ might suffer from a very high variance and devising its low-variance modification (or alternative) would be desirable.

\section{Game Rules}\label{sec:rules}

\textbf{Biased Rock Paper Scissors} \texttt{B-RPS}
is a version of standard game of Rock-Paper-Scissors with modified payoff matrix:
\begin{table}[H]
    \centering
    \begin{tabular}{c|ccc}
        ~ & R & P & S \\
        \hline
        R & 0 & -1 & 100 \\
        P & 1 & 0 & -1 \\
        S & -1 & 1 & 0
    \end{tabular}
\end{table}
This variant gives the first player advantage and breaks the game action symmetry.

\textbf{Phantom Tic-Tac-Toe} \texttt{PTTT}
Phantom Tic-Tac-Toe is a partially observable variant of Tic-Tac-Toe. It is played by two players on 3x3 board and in every turn one player tries to mark one cell. The goal is the same as in perfect-information Tic-Tac-Toe, which is to place three consecutive marks in a horizontal, vertical, or diagonal row.

Player can see only his own marked cells, or the marked cells of the opponent if they have been revealed to him by his attempts to place the mark in an occupied cell.

If the player is successful in marking the selected cell, the opponent takes an action in the next round. Otherwise, the player has to choose a cell again, until he makes a successful move.

The opponent receives no information about the player's attempts at moves.

\textbf{Imperfect Information Goofspiel} In \texttt{II-GS(N)}, each player is given a private hand of bid cards with values $0$ to $N-1$. A different deck of $N$ point cards is placed face up in a stack. On their turn, each player bids for the top point card by secretly choosing a single card in their hand. The highest bidder gets the point card and adds the point total to their score, discarding the points in the case of a tie. This is repeated $N$ times and the player with the highest score wins.

In \textit{II-Goofspiel}, the players only discover who won or lost a bid, but not the bid cards played. Also, we assume the point-stack is strictly increasing: $0, 1, \dots N-1$. This way the game does not have chance nodes, all actions are private and information sets have various sizes.

\textbf{Liar's Dice} \texttt{LD(D1,D2,F)}, also known as Dudo, Perudo, and Bluff is a dice-bidding game. Each die has faces $1$ to $F-1$ and a star $\star$. Each player $i$ rolls $D_i$ of these dice without showing them to their opponent. Each round, players alternate by bidding on the outcome of all dice in play until one player "calls liar'', i.e. claims that their opponent's latest bid does not hold. If the bid holds, the calling player loses; otherwise, she wins. A bid consists of a quantity of dice and a face value. A face of $\star$ is considered wild and counts as matching any other face. To bid, the player must increase either the quantity or face value of the current bid (or both).

All actions in this game are public. The only hidden information is caused by chance at the beginning of the game. Therefore, the size of all information sets is identical.

\textbf{Generic Poker} \texttt{GP(T, C, R, B)} is a simplified poker game inspired by Leduc Hold'em. First, both players are required to put one chip in the pot. Next, chance deals a single private card to each player, and the betting round begins. A player can either \textit{fold} (the opponent wins the pot), \textit{check} (let the opponent make the next move), \textit{bet} (add some amount of chips, as first in the round), \textit{call} (add the amount of chips equal to the last bet of the opponent into the pot), or \textit{raise} (match and increase the bet of the opponent).

If no further raise is made by any of the players, the betting round ends, chance deals one public card on the table, and a second betting round with the same rules begins. After the second betting round ends, the outcome of the game is determined - a player wins if: (1) her private card matches the table card and the opponent's card does not match, or (2) none of the players' cards matches the table card and her private card is higher than the private card of the opponent. If no player wins, the game is a draw and the pot is split.

The parameters of the game are the number of types of the cards $T$, the number of cards of each type $C$, the maximum length of sequence of raises in a betting round $R$, and the number of different sizes of bets $B$ (i.e., amount of chips added to the pot) for bet/raise actions.

This game is similar to Liar's Dice in having only public actions. However, it includes additional chance nodes later in the game, which reveal part of the information not available before. Moreover, it has integer results and not just win/draw/loss.

No Limit Leduc Hold'em poker with maximum pot size of $N$ and integer bets is $GP(3,2,N,N)$.

\begin{table}[t]
    \centering
    \begin{small}
        \begin{tabular}{rr}
            Game & $|\mc H|$ \\
            \hline
            IIGS(5) & 41331  \\
            IIGS(13) & $\approx 4\cdot 10^{19}$ \\
            LD(1,1,6) & 147456  \\
            LD(2.2,6) & $\approx 2\cdot 10^{10}$  \\
            GP(3,3,2,2) & 23760  \\
            GP(4,6,4,4) & $\approx 8 \cdot 10^{8}$ \\
            PTTT & $\approx 10^{10}$ \\
        \end{tabular}
    \end{small}
    \caption{Sizes of the evaluated games.}\label{tab:sizes}
\end{table}

\section{Extended results}

\begin{table*}[p]
    \begin{tabular}{ccc}
        \texttt{II-GS(5)} & \texttt{LD(1,1,6)} & \texttt{GP(3,3,2,2)} \\
        \includegraphics[width=0.27\linewidth]{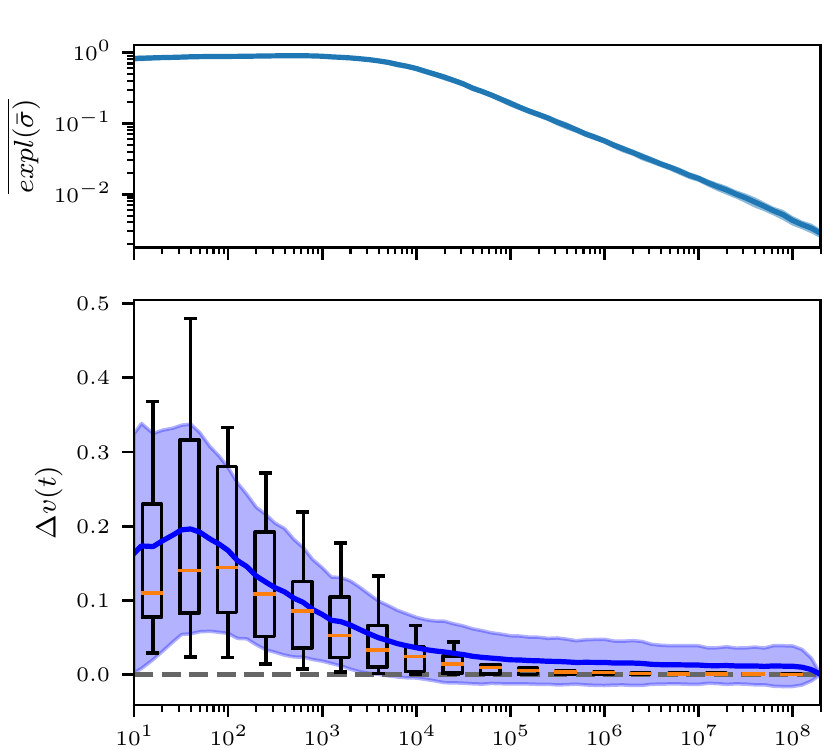} &
        \includegraphics[width=0.27\linewidth]{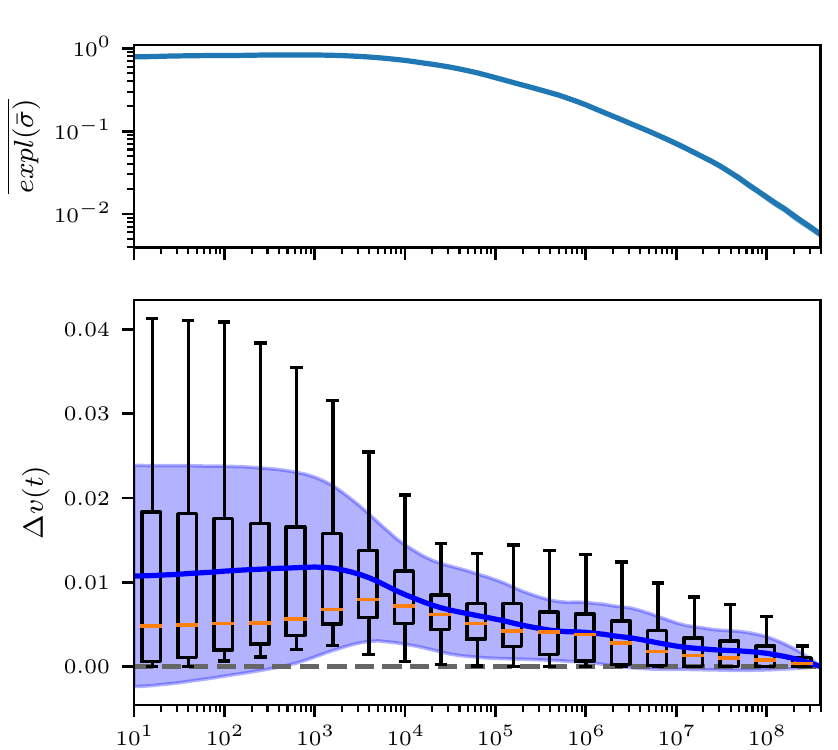} &
        \includegraphics[width=0.27\linewidth]{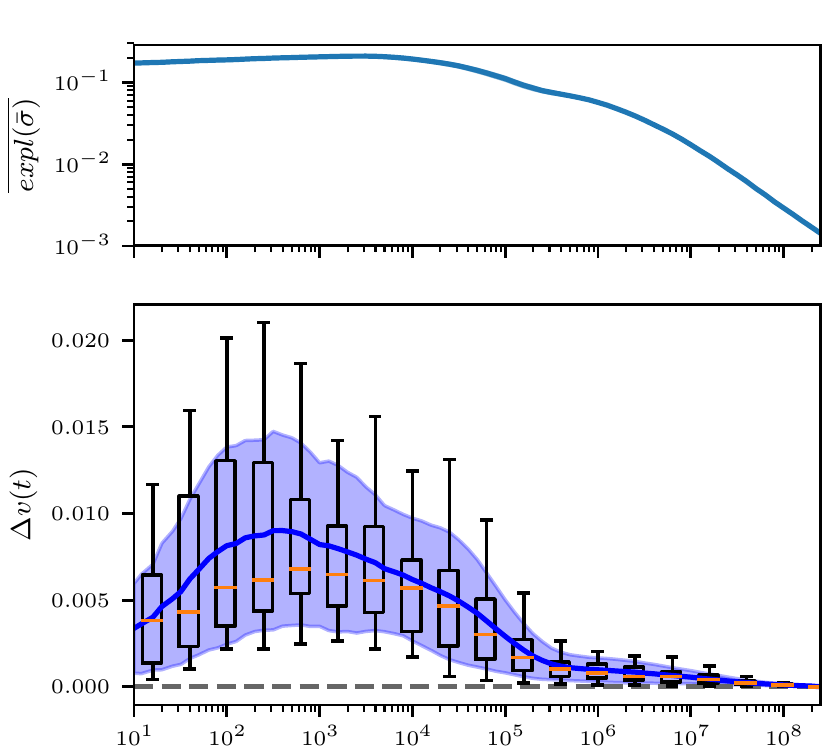} \\
        \texttt{II-GS(13)} & \texttt{LD(2,2,6)} & \texttt{GP(4,6,4,4)} \\
        \includegraphics[width=0.27\linewidth]{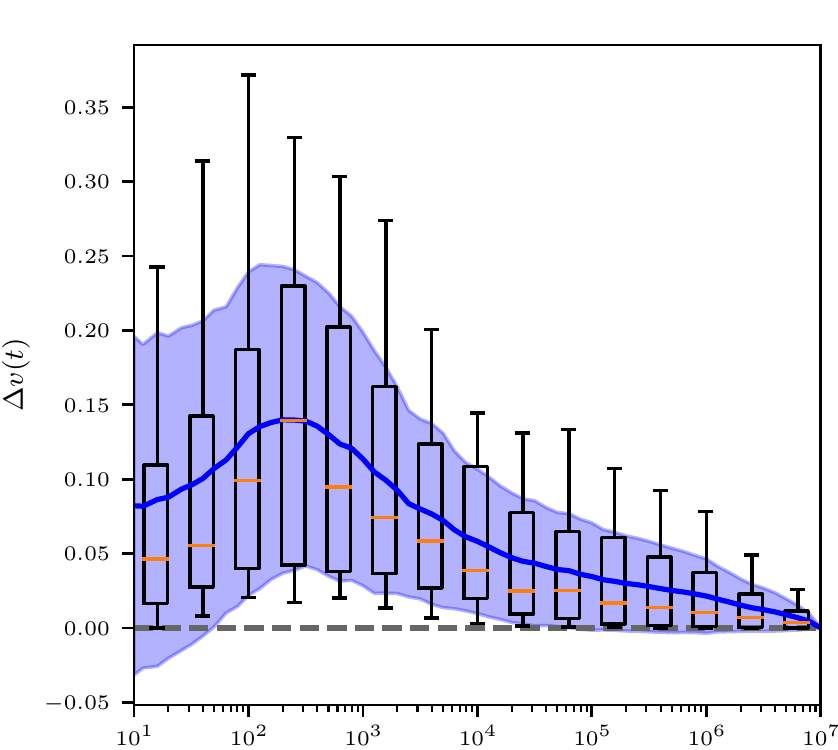} &
        \includegraphics[width=0.27\linewidth]{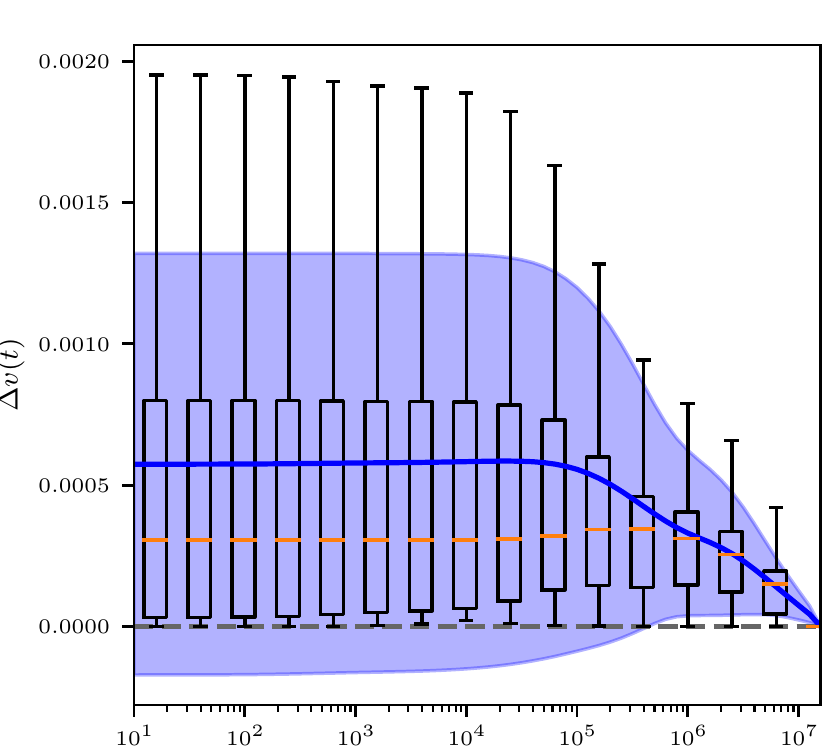} &
        \includegraphics[width=0.27\linewidth]{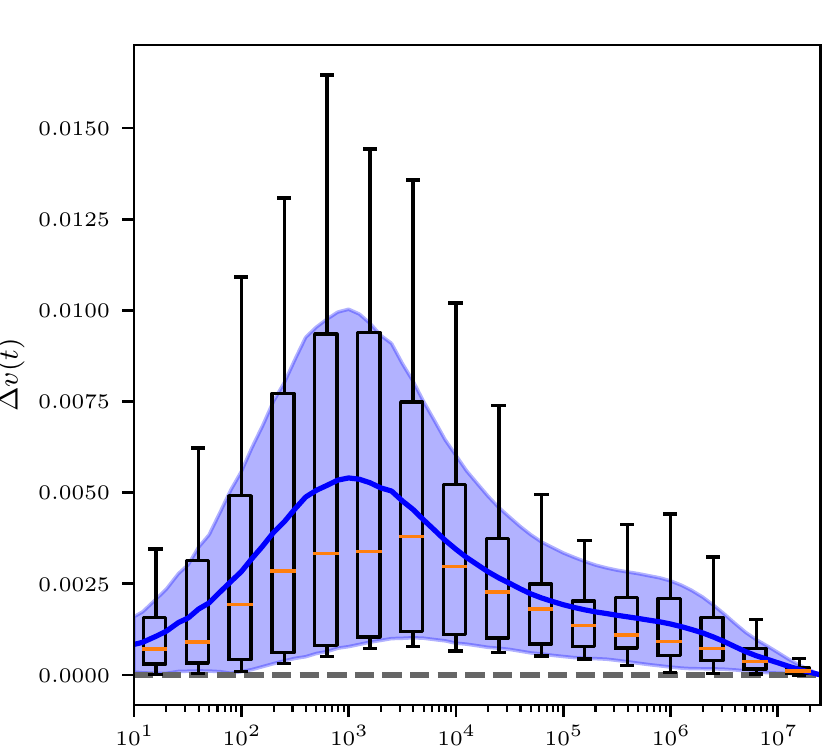} \\
    \end{tabular}
    \caption{Comparison of counterfactual values for different domains by tracking how absolute differences $\Delta_t(J) = | \tilde v_2^t(J) - \tilde v_2^T(J) | $ change over time.}
\end{table*}

\begin{table*}[p]
    \begin{tabular}{ccc}
        \texttt{II-GS(5)} & \texttt{LD(1,1,6)} & \texttt{GP(3,3,2,2)} \\
        \includegraphics[width=0.27\linewidth]{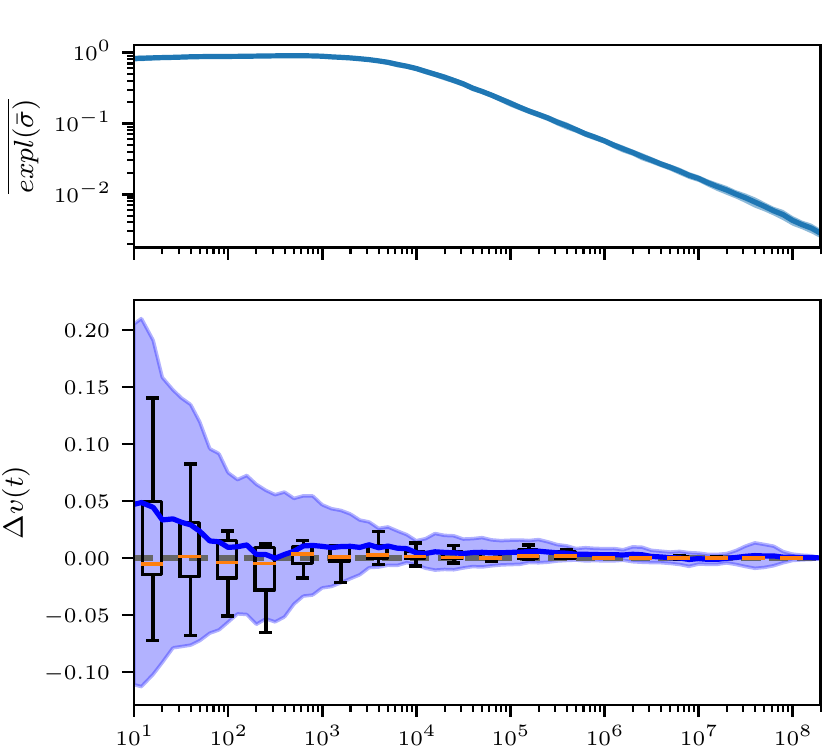} &
        \includegraphics[width=0.27\linewidth]{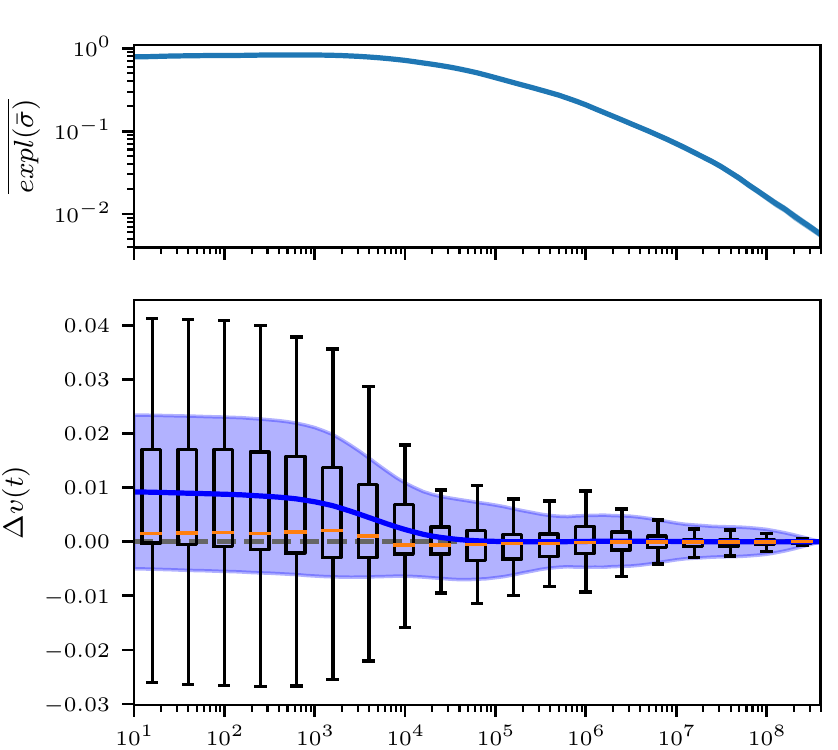} &
        \includegraphics[width=0.27\linewidth]{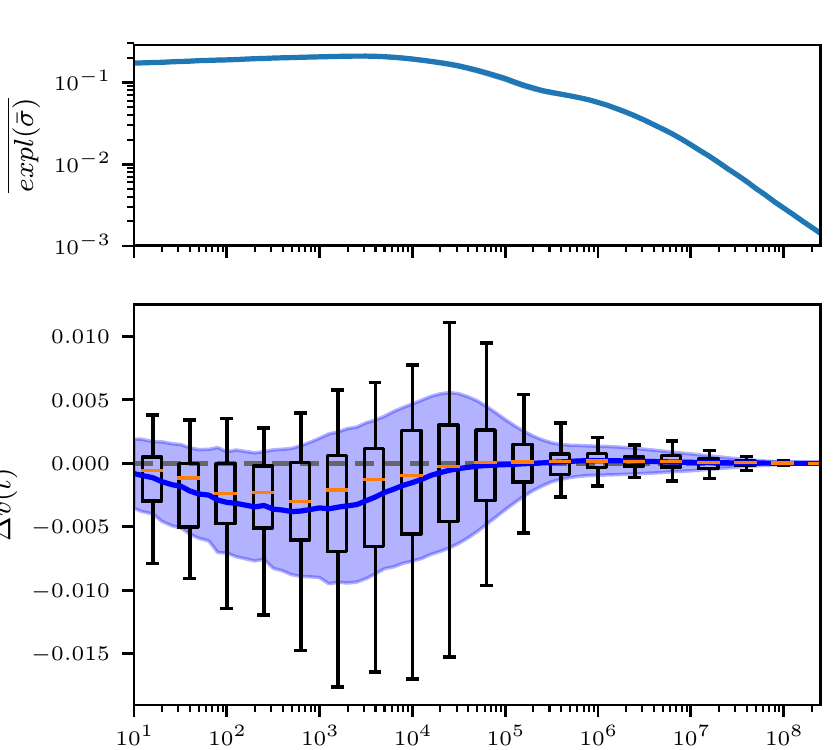}  \\
        \texttt{II-GS(13)} & \texttt{LD(2,2,6)} & \texttt{GP(4,6,4,4)} \\
        \includegraphics[width=0.27\linewidth]{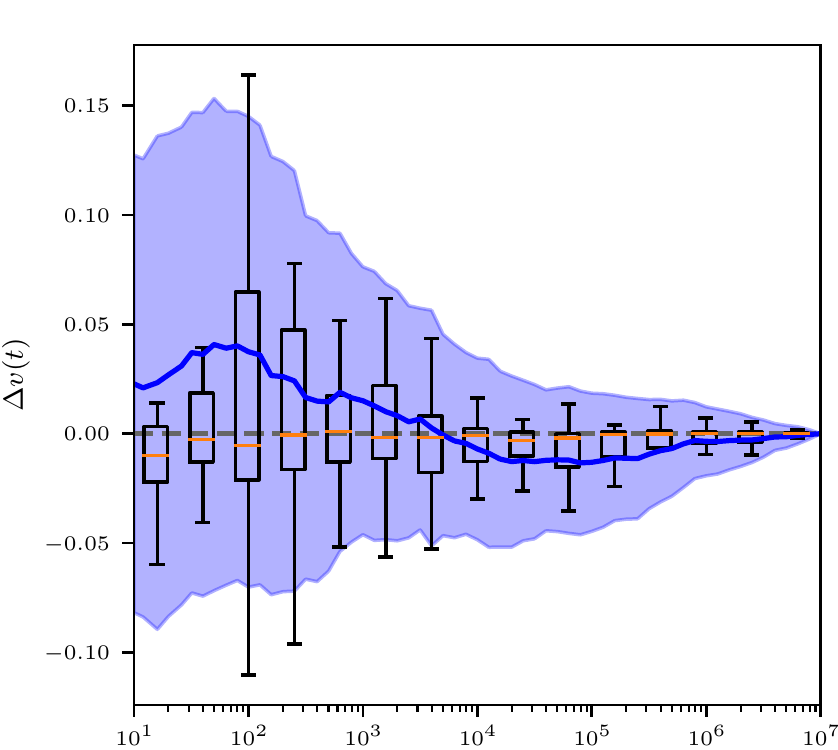} &
        \includegraphics[width=0.27\linewidth]{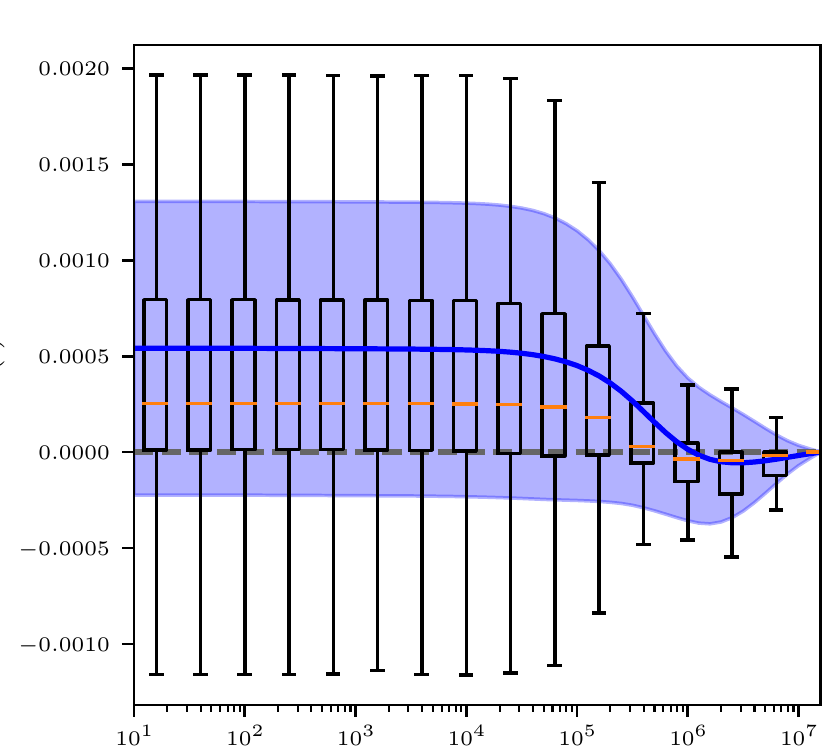} &
        \includegraphics[width=0.27\linewidth]{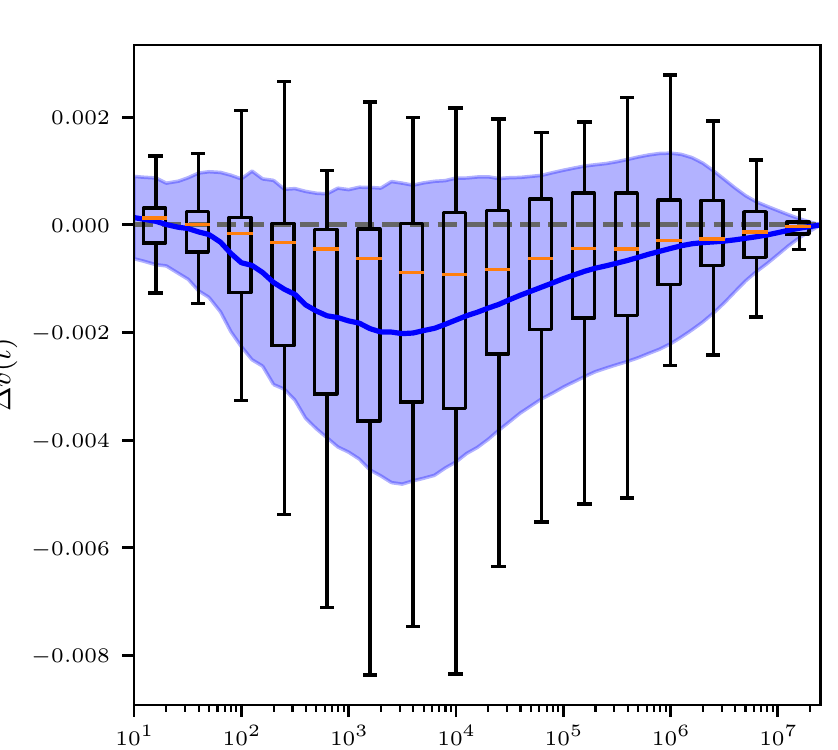}  \\
    \end{tabular}
    \caption{Comparison of counterfactual values for different domains by tracking how relative differences $\Delta_t(J) = \tilde v_2^t(J) - \tilde v_2^T(J) $ change over time.}
\end{table*}

\begin{table*}[p]
    \begin{tabular}{ccc}
        \texttt{II-GS(5)} & \texttt{LD(1,1,6)} & \texttt{GP(3,3,2,2)} \\
        \includegraphics[width=0.27\linewidth]{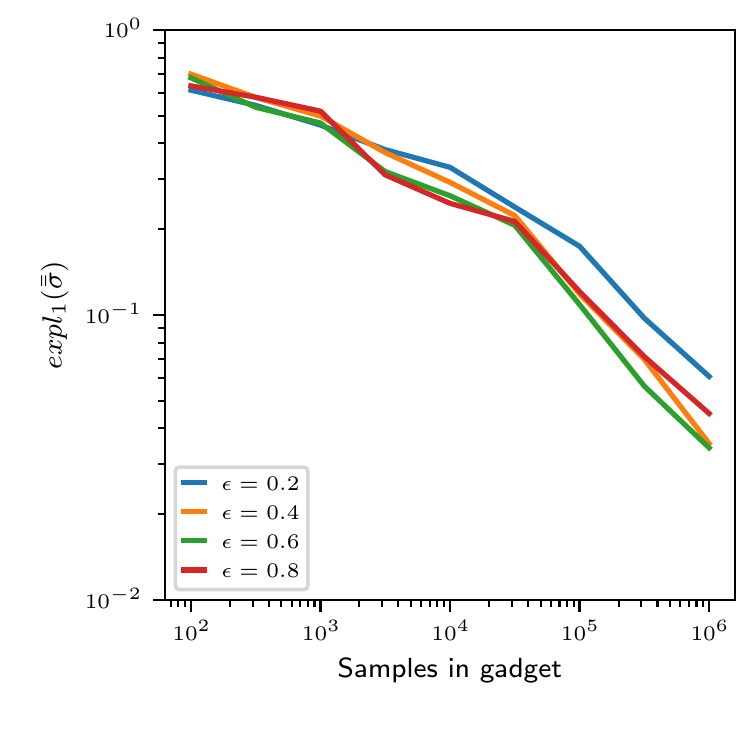} &
        \includegraphics[width=0.27\linewidth]{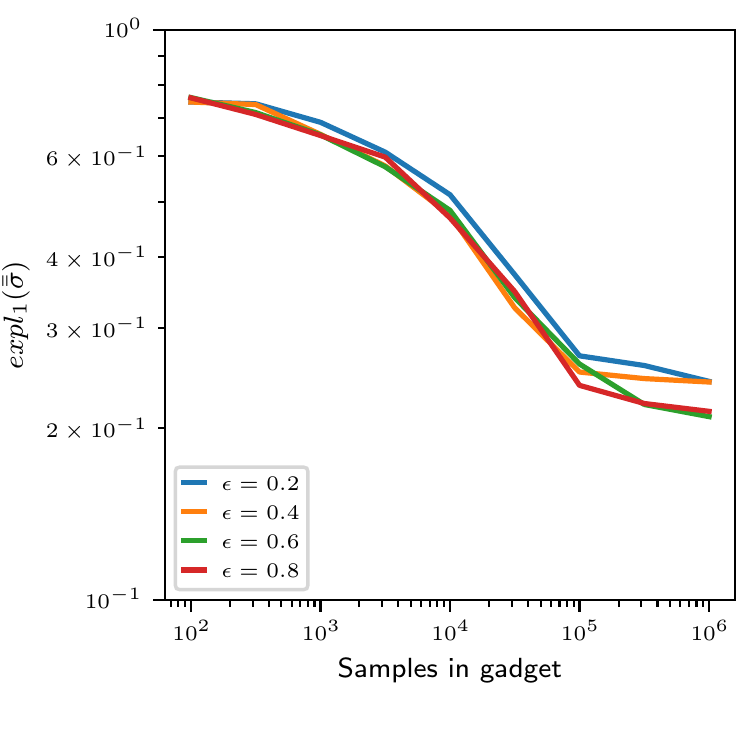} &
        \includegraphics[width=0.27\linewidth]{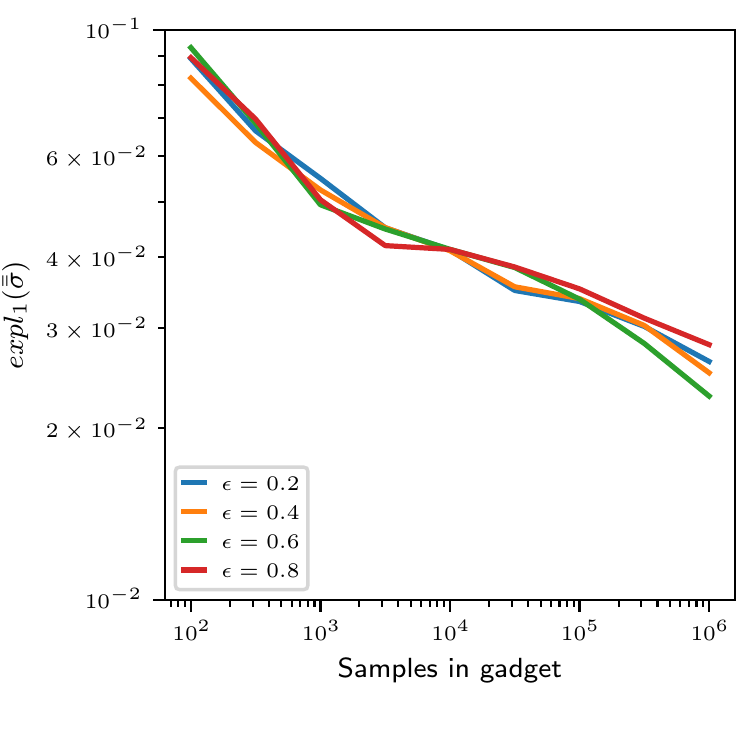}  \\
        \includegraphics[width=0.27\linewidth]{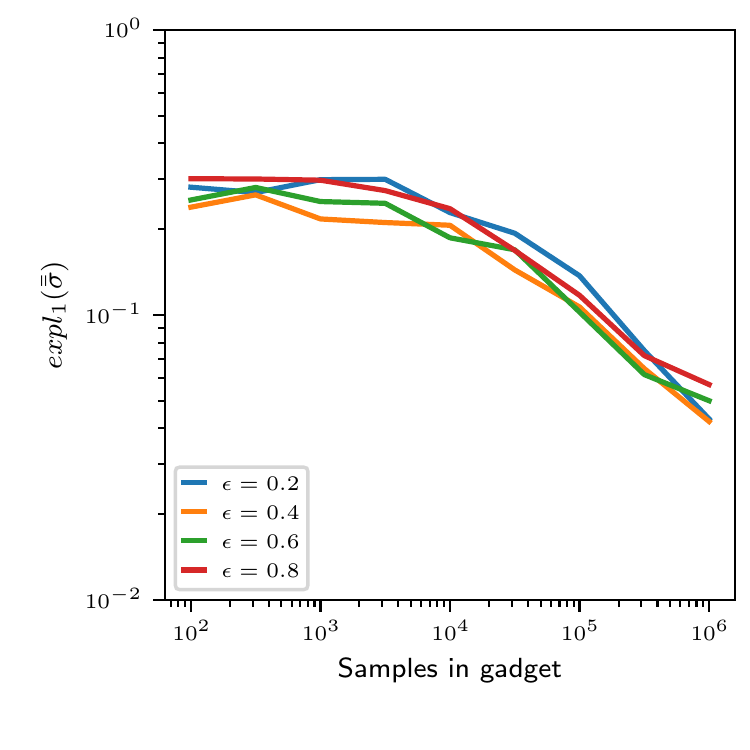} &
        \includegraphics[width=0.27\linewidth]{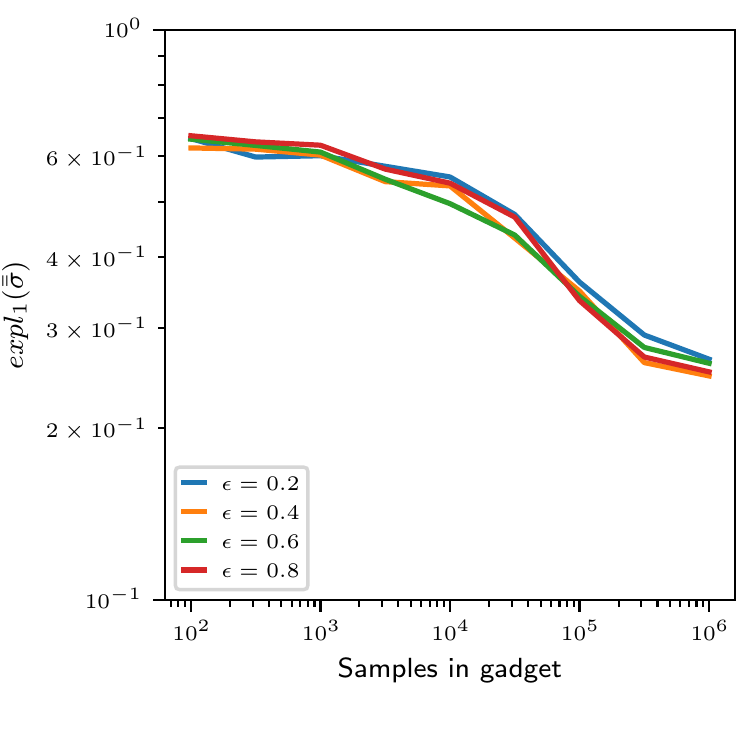} &
        \includegraphics[width=0.27\linewidth]{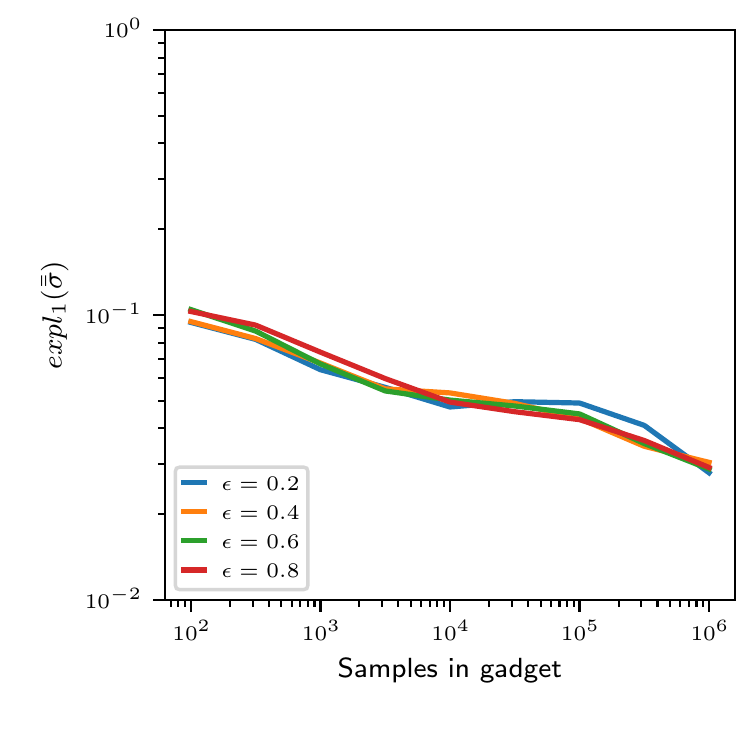}
    \end{tabular}
    \caption{Sensitivity to exploration parameter. Top row is "reset" variant, bottom row is "keep" variant. }
\end{table*}

\end{document}